\pdfoutput=1
\documentclass[11pt,a4paper]{article}

\usepackage[T1]{fontenc}
\usepackage[utf8]{inputenc}
\usepackage{lmodern}
\usepackage[USenglish]{babel}
\usepackage[margin=1in]{geometry}
\usepackage{amsmath}
\usepackage{amsthm}
\usepackage{graphicx}
\usepackage{xcolor}
\usepackage{authblk}
\usepackage[hidelinks]{hyperref}

\newcommand{\Z}{\mathbb{Z}}

\usepackage{xspace}
\usepackage{cleveref}
\usepackage{todonotes}
\usepackage{thmtools}
\usepackage{booktabs}

\declaretheorem[name=Theorem,style=plain]{theorem}
\declaretheorem[name=Lemma,style=plain,sibling=theorem]{lemma}
\declaretheorem[name=Corollary,style=plain,sibling=theorem]{corollary}

\declaretheorem[name=Example,style=definition,sibling=theorem]{example}
\declaretheorem[name=Remark,style=remark,sibling=theorem]{remark}

\usepackage{tabularx}
\usepackage{amssymb}
\usepackage{tikz}
\usetikzlibrary{arrows.meta}

\newcommand\parent{\textsf{parent}}

\newcommand\view[2]{\textsf{view}^{#1}(#2)}
\newcommand\alg[1]{A(#1)}
\newcommand\hier[1]{#1} 
\newcommand\modelo{\mathsf{M}}

\newcommand\frag[2]{\textsf{Fragile}_{#1}(#2)}
\newcommand\size{\textsc{size}}

\newcommand\leader{\textsc{leader}}

\newcommand\spann{\textsc{SetTree}}

\newcommand\alos{\textsc{alos}}
\newcommand\amos{\textsc{amos}}
\newcommand\xor{\textsf{XOR}}

\newcommand\seth{\textsf{SET}\xspace}
\newcommand\multiseth{\textsf{MULTISET}\xspace}

\newcommand\portsh{\textsf{PORTS}\xspace}

\newcommand\lcount[1]{\textsf{Count}(#1)}
\newcommand{\cardmod}[3]{\textsf{Card}^{#1}_{#2}(#3)}
\newcommand\quantif{\mathcal{Q}}
\newcommand\lbl[1]{\mathbb{B}_{}^V}
\newcommand{\dist}{\mathsf{dist}}
\newcommand\portspann{\textsc{PortTree}}

\newcommand{\Accept}{\mathsf{Accept}}
\newcommand{\Type}{\mathsf{Type}}
\newcommand{\SetView}{\mathsf{SetView}}

\newcommand{\Part}{\mathsf{Part}}
\newcommand{\Cert}{\mathsf{Cert}}
\newcommand{\sethr}[2]{\mathsf{SET}^{\,#1}_{#2}}
\title{On the Power of Alternating Local Certification with Constant-Size Certificates} 

\author[1]{Pedro Montealegre}
\author[1]{Mart\'{\i}n R\'{\i}os-Wilson}
\author[2]{Jukka Suomela}
\author[3]{Guillaume Theyssier}
\affil[1]{Facultad de Ingenier\'{\i}a y Ciencias, Universidad Adolfo Ib\'a\~nez}
\affil[2]{Aalto University}
\affil[3]{I2M, CNRS, Marseille, France}
\date{}

\begin{document}

\maketitle

\begin{abstract}
We study a distributed decision model based on the interaction
between a prover, a disprover, and a distributed verifier.
During a constant number of alternating rounds, the prover
and the disprover assign certificates to the nodes, responding
to the preceding assignments. The verifier is a local
distributed algorithm in which each node decides whether to
accept from its own input and the certificates observed in its
neighborhood. The instance is accepted if all nodes accept.

Alternation hierarchies for distributed decision have been
studied under different restrictions, with previous works
exploring, for example, the impact of local computation
(Reiter, PODC 2024) or logarithmic-size certificates
(Feuilloley et al., ICALP 2016).

In this paper, we study a restricted setting where certificates
have constant size and the network is 
anonymous. Moreover, we consider three models limiting the
information available to the verifier. In the first model,
called the \seth hierarchy, nodes observe only the \emph{set}
of certificate sequences assigned to their neighbors. In the
second, strictly stronger model, called the \multiseth
hierarchy, nodes observe the corresponding \emph{multiset}.
In the third model, called the \portsh hierarchy, nodes can
additionally distinguish neighboring certificate sequences
by local port numbers. We prove that \portsh is strictly
stronger than \multiseth, and that \multiseth is strictly stronger than \seth.

Despite these severe restrictions, the resulting hierarchies remain surprisingly expressive. In particular, leader detection belongs to the third level of the \seth hierarchy, together with several natural graph properties previously known to require logarithmic-size certificates or unique identifiers.

Our main results are the following.
\begin{itemize}
    \item We characterize the \seth hierarchy: over finite
    input alphabets, a distributed language belongs to
    \seth if and only if it is definable in monadic
    second-order logic \(\mathrm{MSO}_1\).

    \item Every language definable in counting monadic
    second-order logic \(\mathrm{CMSO}_1\) belongs to
    the \portsh hierarchy. We also exhibit a language
    in \multiseth that is not \(\mathrm{CMSO}_1\)-definable,
    while the inclusion of \(\mathrm{CMSO}_1\) in
    \multiseth remains open.

    \item Each of the \seth, \multiseth, and \portsh
    hierarchies is strict at every level.

    \item We develop lower-bound techniques showing that
    neither Hamiltonicity nor the existence of a perfect
    matching belongs to \portsh. Since both properties
    are \(\mathrm{MSO}_2\)-definable, the containment of
    \(\mathrm{MSO}_1\) in \portsh does not extend to
    \(\mathrm{MSO}_2\).

    \item For unbounded input alphabets, verifying that
    every node is given the correct network size lies
    outside the \portsh hierarchy.
\end{itemize}

These results motivate further study of weak models of
alternating local certification, with open questions
about how local information and alternation determine
their expressive power.

\medskip

\noindent \textbf{Use of AI tools.} The authors used ChatGPT (OpenAI) to assist with the
writing and editing of this paper and in finding
a proof of Theorem~7. The authors take full
responsibility for the correctness of all results
and proofs.

\end{abstract}

\clearpage
\setcounter{tocdepth}{2} 
\tableofcontents
\thispagestyle{empty}

\clearpage
\setcounter{page}{1}

\section{Introduction}
\label{sec:intro}

In distributed local certification, the nodes of a network aim to collectively verify whether the network satisfies a given property, \emph{with the help of a prover}. The prover is an untrusted oracle that tries to convince the nodes
that the network satisfies the property.

Several models of local certification have been studied. The first one was introduced by Korman, Kutten, and Peleg~\cite{korman2005proof} under the name \emph{proof-labeling schemes} (PLS). A PLS protocol proceeds in two stages. In the first stage, the prover assigns a certificate to each node. In the second stage, called the verification stage, every node inspects the certificates in
its local neighborhood and decides whether to accept or reject. If the instance is legal then there must exist a certificate assignment that makes all nodes accept. Conversely, if the instance is not legal, then
for every certificate assignment, at least one node must reject. The efficiency of a protocol is measured by the size of the certificates.

A canonical example is the verification of bipartiteness. This property can be verified by assigning to each node a single bit, interpreted as its color in a purported proper 2-coloring. A node accepts if and only if all its neighbors receive a color different from its own. Thus, all nodes accept if and only if the input graph is bipartite. This should be contrasted with the situation without certificates: no constant-round local algorithm can decide bipartiteness, since detecting an odd cycle may require exploring arbitrarily large portions of the graph; equivalently, long even and odd cycles are locally
indistinguishable for nodes whose view is restricted to a constant radius
(see, e.g.,~\cite{feuilloley2021introduction}).

Many graph properties admit efficient verification in the PLS model with logarithmic-sized certificates. Examples include planarity~\cite{feuilloley2021compact}, bounded
genus~\cite{feuilloley2023local}, graph classes defined by intersections of geometric objects~\cite{jauregui2023local}, \(H\)-minor-freeness~\cite{bousquet2024local}, model checking on bounded treewidth~\cite{cook2025tight}, etc. In all
these cases, certificates make it possible to verify properties that cannot be recognized by constant-round local algorithms without the help of a prover. However, for most known local certification schemes, constant-size certificates are not sufficient; logarithmic or polylogarithmic certificates are typically used, and are often necessary.

Let us illustrate why finding properties that can be verified with
constant-size certificates is difficult. Consider the property ''\emph{there exists a node of degree 3}''. Although this property is global, it can be verified in the PLS model by certifying a spanning tree rooted at a node of degree \(3\). The tree is encoded by giving each node, except the root, the identifier of its parent and its distance to the root, assuming that the input graph is connected. Each node
checks that its parent's distance is exactly one less than its own, while the root checks that its distance is \(0\) and that it has degree \(3\). Hence, by using a spanning tree, one can easily certify non-local properties. However, Korman, Kutten, and Peleg~\cite{korman2005proof} showed that certifying such a spanning tree requires certificates of size \(\Omega(\log n)\). Moreover, this encoding relies on the existence of unique identifiers. In fact, in anonymous
networks, even verifying whether a marked set of edges forms a spanning tree is impossible with the sole help of a prover~\cite{korman2005proof}.

Several variants of the local certification model have been considered. For example, one may allow nodes to inspect certificates at some fixed constant radius, or to access not only the certificates but also the input states of their neighbors~\cite{goos2011locally}. Even in such strengthened variants, the certification of a spanning tree appears to require unique identifiers and certificates of logarithmic size.

\subsection{Alternation hierarchies}

A natural way to generalize the certification framework is to allow the nodes to interact with more than one global entity. If all these entities have the same goal, then this does not really change the model: their certificates can simply
be concatenated and viewed as a single certificate provided by only one entity. The interesting case arises when the entities have opposite objectives.

This leads to an alternating model of local certification. Besides the prover, whose goal is to convince the nodes that the input instance is legal, a new player called the \emph{disprover} is introduced. The goal of the disprover is to convince the nodes in the graph that the instance
is illegal. The prover and the disprover take turns, which we call
\emph{interaction rounds}. In each interaction round, one of the two players assigns a label to every node in the network: labels assigned by the prover are called \emph{certificates}, while labels assigned by the disprover are called \emph{challenges}. After these interaction rounds, the verifier remains local: each node inspects only a constant-radius neighborhood, together with the labels provided by the players, and decides whether to accept or reject. As usual, the whole network accepts if and only if all nodes accept.

The number of interaction rounds, together with the order in which the prover
and the disprover provide their labels, induces a hierarchy of models. For
instance, the usual PLS model corresponds to a single \emph{existential} interaction
round: there exists a certificate assignment that makes all nodes accept on
legal instances, while no certificate assignment can make all nodes accept on
illegal instances.

In general, the semantics of this hierarchy is the following: each action of the disprover should be understood as a
\emph{universal} choice, while each action of the prover should be understood as an
\emph{existential} choice. Thus, when the disprover provides a challenge, the protocol
must be correct for every possible challenge; when the prover provides a
certificate, it is enough that there exists a certificate that successfully
answers the previous challenges. In this way, an alternating protocol naturally
gives rise to an alternating sequence of universal and existential choices,
while the final verification step remains purely local.

Let us illustrate this with the property of being non-3-colorable. The usual
3-colorability property has a simple local certificate: the prover assigns one
of three colors to each node, and each node checks that all its neighbors have
a different color. The complement property, namely non-3-colorability, is
naturally expressed with one interaction of the disprover followed by one
interaction of the prover. The disprover first provides a challenge, consisting of an alleged
3-coloring of the graph. If the graph is indeed not 3-colorable, then every
such challenge must fail: there is at least one edge whose endpoints receive
the same color. The prover can then answer the challenge by certifying the
existence of such an invalid edge. To make this witness visible to all nodes,
the prover provides a spanning tree rooted at one endpoint of a monochromatic
edge. The root checks that one of its neighbors has the same color in the
challenge, while the remaining nodes verify the consistency of the spanning
tree, as in a standard PLS certificate. Thus, for every coloring proposed by
the disprover, there exists a certificate that makes all nodes accept. On the other hand, if the graph is 3-colorable, then the disprover can choose
a proper 3-coloring as its challenge. In that case there is no monochromatic edge,
and therefore no spanning-tree certificate rooted at a local defect can make
all nodes accept. 

Several alternation hierarchies for local decision and local certification
have been studied before, under different assumptions on 
certificate size, computational power, and the information available to each
node. In the logarithmic-size hierarchy
\cite{feuilloley2021hierarchy}, nodes have globally unique identifiers and the
players provide \(\mathcal{O}(\log n)\)-bit certificates. This model shows that
alternation can drastically reduce certificate size: for instance, minimum
spanning tree and nontrivial automorphism admit much more succinct protocols
than in one-round certification. However, the logarithmic hierarchy is not
known to be infinite, and it does not contain all computable distributed
languages.

Other hierarchies with global features have also been considered. The
identifier-independent hierarchy of~\cite{balliu2017can} collapses at the
second interaction and already contains all Turing-decidable distributed
languages. A polynomial-time variant was studied in
\cite{aldematshuva_et_al:LIPIcs.OPODIS.2023.27}. These models are very
powerful because certificates or local computations may still encode
information depending on global parameters of the input graph.

The closest point of comparison for our work is the strictly local line of
research. In~\cite{reiter2024local}, the local polynomial and local bounded
hierarchies are introduced in the LOCAL model with locally unique identifiers.
There, certificate sizes and local computation are bounded as functions of the
constant-radius view of each node. This yields a hierarchy that preserves a
strong notion of locality, admits logical characterizations in the spirit of
descriptive complexity, and is known to be infinite. Alternation had also been
studied earlier in finite-state anonymous models, where connections with
monadic second-order logic were established~\cite{Reiter_2015}.

Our work belongs to this strictly local side of the landscape, but focuses on
an even weaker regime: anonymous networks, constant-size labels, and weak
views of neighborhoods. Thus, our goal is not to revisit the previous models with some additional constraints such as smaller certificates, but to understand what
alternation can express when the only information exchanged in each interaction
has constant size and when nodes have only
set, multiset, or
port-numbering views of neighboring certificates. In fact, our work is not focused on the properties of a particular hierarchy of alternation but on understanding  how alternation can impact the power and expressiveness of the model when the communication protocol and the size of the certificates are very limited.

\subsection{Constant-size hierarchies in weak models}

The weak views that we consider are inspired by the weak models of distributed
computing studied in~\cite{hella2012weak}. These models classify anonymous
local computation according to what a node can observe from its neighbors.
In the \emph{port-numbering} view, each incident edge is assigned a distinct
local port number, allowing a node to distinguish the different occurrences
of neighboring labels. In contrast, in the \emph{multiset} view, port
information is lost, but multiplicities are preserved: a node can see how
many neighbors carry each label. In the \emph{set} view, even multiplicities
are lost: a node only sees which labels appear among its neighbors.

The connection with logic provides additional motivation for this distinction.
The multiset and set models were characterized by modal logic
in~\cite{hella2012weak}: roughly speaking, the set view corresponds to
ordinary modal logic, which can express the existence of a neighbor of a
given type, while the multiset view corresponds to graded modal logic,
which can also express counting statements about neighbors. Thus, losing
or preserving multiplicities is not a technical detail; it changes the
expressive power of the local model.

Unless stated otherwise, we consider finite connected
undirected graphs. A network configuration is a pair
\((G,x)\), where \(x:V(G)\to\{0,1\}^*\) assigns an input
to each node. A distributed language is a collection
of such configurations.

We combine these weak views with alternating local certification. This
gives rise to three constant-size hierarchies, depending on whether nodes
observe their neighborhoods as sets, multisets of neighboring label
sequences, or pairs consisting of a local port number and a neighboring
label sequence. We call them the \emph{\seth hierarchy}, the
\emph{\multiseth hierarchy}, and the \emph{\portsh hierarchy},
respectively. These hierarchies are defined in anonymous networks and
with constant-size labels.

The protocol has two phases. First, there is a bounded interaction
between the prover and the disprover: in each interaction round, one of
the two players assigns a constant-size label to every node. Then, once
the interaction is over, the nodes run a local verifier. The verifier is
local in a weak sense: a node can use its own input and labels, but its
information about the neighborhood is limited to the \emph{set} of
neighboring label sequences in the \seth model, the \emph{multiset} of
neighboring label sequences in the \multiseth model, or the set of pairs
consisting of a local port number and a neighboring label sequence in
the \portsh model.

At first sight, these restrictions may seem severe. The players can
provide only constant-size information, the nodes are anonymous, and
the verifier sees only a restricted view of each neighborhood.
Nevertheless, the resulting hierarchies are surprisingly expressive.
They capture many natural properties and exhibit clean connections
with logic. At the same time, the weakness of the local views makes
the models fine-grained enough to separate \seth from \multiseth and
\multiseth from \portsh, to study the role of multiplicities and port
information, and to identify natural properties that remain outside
the hierarchies.

Let us give a rough definition of our models in order to state our
results (see \Cref{sec:def} for more precise definitions).
Given a graph $G=(V,E)$ and a map $x:V\to\{0,1\}^*$, we call the
map $x$ an \emph{input} for $G$, and we call the pair $(G,x)$ a
\emph{network configuration}. A distributed language is a collection
of network configurations.

A \emph{port numbering} of $G$ is a family
$\pi=(\pi_v)_{v\in V}$, where each
$\pi_v:E(v)\to\{1,\ldots,\deg(v)\}$
is a bijection and $E(v)$ denotes the set of edges incident to $v$.
Port numbers are local: the numbers assigned to an edge at its two
endpoints need not coincide. We denote by $\mathcal{P}(G)$ the set of
all valid port numberings of $G$.

We consider three verification models, denoted by
$\modelo\in\{\seth,\multiseth,\portsh\}$, which differ in the
information that a node can access about the certificates assigned
to its neighbors.

For each $k\geq 0$ and each model
$\modelo\in\{\seth,\multiseth,\portsh\}$, we define the
\emph{$k$-th level of the $\modelo$-hierarchy}, denoted
$\hier{\modelo}_k$, as the family of distributed languages $L$
for which there exist a positive integer $c$ and an anonymous
one-round distributed algorithm $A$ such that, for every network
configuration $(G,x)$ and every port numbering
$\pi\in\mathcal{P}(G)$,
\[
\begin{aligned}
(G,x)\in L
\iff{}&
\quantif_1\ell_1,\quantif_2\ell_2,\ldots,
\quantif_k\ell_k
\text{ such that}\\
&\forall v\in V(G),\quad
\alg{
  v, x(v), \ell_1(v), \ldots, \ell_k(v),
  \view{\modelo}{v}
}
= \textsc{accept}.
\end{aligned}
\]
Here, the quantifiers $\quantif_1,\ldots,\quantif_k$ follow
the alternating pattern defining level $k$, and each
$\ell_i:V\to\{0,1\}^*$ satisfies
$|\ell_i(v)|\leq c$ for every $v\in V(G)$.

For each node $u$, let
\(
\overline{\ell}(u)
=
\bigl(\ell_1(u),\ell_2(u),\ldots,\ell_k(u)\bigr)
\)
denote its sequence of certificates over the $k$
interaction rounds. The neighborhood views are defined by
\[
\begin{aligned}
\view{\seth}{v}
&=
\bigl\{
\overline{\ell}(u):u\in N(v)
\bigr\},
\\[2mm]
\view{\multiseth}{v}
&=
\bigl\{\!\bigl\{
\overline{\ell}(u):u\in N(v)
\bigr\}\!\bigr\},
\\[2mm]
\view{\portsh}{v}
&=
\bigl\{
(\pi_v(uv),\overline{\ell}(u)):
u\in N(v)
\bigr\}.
\end{aligned}
\]
Thus, the \seth model discards multiplicities, the
\multiseth model preserves them, and the \portsh model
additionally identifies the local port through which each
certificate sequence is received. In the \portsh model,
the port numbering is fixed before the certificate
assignments are chosen, and the protocol must be correct
for every valid port numbering. The views in the \seth
and \multiseth models are independent of the port numbering.

Finally, we define the \emph{set hierarchy}, the
\emph{multiset hierarchy}, and the \emph{port hierarchy} as
\[
\hier{\seth}
=
\bigcup_{k\geq 0}\hier{\seth}_k,
\qquad
\hier{\multiseth}
=
\bigcup_{k\geq 0}\hier{\multiseth}_k,
\qquad
\hier{\portsh}
=
\bigcup_{k\geq 0}\hier{\portsh}_k.
\]

\subsection{Our results}

We obtain several results on the expressive power of the \seth,
\multiseth, and \portsh hierarchies.
\Cref{fig:hierarchy-inclusions} summarizes their relationships with
MSO and CMSO over finite input alphabets. In particular, increasing
the information available about neighboring certificates yields
strictly more expressive verification models:
\[
    \mathrm{MSO}
    = \hier{\seth}
    \subsetneq \hier{\multiseth}
    \subsetneq \hier{\portsh}.
\]
CMSO also lies strictly between \hier{\seth} and \hier{\portsh}.
Furthermore, we exhibit a language in \hier{\multiseth} that is not
CMSO-definable, while a separation in the reverse direction has not
been established.

\begin{figure}[t]
\centering
\begin{tikzpicture}[
    class/.style={
        inner sep=4pt,
        font=\normalsize
    },
    inclusion/.style={
        -{Latex[length=2mm]},
        line width=0.65pt
    },
    relation/.style={
        font=\small,
        inner sep=2pt
    },
    unresolved/.style={
        densely dashed,
        line width=0.55pt
    }
]
    \node[class] (set) at (0,0)
        {\(\mathsf{SET}=\mathrm{MSO}\)};

    \node[class] (cmso) at (3.8,1.3)
        {\(\mathrm{CMSO}\)};

    \node[class] (multi) at (3.8,-1.3)
        {\(\mathsf{MULTISET}\)};

    \node[class] (ports) at (7.6,0)
        {\(\mathsf{PORTS}\)};

    \draw[inclusion] (set.north east)
        -- node[relation, midway, above=2pt]
            {\(\subsetneq\)}
        (cmso.west);

    \draw[inclusion] (set.south east)
        -- node[relation, midway, below=2pt]
            {\(\subsetneq\)}
        (multi.west);

    \draw[inclusion] (cmso.east)
        -- node[relation, midway, above=2pt]
            {\(\subsetneq\)}
        ([yshift=5mm]ports.west);

    \draw[inclusion] (multi.east)
        -- node[relation, midway, below=2pt]
            {\(\subsetneq\)}
        ([yshift=-5mm]ports.west);

    \draw[unresolved] (cmso.south)
        -- node[
            midway,
            fill=white,
            inner sep=4pt,
            font=\normalsize
        ] {\(?\)}
        (multi.north);

\end{tikzpicture}
\caption{Inclusions among the verification hierarchies and logics
over finite input alphabets. All solid arrows denote strict inclusion.
The dashed line marks the remaining uncertainty between
\(\mathrm{CMSO}\) and \(\mathsf{MULTISET}\):
we know that
\(\mathsf{MULTISET}\nsubseteq\mathrm{CMSO}\),
whereas the reverse non-inclusion
\(\mathrm{CMSO}\nsubseteq\mathsf{MULTISET}\)
has not been established.}
\label{fig:hierarchy-inclusions}
\end{figure}

\Cref{tab:landscape} complements this overview with concrete languages
and their placement in the \seth and \multiseth hierarchies.

We first focus on the \seth hierarchy. As discussed above, one might expect
this model to be rather weak, since the verification process is highly
restricted. Nevertheless, we show that it is expressive enough to contain
several important languages.

One important distributed language that also plays a key role in this paper is $\leader.$  This language consists of instances $(G = (V,E),x)$ such that $x:V\to\{0,1\}$ and there is a unique node $v$ satisfying  $x(v) =1$. Formally, $$\leader = \{(G,x): x \in \{0,1\}^{V}, \exists! v\in V, x(v) = 1\}.$$ 
The language \leader belongs to the first level of the
logarithmic hierarchy, where \(\mathcal{O}(\log n)\)-bit
certificates suffice. This bound is optimal: every protocol
with a single prover assignment requires
\(\Omega(\log n)\)-bit certificates, even when the verifier
inspects a neighborhood of any fixed constant
radius~\cite{goos2011locally}.
We show that three interaction rounds suffice with
constant-size certificates in the \seth model.

\begin{restatable}{theorem}{thmleader}
\label{cor:leader}
\textnormal{\leader} is in the third level of the \seth hierarchy.
\end{restatable}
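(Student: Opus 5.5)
The plan is to write \leader as the conjunction of ``at least one marked node'' (\alos) and ``at most one marked node'' (\amos), and to realise both inside a single $\exists\forall\exists$ protocol with constant-size labels. Two features of the \seth view should suffice. First, a node can test whether some neighbour carries a label of a given form, which is all that a ``has a predecessor'' check needs. Second, on a connected graph, a set $X\subsetneq V$ that is nonempty always has a boundary edge, which some node can detect locally. The whole difficulty is that anonymous, constant-size labels cannot encode a spanning tree directly, as recalled in the introduction. The role of alternation is therefore to replace ``the pointer structure has no rootless part'' by a universally quantified challenge that the last prover answers locally.

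\textbf{First round (prover, $\exists$).} The prover assigns each node a layer value $d(v)\in\Z_3$, intended to be the BFS distance modulo $3$ from the unique marked node $r$. Each node checks two local conditions in the \seth view. Adjacent nodes must have layer values differing by at most $1$ modulo $3$. A marked node must have $d=0$, and every unmarked node must have some neighbour with value $d(v)-1$, which is a predecessor. On a yes-instance the true BFS layering passes. On a no-instance with zero marked nodes or with at least two, any accepted labelling is of one of two kinds. Either it contains a region in which every node has a predecessor but which reaches no marked node, which happens exactly when the layering ``winds'' nontrivially around a cycle. Or it contains two marked nodes.

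\textbf{Second round (disprover, $\forall$).} The disprover exposes such a defect by choosing a bit $X(v)$. In the intended use, $X$ is either (a) a set that contains no marked node and is closed under taking predecessors, certifying a descending chain that never reaches a root, or (b) a set containing one marked node and excluding another, separating two leaders. Nodes check locally that the challenge is well-formed, for instance predecessor-closure of $X$ via the \seth view.

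\textbf{Third round (prover, $\exists$).} The prover must refute every well-formed challenge with a second $\Z_3$-layering confined to, or relative to, $X$. For instance, it gives a layering of $X$ rooted at a node of $X$ lying on the boundary of $X$, or rooted at the marked node of $X$. The verifier then checks, as in the first round, that every non-root node of $X$ has a predecessor in this new layering, and that the root is a genuine witness: a boundary node, a marked node, or a node with $d=0$. On yes-instances, every predecessor-closed, rootless $X$ must be empty, so the prover can always answer. On no-instances, the disprover should be able to pick an $X$ that admits no such answer.

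The hard step is to choose the exact well-formedness conditions on $X$ and the exact third-round witness so that two properties hold simultaneously. The third prover must never be able to fake its answer with the same winding trick the first prover might use. At the same time, an honest $X$ must always be available to the disprover. The approach I would try first is to restrict $X$ so that the induced subgraph $G[X]$, together with the first-round layering, is acyclic in the descent direction. Then any second-round winding defect would have to use the boundary of $X$, where it can be detected. If this fails, the fallback is to use both challenge bits to also select a single ``cut edge'' orientation, so that a winding cycle is forced to cross the challenge boundary an odd number of times, which is a locally checkable parity violation.
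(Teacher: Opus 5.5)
Your proposal has a genuine gap, exactly at the step you flag as hard. As sketched, the third round certifies nothing the first round could not. A second $\Z_3$-layering of $X$ in which every non-root node has a predecessor is again a purely existential certificate. A local verifier can check properties of declared roots, but it cannot force a root to be declared, so the same winding trick applies.

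Concretely, take $C_{3n}$ with no marked node and $d(v_i)=i \bmod 3$, so every node has a unique predecessor. The only nonempty predecessor-closed set is $X=V$, and the third prover answers it by reusing $d$ with no root at all. For any other nonempty $X\neq V$, it roots each component of $G[X]$ at a boundary node. So unless ill-formed challenges cause rejection, this no-instance is accepted. If they do cause rejection, as your well-formedness checks suggest, the disprover wins on every yes-instance by playing an ill-formed $X$.

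Your other proposed steps fail as well. A nonempty predecessor-closed $X$ without marked nodes always contains a descending cycle. Requiring descent-acyclicity on $G[X]$ therefore forbids exactly the honest type (a) challenges, and it is not locally checkable anyway. Branch (b) imposes no constraint: if $X$ contains a marked node $u$ and excludes another, a layering of $X$ rooted at $u$ passes every check you list. The parity fallback is not specified enough to evaluate.

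The missing ideas are the ones the paper uses, starting from the same decomposition $\leader=\alos\cap\amos$ that you begin with.

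For \amos, the disprover assigns bits $d(v)$ and the prover answers with bits $a(v)$. These must agree with all neighbors, hence are constant by connectivity, and must equal $d(v)$ at marked nodes. Two marked nodes given different bits force rejection, which gives a $\forall\exists$ protocol.

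For \alos, the first-round coloring is kept, and its rootedness is certified using the fact that on digraphs of out-degree at most one the \xor map is surjective if and only if there is no directed cycle. The disprover picks a red set $R$ and a target $y$, and the prover answers with bits $z$ (plus auxiliary bits). Each red node $v$ checks $y(v)=z(v)+b \pmod 2$ if it sees a single red-successor type with $z$-value $b$, and checks $y(v)=z(v)$ otherwise. With a BFS coloring rooted at a marked node, the prover solves these equations in order of distance for every challenge. On a no-instance, the disprover takes $R$ to be a shortest successor cycle. Then each red node has exactly one red successor, so the checks are exactly the equations of a non-surjective \xor map, and the disprover picks $y$ outside its image. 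Note that no challenge is ever ``ill-formed'' in this protocol.

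Running the \amos protocol in rounds $2$ and $3$ of this $\exists\forall\exists$ protocol, via the intersection lemma, places \leader in $\hier{\seth}_3$.
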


For the proof of \Cref{cor:leader}, we develop a technique to certify the
existence of rooted spanning trees despite having no identifiers and only
constant local communication. This technique relies on two main ingredients:
coloring orientations, and acyclicity testing through the surjectivity of
finite-state dynamics of the XOR map. It is a key component in proving the
subsequent results of the paper, as it allows us to show that the hierarchy is
closed under intersection, union, and complement.

This closure phenomenon is perhaps surprising in such a restricted anonymous
setting. Indeed, it means that the \seth hierarchy is not merely a collection of
ad hoc locally certifiable properties, but a robust formalism with the full
Boolean structure. As a consequence,
we obtain that the \seth hierarchy defines exactly the languages definable in Monadic Second-Order Logic (MSO). 

\begin{restatable}{theorem}{thmSETMSO}
\label{thm:set-equals-mso}
    Over finite input alphabets, the \seth hierarchy defines exactly the
    MSO-definable distributed languages.
\end{restatable}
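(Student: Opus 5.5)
The proof has two directions: from MSO formulas to protocols, and from protocols to MSO formulas.

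\emph{From MSO to \seth.} We induct on the structure of an MSO formula in prenex-like normal form, translating set quantifiers into interaction rounds. Over finite input alphabets, a monadic second-order variable \(X\) is a one-bit label per node. An existential set quantifier becomes a prover round and a universal one becomes a disprover round. First-order quantifiers \(\exists x\) are encoded as set quantifiers over singletons, namely \(\exists X\,(\mathrm{singleton}(X)\wedge\cdots)\). Singleton-ness is exactly \leader on the labels, so by \Cref{cor:leader} it is in \(\seth_3\).

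The atomic formulas are \(x\in X\), \(x=y\), \(E(x,y)\) and the input predicates. Once first-order variables are singleton marks, each of these can be checked locally with set views: for instance, \(E(x,y)\) holds iff the node marked \(x\) sees a neighbor marked \(y\). The remaining work is closure of the \seth hierarchy under union, intersection, complement, and projection, that is, under adding a leading existential or universal round. The paper says this closure follows from the spanning-tree technique. To get it I would do the following.

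\begin{itemize}
\item \textbf{Complement.} Swap the quantifier roles and let the opposing player certify a rejecting node. This needs a rooted spanning tree whose root is a node where \(A\) rejects, which is built with the \leader/spanning-tree machinery.
\item \textbf{Intersection and union.} Run both protocols in parallel, padding the levels. For union, the prover additionally selects one protocol with a global bit whose consistency is checked locally.
\end{itemize}

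Since the levels only add up, the result stays at a finite level. Connectedness of the graphs is what makes these global certifications valid.

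\emph{From \seth to MSO.} Fix a protocol with constant \(c\) and \(k\) rounds. Each label \(\ell_i\) takes values in a finite set, so it can be encoded by a constant number of monadic set variables, and the quantifier pattern becomes alternating MSO set quantifiers. The verifier \(A\) is a finite function of three things: the input (finite alphabet), the node's own labels, and the set of neighboring label sequences, which is a subset of a finite set. So "\(A\) accepts at \(v\)" is a finite Boolean combination of formulas of the form "\(v\) has label tuple \(t\)" and "some neighbor of \(v\) has label tuple \(t'\)", and both are first-order expressible. Wrapping this in \(\forall v\) gives an MSO sentence equivalent to membership. Anonymity and independence from the port numbering make this translation exact.

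The main obstacle is closure under complement and under extra quantifiers inside the MSO-to-\seth direction. It requires certifying, in an anonymous constant-label set-view model, that some node rejects. I would reduce it to \Cref{cor:leader} plus the rooted spanning-tree certification on which that theorem rests, and then track the level blow-up to make sure it stays bounded for each fixed formula.
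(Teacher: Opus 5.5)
Your proposal is correct and follows essentially the same route as the paper in both directions. For MSO~$\subseteq$~\seth you use singleton labelings guarded by \leader, local checks for atoms, and closure under intersection, union (a global selector bit) and complement (certifying a rejecting node with the rooted-coloring protocol of \alos, as in \Cref{cor:comp}); for the converse you encode certificate rounds by monadic set variables and express the finite acceptance table through own-type and neighbor-set-view formulas. Two details should be stated explicitly: the translated set quantifiers must range only over valid encodings of labelings (the paper relativizes them with the partition formulas $\Part_i$, using $\wedge$ under $\exists$ and $\Rightarrow$ under $\forall$, or you can use a surjective binary encoding), and $\forall x$ is simulated with the guard $\neg\operatorname{Sing}(\ell_x)\vee\widehat{\psi}$, which relies on complement closure.
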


In other words, over finite input alphabets, the expressive power obtained by
constant-size alternating certificates in the anonymous SET model coincides
exactly with MSO definability. Thus, every MSO-definable distributed language
admits a certificate system at some finite level of the \seth hierarchy, and
conversely every language recognized by the hierarchy has an MSO definition.

A direct consequence of \Cref{thm:set-equals-mso} is that
(non-)\(k\)-colorability for each fixed \(k\), planarity, and
\(H\)-minor-freeness for each fixed graph \(H\) belong to
the \seth hierarchy.

We next consider counting predicates. For every fixed modulus
\(p\geq1\) and residue \(0\leq q<p\), we show that the property
of having \(q\) selected vertices modulo \(p\) belongs to the
\portsh hierarchy. The construction uses local port numbers
to select a parent at each non-root vertex and verify modular
sums over the resulting tree. This yields the following
extension of our MSO containment.

\begin{restatable}{theorem}{thmCMSO}\label{th:CMSO}
Every CMSO-definable language belongs to the \portsh hierarchy.
\end{restatable}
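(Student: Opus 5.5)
We will reduce the statement to \Cref{thm:set-equals-mso} together with a single new ingredient: modular counting over a spanning tree. Since the \seth view is computable from the \portsh view, every \seth protocol is a \portsh protocol, so \Cref{thm:set-equals-mso} places every MSO-definable language in \portsh. The closure of the hierarchy under intersection, union and complement (proved for \seth via the rooted-spanning-tree technique) should transfer verbatim to \portsh, because those constructions only use the set view. To handle CMSO, we will proceed by induction on the formula, exactly as in the MSO characterization. Monadic quantifiers become prover and disprover rounds assigning one bit per node, and Boolean connectives use the closure properties. The one new atomic case is the predicate $\cardmod{p}{q}{X}$, stating that $|X|\equiv q \pmod p$ for a set $X$ given as a labelling.

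The core step is to show that, for fixed $p$ and $q$, the language of configurations whose marked set $X$ satisfies $|X|\equiv q \pmod p$ lies in a fixed level of \portsh. First, we invoke the leader and rooted-spanning-tree machinery behind \Cref{cor:leader}. This lets the players, in a constant number of alternations, force a unique root $r$ and a consistent distance-to-root structure modulo $3$ (the coloring orientation), in which every non-root node has at least one neighbor closer to the root. Acyclicity is guaranteed by the XOR-surjectivity argument.

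Next, the prover additionally labels every non-root node $v$ with a port number $i_v\le \Delta$ designating its parent. This is where port numbers are essential: with only a set view, a child cannot single out one parent among several equal-looking neighbors. Because the degree may be unbounded, a port number is not constant size, so we cannot literally write $i_v$ in a certificate. Instead, each node $v$ marks one port \emph{in its own local computation}. Concretely, the prover assigns to each node $v$ a residue $s(v)\in\Z_p$, intended as $|X\cap T_v| \bmod p$ where $T_v$ is the subtree rooted at $v$. In addition, each node carries one bit per ``parent slot''.

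The cleaner scheme is the following. Every non-root node $v$ deterministically takes as its parent the closer neighbor with the smallest port number. Node $v$ also broadcasts a constant-size tag $(s(v),d(v) \bmod 3)$. A node $u$ needs to sum $s(v)$ over the children that chose $u$. The difficulty is that $u$ cannot see which of its neighbors chose it, because the choice depends on $v$'s ports, not $u$'s. To resolve this, we let the prover write into $v$'s certificate a bit $b(v)$ meaning ``I am a child of each closer neighbor that \ldots'' — but this is still ambiguous.

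We therefore use a two-bit encoding. The prover gives every node $u$ a color $c(u)\in\{1,2,3\}$ chosen so that the closer neighbors of any node are distinguishable enough. Node $v$ then declares a parent color $c^*(v)$, and we require that exactly one closer neighbor of $v$ has color $c^*(v)$. Node $v$ can check this uniqueness using its port view, since that view retains multiplicities. The port view is what permits counting here; the multiset view would also suffice for this particular check. We then verify that $s(u)\equiv [u\in X]+\sum s(v) \pmod p$, where the sum ranges over neighbors $v$ that are farther from the root and declare $c^*(v)=c(u)$. The root finally checks that $s(r)\equiv q \pmod p$.

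\textbf{Main obstacle.} The step I expect to be the main obstacle is the parent-selection problem just described. The question is whether a neighbor $u$ receiving $c^*(v)=c(u)$ is really the unique parent, since two different farther-away nodes may both point to colors shared by several closer nodes. I would first try to show that the uniqueness check at $v$ suffices. Each $v$ then has exactly one designated parent, so the parent relation is a function, and it is acyclic because distances decrease. Every $X$-element is therefore counted exactly once along the tree, and the telescoping argument gives $s(r)\equiv|X| \pmod p$.

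It remains to check that the prover can always find such a coloring. Simply choosing $c^*(v)=c(\text{parent})$ and coloring all nodes in a layer by a single color fails the uniqueness check. So the plan is to use a subtree-based spanning structure in which each node's chosen parent alone carries a designated ``parent'' bit relative to $v$. If this gets stuck, the fallback is to let $v$ point using the actual port number encoded in unary across a BFS layer, which is presumably the paper's route, exploiting that the port view lets $v$ identify its selected neighbor.
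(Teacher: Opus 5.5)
\textbf{The gap is in your counting step.} Several parts match the paper: the reduction to MSO plus a modular-counting atom, the transfer of the closure properties to \portsh, and the port-based parent selection (each non-root vertex takes the eligible neighbor at its smallest port). You also correctly identify the real difficulty, namely that a vertex $u$ cannot see which neighbors selected it.

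\textbf{Why your proposed fix fails.} You require a constant-size coloring $c$ such that every non-root $v$ has a color $c^*(v)$ carried by exactly one of its successors. This is a conflict-free coloring condition, and it is not always satisfiable. Take the subdivision of $K_m$: vertices $a_1,\dots,a_m$ and, for each pair $i<j$, a vertex $p_{ij}$ adjacent exactly to $a_i$ and $a_j$.
\begin{itemize}
\item The combined label $(t,c)$ takes at most $9$ values, so for $m\geq 19$ some three vertices $a_i,a_j,a_k$ share the same label.
\item Then each of $p_{ij},p_{ik},p_{jk}$ has either both of its neighbors among its successors or neither.
\item ``Neither'' violates the valid-rooted-coloring condition unless that vertex is the root. ``Both'' makes your uniqueness check fail for every choice of $c^*$.
\item At most one of the three vertices is the root, so on a yes-instance of $\lcount{q,p}$ over this graph the prover has no accepting certificate.
\end{itemize}
So completeness fails, and enlarging the palette to any constant $K$ only pushes the counterexample to $m\geq 2K+1$. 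This is a refinement of the obstruction that already rules out demanding a unique preceding-color neighbor.

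A warning sign: your scheme uses only multiplicities, never ports. If it worked, it would place CMSO inside \multiseth, which the paper leaves open. Your two fallbacks are not constructions. A parent bit ``relative to $v$'' cannot be stored as a single node label, since $u$ carries one label for all its neighbors. ``Port numbers in unary across a BFS layer'' is left unspecified.

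\textbf{The missing idea.} Move verification of the parent relation to the child, and let the disprover localize which recurrence is tested. The paper proceeds as follows.
\begin{enumerate}
\item The prover commits to the port-selected tree certificate and to the residues $\ell_c$.
\item The disprover marks a vertex $v^*$. The challenge is enforced only when it is a singleton; this is implemented through $\overline{\leader}$ and a global selector bit.
\item The prover answers with $\ell$, where $\ell(u)=\ell_c(u)$ if the port-selected parent of $u$ is marked, and $\ell(u)=0$ otherwise.
\item Each $u$ checks this through its own smallest eligible port, so no vertex ever needs to know who chose it.
\item The vertex $v^*$ sums $\ell$ over all its neighbors, which needs multiplicities, and compares the result with $\ell_c(v^*)-x(v^*)$ modulo $p$.
\end{enumerate}
If the recurrence fails anywhere, the disprover challenges that vertex. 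Acyclicity and the uniqueness of the root of the port-selected functional graph are certified separately as \portspann, using \leader and the XOR test on the selected parent. Together this gives $\lcount{q,p}\in\hier{\portsh}_9$, which then plugs into the MSO induction exactly as you intended.
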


We also exhibit a graph language in
\(\hier{\multiseth}_1\) that is not CMSO-definable.
Together with the preceding theorem, this implies that CMSO
is strictly contained in the \portsh hierarchy.
Whether every CMSO-definable language belongs to the
\multiseth hierarchy remains open.

\begin{restatable}{theorem}{thmExout}\label{th:examout}
    There is a language in ${\multiseth}$  that is not CMSO-definable.
\end{restatable}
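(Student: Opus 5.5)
Our plan is to exhibit an explicit graph language in $\hier{\multiseth}_1$ whose definition compares two unboundedly large cardinalities. Such a comparison is beyond CMSO, which can only count modulo fixed integers. The candidate is the class of \emph{regular graphs}, or a closely related local-balance property. Regularity itself is locally checkable with multisets only if each node knows its neighbours' degrees, which constant-size labels cannot transmit. So we pick a property where the multiset view does the counting directly: inputs $x:V\to\{0,1\}$, and the language $L_{=}$ of configurations where every vertex has equally many neighbours with input $0$ and with input $1$.

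The verifier uses $k=0$ labels, or a dummy first-level label, so $L_{=}\in\hier{\multiseth}_0\subseteq\hier{\multiseth}_1$. The one subtlety is that the view as defined only contains neighbours' label sequences, not their inputs. To handle this, we have the (single, existential) prover copy the input into the certificate: $\ell_1(v)=x(v)$. Each node checks $\ell_1(v)=x(v)$, so all nodes accept only if the copy is faithful. Each node then checks, from $\view{\multiseth}{v}$, that the multiplicity of label $0$ equals that of label $1$. This places $L_{=}$ in $\hier{\multiseth}_1$.

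For non-definability, we restrict to a family where CMSO collapses to something known. We take complete bipartite graphs $K_{m,n}$, or better, stars, with inputs. Concretely, consider a star with centre $c$ whose leaves are labelled $0$ or $1$, with $a$ leaves of input $0$ and $b$ of input $1$. Leaves have degree one and so impose constraints too, so we make the language apply only to vertices of degree at least $2$. This is still checkable locally in the multiset model, since a node sees its degree as the multiset size. On stars, the language holds if and only if $a=b$.

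A CMSO sentence over such stars is a CMSO sentence over a structure that is essentially a coloured set with one distinguished element. By the standard fact, provable by Ehrenfeucht–Fraïssé games for MSO with modular counting, or via Feferman–Vaught and the ultimately periodic behaviour of counting MSO on unary structures, the set of pairs $(a,b)$ satisfying a CMSO sentence is a finite Boolean combination of threshold and modular conditions on $a$ and on $b$ separately. Equivalently, it is a semilinear set that is a finite union of products of ultimately periodic sets. The diagonal $\{(a,a)\}$ is not of this form. To see this, take period $p$ and threshold $t$ for all the sets involved; then $(a,a)$ and $(a,a+p)$ with $a>t$ are indistinguishable.

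The main obstacle is justifying the collapse of CMSO on stars rigorously. We intend to do it with a CMSO Ehrenfeucht–Fraïssé game of quantifier rank $r$ with moduli dividing $p$. In this game, two stars whose leaf counts agree per colour up to threshold $t(r)$ and modulo $p$ are equivalent. Duplicator answers each monadic set move by splitting each colour class into parts whose sizes match the corresponding parts on the other side in threshold-and-mod-$p$ terms, which is always possible once sizes exceed a large enough threshold. This is routine but must be stated carefully.
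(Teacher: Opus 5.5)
Your proof is correct, but the non-definability half follows a different route from the paper's.

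\textbf{The paper's route.} It uses an unlabelled graph language: every vertex of degree at least $3$ has equally many neighbours of degree $1$ and of degree $2$. This is in $\hier{\multiseth}_1$ by having the prover publish truncated degrees. The paper then observes that the graph built from $w=\#a^m(bc)^k$ is first-order interpretable in $w$. A CMSO sentence for the language would therefore yield a CMSO word sentence defining the non-regular set $\{\#a^k(bc)^k:k\geq2\}$. This contradicts the regularity of CMSO-definable word languages.

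\textbf{Your route.} Your membership argument is the same mechanism (copy local information into an existential certificate, then count in the multiset view), applied to input bits instead of degrees. For non-definability you instead run a CMSO Ehrenfeucht--Fra\"{\i}ss\'e game on stars, viewed as coloured sets with a distinguished centre. This shows that a sentence only sees colour-class sizes up to a threshold and modulo a period.

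\textbf{What each buys.} Your argument is self-contained and describes exactly which sets of star parameters are CMSO-definable. The cost is that you must actually write out the splitting lemma for threshold-and-residue classes; a recurrence like $T(r)=2T(r-1)+p$ works. The paper instead relies on a standard theorem on words. You could even bypass the game with the paper's trick, since your coloured star is first-order interpretable in the word $\#0^a1^b$. The paper's witness is also slightly stronger because it has no input labels, so the separation already holds for graph languages. Yours uses unary input predicates, which the paper's CMSO over finite alphabets does allow.

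\textbf{A small slip.} Your opening claim that $L_{=}\in\hier{\multiseth}_0$ is false, since without certificates a vertex cannot see its neighbours' inputs. You correct this yourself, and level $1$ is what the statement needs.
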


We next compare the \multiseth and \portsh hierarchies.
The inclusion
\(\hier{\multiseth}\subseteq\hier{\portsh}\)
follows by discarding port numbers. We prove that this
inclusion is strict, even for graph languages without
input labels.

\begin{restatable}{theorem}{thmMultisetPorts}
\label{th:multiset-ports}
There exists a decidable language of 
graphs, without input labels, that belongs to
\(\hier{\portsh}\) but not to \(\hier{\multiseth}\).
Consequently,
\(\hier{\multiseth}\subsetneq\hier{\portsh}\).
\end{restatable}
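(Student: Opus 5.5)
We need a decidable unlabeled graph language that is in \(\hier{\portsh}\) but not in \(\hier{\multiseth}\). The plan is to exploit the one thing ports give and multisets do not: a node can tell apart two neighbors that carry identical certificate sequences.

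\textbf{Choice of language.} A natural candidate is a property invariant under any multiset-preserving relabeling but detectable once a node can name a single neighbor. Instead of working with generic graphs, I would take a family where the multiset model provably cannot break symmetry:
\begin{itemize}
\item the language \(L\) of graphs containing a vertex of odd degree is too weak, since multisets see degrees;
\item so I would use instead a counting property over a tree built by parent pointers.
\end{itemize}
By \Cref{th:CMSO}, any CMSO property such as ``the number of vertices is even'' lies in \(\hier{\portsh}\). Membership of such a language in \(\hier{\portsh}\) is therefore immediate, and decidability is clear.

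\textbf{Lower bound: a bisimulation-style game.} For \(\hier{\multiseth}\), I would show that the multiset verifier cannot distinguish certain graph pairs \(G\in L\), \(H\notin L\). The key tool is a lifting (covering) argument. Suppose \(H\) is a \(2\)-fold cover of \(G\), for instance a cycle \(C_{2n}\) covering \(C_n\), or more generally a regular graph covering another.

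Then every label assignment on \(G\) lifts to \(H\), and every multiset view is preserved. The converse fails, however: the disprover or prover on \(H\) may use assignments that do not descend to \(G\). I would handle this by an Ehrenfeucht--Fra\"iss\'e-style strategy-copying argument. The idea is to pick both graphs among regular graphs of equal degree but with different parity of order, and to argue via a pumping/compactness argument that, for large enough instances, the alternating game with constant-size labels and multiset views is determined by ``graded-modal'' types with bounded counting thresholds.

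With counting bounded by a constant \(c\), multiset views above the threshold collapse. The game on \(G\) then becomes equivalent to an MSO-with-threshold-counting game, which cannot express parity of the order for suitable families, for instance large random-like \(d\)-regular graphs or cycles.

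\textbf{The obstacle.} The parity-of-order idea on cycles fails, because the multiset model on cycles behaves like the set model, and MSO can express even cycle length. So the separating language must exploit genuinely unbounded-degree multiplicities that ports break, e.g. complete bipartite graphs \(K_{a,b}\) with some arithmetic relation on \(a,b\).

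I would take \(L=\{K_{a,b}: a\equiv b \pmod 2\}\) or a similar relation, and aim to show a multiset verifier on \(K_{a,b}\) only sees counts, whose fine structure a constant-size alternating game cannot link across the two sides. Proving that the alternating game over multisets on \(K_{a,b}\) reduces to a finite-automaton/semilinear condition on \((a,b)\) is the step I expect to be hardest. Choosing the relation to be non-semilinear, and showing that ports do capture it, is exactly where I am least sure the plan works.
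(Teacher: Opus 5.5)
Your proposal has a genuine gap. It never fixes a separating language, and none of the lower-bound tools you sketch can work against \multiseth.

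Covering arguments break under alternation, as you observe yourself. For example, bipartiteness, already in \(\hier{\seth}_1\), separates an odd cycle from its double cover.

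The threshold collapse is false. A \multiseth view keeps exact multiplicities, in particular the degree, and local computation is unrestricted, so a verifier can compare unbounded counts. If your collapse held, \(\hier{\multiseth}\) would be contained in MSO \(=\hier{\seth}\), contradicting \Cref{th:examout}.

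Your candidates do not work either. A CMSO property such as even order would require settling exactly the question the paper leaves open, namely whether CMSO is contained in \(\hier{\multiseth}\). Your final candidate \(\{K_{a,b}: a\equiv b\pmod 2\}\) already lies in \(\hier{\multiseth}\). Complete bipartiteness is MSO-definable. Each vertex can certify the parity of its own degree, which is the size of its view, and check that all its neighbors carry the same parity bit.

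More fundamentally, complete bipartite graphs cannot separate the two models. By \Cref{lem:boundedcc}, a \portsh protocol on \(K_{a,b}\) yields a bounded deterministic communication protocol. So membership is determined by finitely-valued classes of \(a\) and of \(b\). Such a rule (with computable classes) is realized in \multiseth by letting every vertex certify the class of its own degree. The step you flag as least certain therefore fails on this family.

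The missing idea is where ports genuinely add information. A single vertex of degree \(n\) reads, in port order, an arbitrary word of \(n\) constant-size labels, i.e.\ \(2^{\Omega(n)}\) distinguishable views. With \(q\) possible certificate sequences, its \multiseth view takes at most \(q(n+1)^q\) values, polynomial in \(n\).

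The paper exploits this as follows. It encodes a binary word \(w\) as a graph \(G_w\): a path, a hub \(r\) adjacent to every path vertex, and pendant leaves marking the 1s. Every vertex except \(r\) then has bounded degree.

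In \portsh, the prover writes a word \(z\) at the hub's ports. A universal singleton challenge \(\{u\}\) is answered by the subpath \(T\) from a marked endpoint \(s\) to \(u\), together with a globally constant bit recording whether \(u\) has a leaf. This lets \(r\) check \(z_j=w_j\) at \(j=|N(r)\cap T|\). Hence \(L_D\in\hier{\portsh}\) for every decidable reversal-closed \(D\).

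The lower bound is then a counting diagonalization, not an indistinguishability argument. Protocols with \(c\) binary labelings induce at most \(2^{2^c(n+1)^{2^c}+c}\) acceptance vectors on the at least \(2^{n-1}\) pairwise nonisomorphic graphs \(G_w\) with \(|w|=n\). One can therefore choose lengths \(n_c\) and a decidable \(D\) whose membership pattern at length \(n_c\) avoids every protocol with \(c\) labelings. Note that the resulting separating language is a diagonal construction, not a natural property.
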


The proof, given in \Cref{sec:multiset-ports}, encodes binary
words as graphs. On this encoding, \portsh can recognize
every decidable word language closed under reversal.
We then use a counting argument over finite tables of local
views to construct such a word language whose graph encoding
cannot be recognized by any \multiseth protocol.

These results suggest that the expressive power of the \portsh hierarchy can be compared to that of the logarithmic hierarchy. However, the constant
size of the certificates in our model, together with the absence of unique identifiers, also makes it possible to prove separation results whose
analogues remain out of reach for the logarithmic hierarchy. More precisely, we show that the levels of the constant hierarchy form a strict hierarchy.

\begin{restatable}{theorem}{thmHstrict}\label{th:hstrict}
    Let ${\modelo\in\{\seth,\multiseth, \portsh\}}$. The $\modelo$-hierarchy consists of infinitely many levels. More precisely, for all $k\geq0$ there exists a language $L \in \hier{\modelo}_{k+1}$ such that $L\not \in \hier{\modelo}_{k}.$
\end{restatable}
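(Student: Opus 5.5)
The plan is to prove a single, stronger separation that handles all three models at once. For every \(k\geq 0\) I will exhibit a language \(L_{k+1}\) with
\[
L_{k+1}\in\hier{\seth}_{k+1}
\quad\text{and}\quad
L_{k+1}\notin\hier{\portsh}_{k}.
\]
This is enough because \(\hier{\seth}_j\subseteq\hier{\multiseth}_j\subseteq\hier{\portsh}_j\) holds level by level: a multiset view determines the set view, and a port view determines the multiset view. So \(L_{k+1}\) lies in level \(k+1\) of every model and outside level \(k\) of every model.

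The separating languages will come from the strictness of the monadic quantifier-alternation hierarchy on grids (Matz, Schweikardt and Thomas). I restrict to configurations that encode grids. Each node carries, as finite input, its label from the picture alphabet together with local direction marks (N/S/E/W) on its incident edges. These marks are realised by a finite coloring such as coordinates mod 3, so even the \seth verifier can identify which neighbor is which. Configurations that are not well-formed grids are placed in \(L_{k+1}\) or excluded uniformly; this costs nothing, since being a well-formed grid is checked by a \(\hier{\seth}_1\) tiling certificate, and by Theorem~\ref{thm:set-equals-mso} the \seth hierarchy is closed under Boolean operations.

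The proof then has two translations.

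\textbf{Lower bound (protocol to formula).} Suppose \(L\in\hier{\portsh}_k\) with certificate size \(c\). Each round \(\ell_i\) is encoded by \(2^{c}\) set variables, one per certificate value. On grids the port numbering can be fixed canonically from the direction marks; correctness for every port numbering means the protocol is in particular correct for this one. The verifier then becomes a first-order formula of radius \(1\), universally quantified over nodes. Hence \(L\), restricted to grids, is definable by a monadic \(\Sigma_k\) formula (with the matching starting quantifier) whose kernel is a \(\forall\)-local first-order formula.

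\textbf{Upper bound (formula to protocol).} I need the converse with no loss of levels. By the normal form in the Matz–Schweikardt–Thomas work, every \(\Sigma_{j}\) picture language can be written with a first-order kernel that is a local tiling condition. The existential first-order quantifiers of the kernel must be absorbed into the last second-order block. The plan is to take as \(L_{k+1}\) one of their witnesses that already has this local form, namely the languages they build from iterated exponential-size squares. Then each second-order block becomes exactly one round of constant-size labels, and the tiling check is a radius-\(1\) test with direction marks. This places \(L_{k+1}\) in \(\hier{\seth}_{k+1}\).

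The main obstacle is the exact level bookkeeping, in three places:
\begin{enumerate}
\item matching the quantifier pattern that defines level \(k\) in our hierarchies to the correct side, \(\Sigma\) or \(\Pi\), of the monadic hierarchy;
\item making sure that the absorption of first-order quantifiers in the upper bound adds no extra alternation;
\item making sure the lower-bound translation really gives a local kernel rather than a general first-order one.
\end{enumerate}
If the offsets do not line up, my fallback is to use the known fact that the monadic hierarchy on grids is strict with arbitrary gaps, and to separate level \(k\) from some level \(k+c\). Then I would recover the single-step gap by padding: prepend dummy rounds to the protocol, which shows every intermediate level is distinct, since a collapse at some level would propagate upward by the standard argument.
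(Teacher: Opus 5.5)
Your route differs from the paper's. For $k\geq 2$ the paper transfers Reiter's separations (Theorem~36 and Proposition~38 of \cite{reiter2024local}) through the bounded-degree equivalence of \Cref{lem:ports-in-set-bounded}. It treats $k=0$ (bipartiteness) and $k=1$ (\amos, via a double cover and elimination of a universal round) by hand. You instead go straight to Matz--Schweikardt--Thomas on grids whose directions are given as input. That can work, but the step carrying all the content, the one-level matching, is left open, and the way you sketch it fails in one case.

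A first-order kernel can be absorbed into radius-one checks without a new alternation only after a \emph{final existential} block (the EMSO $=$ tiling normal form on pictures, relativized to the free set variables). After a final universal block you get at best $\forall\bar Y\,\exists x\,\chi$. Unanimous acceptance cannot express an existential kernel without an \alos-type gadget that costs rounds. So at level $k+1$ you must use the pattern ending existentially: $\Sigma_{k+1}$ for odd $k+1$, $\Pi_{k+1}$ for even $k+1$. Since $\hier{\portsh}_k$ contains both $\Sigma_k$ and $\Pi_k$ protocols, your witness must avoid \emph{both} monadic $\Sigma_k$ and monadic $\Pi_k$ on grids, not just the class ``with the matching starting quantifier''.

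The paper settles this with Proposition~38, and you can do the same. A trailing universal round is removable locally in \portsh: each node enumerates all values at itself and at its ports, as in the paper's proof that $\Pi_1^{\portsh}=\hier{\portsh}_0$. Hence on grids $\hier{\portsh}_k$ lies in monadic $\Sigma_k$ for odd $k$ and in monadic $\Pi_k$ for even $k$. For $k\geq1$, strictness forces monadic $\Sigma_k\neq\Pi_k$, so the other class contains a grid language outside the one containing $\hier{\portsh}_k$. That language lies in the existential-ending class at level $k+1$. ($k=0$ needs a separate non-first-order witness.)

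Your worry (3) is moot, since the lower bound needs only first-order definability. ``Take a witness that already has this local form'' is not an argument. The fallback is not automatic either. Padding gives only inclusions. Upward propagation of a collapse is not the textbook argument here, because certificates are visible to neighbors while inputs are not. You must re-expose $\ell_1$ by copying it into a later existential round, and such a round exists only from level $1$ on.

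Two auxiliary claims are false.

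First, ``well-formed grid'' is not in $\hier{\seth}_1$, nor even in $\hier{\portsh}_1$. A connected double cover of an accepted grid, with lifted inputs, certificates and ports, is accepted at every node but has eight vertices of degree two. Boolean closure is not free either: \Cref{lem:intun,cor:comp} cost one extra level for unions and three for complements. So excluding non-grids breaks $k=0$ and is unjustified at small $k$. You do not need grid recognition at all. Let $L_{k+1}$ be whatever your \seth protocol accepts on arbitrary configurations; only its restriction to grid encodings enters the lower bound, and this is how the paper phrases it.

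Second, in none of the three models can a node read its neighbors' inputs, so direction marks given as input are invisible. Copy them into the final existential round and check them against the node's own input (\Cref{rem:simple}).

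With these repairs your approach gains uniformity over $k$. Putting directions in the input avoids the paper's simulation of locally unique identifiers by a certified coloring of $G^{2R}$, which is what restricts \Cref{lem:ports-in-set-bounded} to $m\geq3$ and $m\geq4$. The price is that you must prove the monadic-to-certificate level correspondence yourself instead of importing it from Reiter.
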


The proof compares corresponding levels of our hierarchies
with those of the locally polynomial hierarchy on graphs of
bounded degree and over finite input alphabets.
For every \(k\geq2\), this comparison transfers the separations
of~\cite[Theorem~36 and Proposition~38]{reiter2024local},
yielding a language in \(\hier{\seth}_{k+1}\) that does not
belong to \(\hier{\portsh}_k\).
The first two inclusions are separated directly using
\(2\)-colorability and \amos, respectively.
Since
\(\hier{\seth}_k\subseteq\hier{\multiseth}_k
\subseteq\hier{\portsh}_k\)
at every level, the same witnesses establish strictness
for all three hierarchies.

Our containment results for MSO and CMSO concern
\(\mathrm{MSO}_1\) and \(\mathrm{CMSO}_1\), respectively,
where second-order quantification ranges over sets of
vertices. We show that the containment in the \portsh
hierarchy cannot be extended to \(\mathrm{MSO}_2\), which
also allows quantification over sets of edges.

\begin{restatable}{theorem}{thmPortsHamPM}
\label{th:ports-ham-pm}
Neither Hamiltonicity nor the existence of a perfect matching
belongs to the \portsh hierarchy. Consequently,
\(\mathrm{MSO}_2\not\subseteq\hier{\portsh}.\)
\end{restatable}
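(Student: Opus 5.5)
We prove that neither Hamiltonicity nor the existence of a perfect matching belongs to \(\hier{\portsh}\) by a \emph{cut-and-glue indistinguishability argument}. The key fact is that a \portsh protocol at a fixed level \(k\), with label size \(c\), has every node's decision depend only on finitely many \emph{local types}. Such a type consists of the node's own label sequence together with the port-indexed label sequences of its neighbours. Since we may restrict to graphs of maximum degree \(3\) (or another fixed bound), the number of types is a constant \(T=T(k,c)\).

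We argue by contradiction. Suppose a protocol \(A\) at level \(k\) decides the property for every port numbering. We build a family of instances from gadgets joined through two ``bridge'' positions. Each instance has the form \(G_{i,j}\), obtained by gluing a left part \(X_i\) to a right part \(Y_j\) along a constant-size interface. The intended behaviour is that \(G_{i,j}\) is Hamiltonian (respectively, has a perfect matching) if and only if \(i=j\). For perfect matchings, a natural choice is a long path or ladder in which a parity-type or phase-type parameter must match across the interface. For Hamiltonicity, we would use a sequence of degree-\(3\) gadgets whose traversal pattern encodes a word, so that a Hamiltonian cycle of the glued graph exists exactly when the two words coincide. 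The interface edges should be few (say two), with port numbers fixed identically in all instances.

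The core step is a \emph{game-tree crossing argument} in the style of lower bounds for alternating local certification. We view the alternating game on \(G_{i,j}\) as a game played separately on the two halves. The only interaction between the halves passes through the finitely many label sequences visible across the interface edges, together with the verifier's acceptance at the interface nodes.

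For each left part \(X_i\), we define its \emph{profile}. This is the set of winning strategies of the prover, abstracted by induction on the \(k\) rounds. At each round, we record which interface label patterns the prover can force, or the disprover can avoid, while keeping all internal nodes of \(X_i\) accepting. This profile is a finite object whose size is bounded by a tower in \(T\) of height \(k\). It is independent of \(i\) up to finitely many possibilities. The claim to prove is that the outcome of the game on \(G_{i,j}\) is determined by the pair (profile of \(X_i\), profile of \(Y_j\)), with labels remaining constant-size and ports fixed.

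By pigeonhole, once the number of distinct indices exceeds the number of possible profiles, there are \(i\neq i'\) with \(X_i\) and \(X_{i'}\) sharing a profile. Then \(G_{i,i}\) and \(G_{i',i}\) have the same outcome. This is a contradiction, since one is a yes-instance and the other a no-instance.

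The main obstacle is making the profile compositional under alternation. At each round the prover or disprover chooses labels on the whole graph at once, so we must show that a globally chosen assignment decomposes into independent choices on the two halves plus an interface label pattern. We must also show that strategies can be recombined by mixing the left strategy from one game with the right strategy from another. I would handle this by defining the profile as an Ehrenfeucht–Fraïssé-like game value over interface patterns, proved by induction on \(k\). The base case is acceptance of all nodes given interface labels.

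A secondary difficulty is designing gadgets where membership genuinely depends on an unbounded family of left parts that pairwise differ in how they glue. For perfect matching, we could use even/odd path lengths paired with a counting-type encoding. However, since modular counting is in \portsh by \Cref{th:CMSO}, the gadget family must encode non-regular information. An example is equality of two unbounded lengths, realized by ``grid-like'' or ladder structures whose Hamiltonicity and matchability require the lengths to agree.

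Finally, Hamiltonicity and perfect matching are \(\mathrm{MSO}_2\)-definable by quantifying over an edge set. Together with the two non-membership results, this yields \(\mathrm{MSO}_2\not\subseteq\hier{\portsh}\).
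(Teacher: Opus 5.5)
\textbf{The approach has a fatal gap: it restricts to graphs of bounded degree, where both languages are recognizable.} No argument that uses only instances of maximum degree $3$ can refute every \portsh protocol, whatever gadgets you choose.

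For perfect matching, take one existential round assigning $\chi(v)\in[10]$ and $m(v)\in[10]$. From its multiset view, node $v$ checks three things: the values $\chi(u)$ for $u\in N(v)$ are pairwise distinct and differ from $\chi(v)$; some neighbor $u$ has $\chi(u)=m(v)$; and that neighbor has $m(u)=\chi(v)$. The first check certifies a proper coloring of $G^2$, which exists with $10$ colors when the maximum degree is $3$. The partner relation is then a fixed-point-free involution along edges, that is, a perfect matching. So this is a $\hier{\multiseth}_1$ protocol that is correct on every graph of maximum degree $3$.

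For Hamiltonicity, the same certified coloring encodes an edge set by the vertex sets $X_a=\{v:$ the edge from $v$ to its neighbor of color $a$ is selected$\}$. Hence on $\mathcal B(\Delta)$ Hamiltonicity coincides with an $\mathrm{MSO}_1$-definable language, which lies in $\hier{\seth}$ by \Cref{thm:set-equals-mso}. Alternatively, combine \Cref{lem:ports-in-set-bounded} with the local second-order definition of Hamiltonicity in Reiter's work.

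\textbf{The step that actually fails is the gadget family, not the profile compositionality you flagged.} Your Feferman--Vaught-style composition over a constant-size interface is plausible. But then no family with $G_{i,j}$ a yes-instance iff $i=j$ can exist.

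Whether $G_{i,j}$ has a perfect matching depends only on which subsets $S$ of interface edges make $X_i$ minus the $X$-side endpoints of $S$ perfectly matchable. Hamiltonicity depends only on which patterns of used interface edges, with their endpoint pairings by path covers of $X_i$, are realizable. Both are finite data, so by pigeonhole some $i\neq i'$ share them, and $G_{i,i}$ and $G_{i',i}$ agree. Both properties are ``regular'' with respect to gluing along bounded interfaces. You correctly note that non-regular information is required, but it can only be forced through an unbounded cut, which means unbounded degree.

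\textbf{The paper's route.} It works on $K_{n,m}$ and fixes the port numbering in which port $j$ of every vertex in Alice's class leads to Bob's vertex $j$, and symmetrically. Every vertex on one side then sees the same port-ordered tuple, which the other party knows. By induction over the quantifier prefix, the alternating game becomes a constant-cost deterministic two-party protocol:
\begin{itemize}
\item existential rounds are handled by guessing an accepting transcript of the inner protocol and invoking $P_0=NP_0$;
\item universal rounds are handled by complementation.
\end{itemize}
Since $K_{n,m}$ has a perfect matching iff $n=m$, and a Hamiltonian cycle iff $n=m\geq 2$, this would decide equality of unbounded integers with bounded communication, which is impossible.
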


Both properties are definable in \(\mathrm{MSO}_2\).
For Hamiltonicity, one quantifies an edge set forming a
connected spanning subgraph in which every vertex has
degree two. For a perfect matching, one quantifies an edge
set containing exactly one edge incident to each vertex.
Thus, the failure of the containment already occurs for
graph languages without input labels.

The proof, given in \Cref{sec:knnout}, shows that a \portsh
protocol on complete bipartite graphs induces a deterministic
communication protocol of uniformly bounded cost between
the two parts. For either property, such a protocol would
decide whether the two parts have the same size, which
requires unbounded communication.

For languages with unbounded input alphabets, we prove
a further limitation. Let \(\size\) be the language
consisting of all pairs \((G,x)\) such that \(G\) has
\(n\) vertices and \(x(v)=n\) for every \(v\in V(G)\).

\begin{restatable}{theorem}{thmSize}\label{th:langout}
    $\textnormal{\size}$ is not in the $\portsh$ hierarchy. 
\end{restatable}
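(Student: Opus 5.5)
We argue by contradiction, using a fooling-set/cut-and-paste construction that exploits constant-size labels, anonymity, and the locality of the verifier. Suppose \(\size\in\hier{\portsh}_k\) via a verifier \(A\) with certificate bound \(c\). We take cycles \(C_n\) as the underlying graphs and fix a consistent port numbering (port 1 clockwise, port 2 counterclockwise). Every node of a cycle of length \(n\) receives input \(n\). On \(C_n\) with the correct input, the alternating game is won by the prover. On \(C_m\), \(m\neq n\), where every node receives input \(n\), the instance is illegal, so the disprover wins.

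The key idea is to glue two copies of \(C_n\) into \(C_{2n}\) with all inputs equal to \(n\). This illegal instance is locally indistinguishable from \(C_n\). We then show that the prover can still win, by simulating on \(C_{2n}\) a winning strategy from \(C_n\), which contradicts the assumption.

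The difficulty is that the disprover's challenges on \(C_{2n}\) need not be periodic, so a naive lift of the prover's strategy fails. We therefore proceed by induction on the alternation depth and replace the simple lift with a pumping/compactness argument over "types". For a segment (path) of the cycle, with constant-size labels at every level, define its type as the finite game-theoretic summary describing how the labels on the segment's boundary nodes interact with the remaining quantifier rounds. Since labels have constant size and the view has radius 1, the number of types depends only on \(c\) and \(k\), not on \(n\). The input \(n\) is fixed throughout the instance, so it enters only as a parameter of \(A\). However, \(A\) may behave arbitrarily differently for each \(n\). The type monoid is therefore finite for each \(n\), but its size is bounded uniformly by a constant \(T(c,k)\) independent of \(n\).

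With this bound, a pumping argument applies. Take \(n> T(c,k)\). On the legal instance \(C_n\) there are two positions that cut the cycle into arcs such that one arc \(P\) can be repeated or removed without changing the game value. Concretely, an Ehrenfeucht--Fraïssé-style composition lemma for the alternating certification game shows that the winner on a cycle depends only on the product of the types of its consecutive arcs. By pigeonhole on prefix types, some nonempty arc \(P\) of length at most \(T\) has idempotent-compatible prefix types. Repeating \(P\) then yields a cycle of length \(n+|P|\) with input \(n\) on which the prover still wins, giving an illegal instance that is accepted.

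The main obstacle is the composition lemma. We must show that the alternating game on a concatenation of arcs is determined by the arcs' types, where the quantifiers range over global labelings of the whole cycle. We would prove it by induction on the number of remaining rounds, in the style of Feferman--Vaught. A type at depth \(j\) is the set of pairs (boundary labeling at depth \(j\), type at depth \(j+1\) of the remaining game) achievable on the arc. The finiteness bound comes from iterated powersets of a set of size \(2^{O(c k)}\).

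Port numbers cause no trouble because on cycles we fixed them uniformly. Any valid port numbering can be chosen, since the protocol must be correct for every port numbering, and picking one uniform numbering suffices to reach the contradiction.
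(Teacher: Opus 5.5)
Your argument is correct in outline, but it takes a genuinely different route from the paper.

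\textbf{The paper's route.} The paper never changes the graph. It proves \Cref{lem:langfrag}: at a degree-\(d\) vertex with input \(s\), the verifier is just a Boolean function \(F_s\) on the finite set \(C\times C^d\) of its own and port-indexed neighboring certificate tuples. So among infinitely many fragile values, two distinct ones \(s\neq s'\) satisfy \(F_s=F_{s'}\). Substituting \(s'\) for \(s\) at the witnessing vertex changes no local output under any certificate assignment, since no other vertex reads that input. Hence the alternating statement keeps its truth value, contradicting fragility. \size\ is then excluded because every \(n\geq3\) is a fragile degree-\(2\) value, witnessed by \(C_n\).

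\textbf{Your route.} You keep the input uniform and vary the cycle length, which forces you to build a Feferman--Vaught composition lemma for the alternating game. Your key observation is that \(n\) only changes which element of a type universe of size \(T(c,k)\) an arc receives, not the size of that universe. This is exactly what makes pumping at \(n>T(c,k)\) legitimate: the pigeonhole \(a^i=a^j\) with \(i<j\leq T+1\leq n\) gives \(a^n=a^{n+(j-i)}\).

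Two technical remarks. At the base level, the type must also record the labels next to each endpoint, since an endpoint's acceptance depends on its inner neighbor too. You could also avoid proving the composition lemma by hand: for fixed \(n\), acceptance on oriented cycles is an MSO sentence whose quantifier rank is bounded independently of \(n\), and rank-\(q\) MSO types of oriented cycles are ultimately periodic in the length.

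\textbf{What each buys.} The paper's argument is far shorter and works on arbitrary graphs at any fixed degree. Yours yields a different invariant: for a fixed protocol, the set of lengths \(m\) for which \(C_m\) with uniform input \(n\) is accepted is ultimately periodic, with threshold and period bounded independently of \(n\). This also excludes languages such as ``\(x(v)\geq|V|\) for all \(v\)'', which have no fragile values in the paper's sense. Your opening \(C_{2n}\) lifting idea is not needed and can be dropped.
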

In fact, our result holds for a much larger set of languages.  We say that a word $s \in \{0,1\}^*$ is $d$-\emph{fragile} if there exists a yes-instance $(G,x)$ and a node $v$ of degree $d$ in $G$ such that the input of $v$ is $s$, i.e., $x(v)=s$ but any perturbation of the input value of $v$ produces a no-instance. We show that, for every language \(L\in\hier{\portsh}\)
and every fixed degree \(d>0\), only finitely many input
values are \(d\)-fragile. For \size, the cycles \(C_n\),
with \(n\geq3\) and input \(n\) at every node, give
infinitely many \(2\)-fragile input values. Therefore,
\(\size\notin\hier{\portsh}\).

Finally, \Cref{tab:landscape} provides an overview of our main results, summarizing the expressive power and limitations of the three hierarchies through representative languages and their corresponding levels. The table also highlights the relationships between the models and the questions that remain open.

\begin{table}[t]
\centering
\small
\setlength{\tabcolsep}{3pt}
\resizebox{\columnwidth}{!}{%
\begin{tabular}{
  @{}>{\raggedright\arraybackslash}p{0.43\columnwidth}
  c c c l@{}
}
\toprule
Language & \seth & \multiseth & \portsh & Result \\
\midrule

\(k\)-coloring
& \(\hier{\seth}_1\)
& \(\hier{\multiseth}_1\)
& \(\hier{\portsh}_1\)
& Folklore \\

At least one selected node \((\alos)\)
& \(\hier{\seth}_3\)
& \(\hier{\multiseth}_3\)
& \(\hier{\portsh}_3\)
& \Cref{th:alos} \\

At most one selected node \((\amos)\)
& \(\hier{\seth}_2\)
& \(\hier{\multiseth}_2\)
& \(\hier{\portsh}_2\)
& \Cref{th:amos} \\

Exactly one selected node \((\leader)\)
& \(\hier{\seth}_3\)
& \(\hier{\multiseth}_3\)
& \(\hier{\portsh}_3\)
& \Cref{cor:leader} \\

MSO-definable languages
& \(\subseteq\hier{\seth}\)
& \(\subseteq\hier{\multiseth}\)
& \(\subseteq\hier{\portsh}\)
& \Cref{thm:set-equals-mso} \\

CMSO-definable languages
& \(\not\subseteq\hier{\seth}\)
& ?
& \(\subseteq\hier{\portsh}\)
& \Cref{th:CMSO} \\

Degree-neighbor counting 
& \(\notin\hier{\seth}\)
& \(\hier{\multiseth}_1\)
& \(\hier{\portsh}_1\)
& \Cref{th:examout} \\

Hamiltonicity
& \(\notin\hier{\seth}\)
& \(\notin\hier{\multiseth}\)
& \(\notin\hier{\portsh}\)
& \Cref{th:ports-ham-pm} \\

Existence of a perfect matching
& \(\notin\hier{\seth}\)
& \(\notin\hier{\multiseth}\)
& \(\notin\hier{\portsh}\)
& \Cref{th:ports-ham-pm} \\

Graph size verification \((\size)\)
& \(\notin\hier{\seth}\)
& \(\notin\hier{\multiseth}\)
& \(\notin\hier{\portsh}\)
& \Cref{th:langout} \\

\bottomrule
\end{tabular}%
}

\caption{Examples of languages studied in this paper
and their placement in the \seth, \multiseth, and
\portsh hierarchies. The indicated levels are upper
bounds obtained by our constructions and are not
necessarily optimal. For logical classes, inclusions
refer to all languages definable in the corresponding
logic. ``Open'' denotes an unresolved inclusion.}
\label{tab:landscape}
\end{table}

\subsection{Structure of the article}
The rest of the paper is organized as follows.  In
\Cref{sec:highlevel}, we provide a high-level overview of our results and the
main ideas behind the proofs. The detailed proofs are given in
\Cref{sec:constructions,sec:SETequalsMSO,sec:MSOCMSOinH,sec:Hstrict,sec:sizeout}. We conclude in \Cref{sec:discussion} with a discussion of possible directions for future work.











\section{High-level description of our results} \label{sec:highlevel}

In this section, we provide a high-level description of the ideas and techniques used to obtain our results. 

\subsection{The power of the SET model}

We first focus on proving \Cref{thm:set-equals-mso}, showing that the \seth hierarchy has the same expressive power as MSO. 

We begin developing a series of fundamental tools. First, we show that the language $\leader$ is in the third level of the \seth hierarchy. Remember that this language consists of instances $(G,x)$ such that $x:V\to\{0,1\}$ and  $|M(x)|= 1$, where $M(x)=\{u\in V: x(u)=1\}$. It is known that $\leader$ admits a PLS with certificates of size $\mathcal{O}(\log n)$ \cite{korman2005proof}, but this result heavily relies on node identifiers and logarithmic certificates. Instead, we consider a different approach based on graph colorings and finite-state dynamics on graphs. 

First, we divide the task of showing that $\leader$ is in the hierarchy by considering two auxiliary languages: the language $\alos$, consisting of instances $(G,x)$ such that \emph{at least one node is selected} (i.e., $|M(x)|\geq 1$), and the language $\amos$, consisting of instances $(G,x)$ such that \emph{at most one node is selected} (i.e., $|M(x)|\leq 1$). 

\subsubsection{ALOS is in the third level of the SET hierarchy}

We begin showing that $\alos$ is at the third level of the \seth hierarchy

\begin{restatable}{lemma}{thmALOS}\label{th:alos}
    \textnormal{$\alos$} is in the third level of the \seth  hierarchy.
\end{restatable}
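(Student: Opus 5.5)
The plan is a three-round protocol with quantifier pattern $\exists\ell_1\,\forall\ell_2\,\exists\ell_3$. The first (prover) round encodes a ``coloring orientation'' that plays the role of a spanning forest rooted at selected nodes. The last two rounds implement an acyclicity test based on the non-surjectivity of a XOR-type finite-state map.

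\textbf{Round 1: the orientation.} The prover gives each node $v$ a counter $d(v)\in\{0,1,2\}$. Call a neighbor $u$ of $v$ a \emph{parent} of $v$ if $d(u)\equiv d(v)-1 \pmod 3$. In the SET view, $v$ can tell which neighbor counters occur, so it can check whether it has a parent and can read the set of labels carried by its parents.

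\textbf{Rounds 2 and 3: the XOR test.} The disprover assigns a bit $y(v)$ to every node, and the prover answers with a bit $z(v)$. The local verifier at $v$ checks the following.
\begin{itemize}
\item If $x(v)=1$ (selected), it checks $z(v)=y(v)$.
\item If $x(v)=0$, it checks that $v$ has at least one parent and that
\[
z(v)=y(v)\oplus \bigvee_{u \text{ parent of } v} z(u).
\]
\end{itemize}
The disjunction depends only on the \emph{set} of $(d,z)$-labels seen among neighbors, so the check is legal in the \seth model.

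\textbf{Completeness.} Suppose $M(x)\neq\emptyset$. The prover sets $d(v)$ to the distance from $v$ to $M(x)$, taken mod $3$. Adjacent nodes have distances differing by at most one, so the parents of $v$ under this rule are exactly its neighbors at true distance $\mathrm{dist}(v,M(x))-1$. Every unselected node therefore has a parent, since $G$ is connected and $M(x)$ is nonempty. Given any challenge $y$, the prover computes $z$ layer by layer by increasing true distance. Selected nodes get $z=y$, and every other node's value is determined by values already fixed in the previous layer. All nodes then accept.

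\textbf{Soundness.} Suppose $M(x)=\emptyset$, and fix any first-round assignment $d$. If some node has no parent, it rejects regardless of $y$ and $z$, and we are done. Otherwise every node has a parent. Define
\[
\Phi:\{0,1\}^V\to\{0,1\}^V,\qquad \Phi(z)(v)=z(v)\oplus\bigvee_{u\text{ parent of }v}z(u).
\]
All nodes accept exactly when $y=\Phi(z)$. Because every node has at least one parent, both the all-zeros and the all-ones assignment are mapped by $\Phi$ to the all-zeros vector. So $\Phi$ is not injective on the finite set $\{0,1\}^V$, hence not surjective. The disprover picks a $y$ outside the image of $\Phi$, and then no third-round $z$ makes all nodes accept.

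\textbf{Main obstacle.} The delicate point is the completeness direction. I need the mod-$3$ counter to recover the true ``parent'' relation of the BFS layering, with no spurious parents among same-layer or next-layer neighbors. This holds because neighbors differ in distance by at most one, so $d-1$, $d$, $d+1$ are distinct mod $3$. I also need the top-down evaluation to be well defined using only set information. The OR over parents, rather than a XOR over a multiset, is exactly what makes the rule computable in the \seth view. The collision $\Phi(\mathbf 0)=\Phi(\mathbf 1)$ is what turns ``every node has a parent'' into a counting contradiction that the universal player can exploit.
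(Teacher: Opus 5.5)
Your protocol is correct, and it proves the lemma by a noticeably simpler route than the paper. The outer structure is the same as the paper's: an existential mod-$3$ orientation, then a universal bit labeling $y$, then an existential bit labeling $z$. Completeness is also the same, assigning $z$ layer by layer along a BFS layering.

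The difference lies in how successor values are aggregated and how soundness is obtained. The paper adds a root mark $\rho$ and lets the disprover choose a red set $R$. A red node uses a successor value only when it sees a single $(a,z)$-type among its red successors. For soundness, the disprover takes $R$ to be a shortest directed cycle of the successor relation, so each red node has exactly one red successor. The local checks then coincide with the genuine XOR map on a directed cycle, and non-surjectivity follows from \Cref{lem:dycycle} and \Cref{lemm:XORsurj}.

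You instead combine all visible parent values by OR, which is directly checkable in the \seth view. You get non-injectivity from the single collision $\Phi(\mathbf 0)=\Phi(\mathbf 1)$ as soon as every node has a parent. This removes the red set, the root mark, the auxiliary bits and the cycle extraction, and it shrinks the certificates.

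What the paper's route buys is the explicit $D_R$/acyclicity machinery, which it reuses for \spann{} in \Cref{cor:span}. Your OR map would also cover that use. Let $W$ be the set of nodes from which a directed cycle of the successor relation is reachable. Every node of $W$ has a successor in $W$, and no node outside $W$ does. Hence the indicator $\mathbf 1_W$ satisfies $\Phi(\mathbf 1_W)=\mathbf 0$, exactly as $A(C)$ does in the proof of \Cref{lemm:XORsurj}. So the OR map is surjective if and only if the successor relation is acyclic, with no out-degree restriction.

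The only detail to add is that a node whose input lies outside $\{0,1\}$ rejects, which the paper states explicitly.
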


The idea of the proof consists of two ingredients. The first ingredient is the
use of \emph{coloring orientations}. Given an instance \((G,x)\), the prover
assigns a root mark and a \(3\)-coloring \(c:V\to\{0,1,2\}\). A node \(v\)
colored \(c(v)\) considers as possible successors the neighbors with color
\(c(v)-1\mod 3\). The root is required to be a selected node and to have no
successor, while every other node is required to have at least one possible
successor.

In the \seth model, the coloring \(c\) alone does not define a unique successor:
nodes can detect which colors appear among their neighbors, but they cannot
distinguish or count neighbors with the same color. For this reason, in the second round the disprover chooses a set \(R\subseteq V\) of red nodes, and in the third round the prover gives an auxiliary constant-size mark. A red node \(v\) looks at its red successors, that is, red neighbors with color \(c(v)-1\mod 3\). If \(v\) sees a unique red-successor type, then it uses this type as its successor. If it sees no red-successor type, or if the auxiliary marks make two different red-successor types visible, then \(v\) treats itself as having no successor. In this way, each red node locally decides whether to use one
visible red-successor value in its XOR equation or no successor
value at all.

The key remark is that \((G,x)\in\alos\) if and only if there exists a
valid rooted \(3\)-coloring \((c,\rho)\) such that, for every set
\(R\subseteq V\), the graph \(D_R=D(G,c,\rho,R)\) is acyclic. In a
yes-instance, the prover can take a BFS coloring modulo \(3\) rooted at a
selected node. Then every directed edge decreases the distance to the root, so
no such graph \(D_R\) contains a directed cycle. Conversely, if no node is
selected, then every valid coloring forces each node to have at least one
possible successor, and the disprover can choose a red directed cycle in which
each red node has exactly one red successor.

The second ingredient comes from a finite-state dynamics on directed graphs. 
In this context, an element of \(\{0,1\}^V\) is called a \emph{configuration}.
Given a directed graph \(D=(V,E)\), we define the map
\(\xor^D:\{0,1\}^V\to\{0,1\}^V\) by
\[
\xor^D(z)_v=
\left(
z_v+\sum_{w\in N^+(v)}z_w
\right)\mod 2.
\]
We show that, restricted to directed graphs of maximum out-degree at most
\(1\), the map \(\xor^D\) is surjective if and only if \(D\) contains no
directed cycle.

Putting everything together, the protocol for \(\alos\)
has three rounds. First, the prover gives a rooted
\(3\)-coloring \((c,\rho)\). Second, the disprover
chooses a red set \(R\subseteq V\) and a bit labeling
\(y\). Third, the prover gives auxiliary bits \(a\)
and a bit labeling \(z\).

At each red vertex \(v\), the verifier considers the
set of distinct visible red-successor types
$A(v)=\{(a(u),z(u)):u\in S_R(v)\}.$
If \(A(v)=\{(\alpha,b)\}\), the verifier checks
\(y(v)=z(v)+b\pmod 2\). Otherwise, it checks
\(y(v)=z(v)\).

These equations are locally checkable in the \seth
model, but they do not necessarily coincide with
those of \(\xor^{D_R}\): several red successors
may carry the same pair \((a,z)\) and therefore
appear as a single type in the local view.

Nevertheless, when \(R\) is the vertex set of a
shortest directed cycle in the successor relation,
every red vertex has exactly one red successor.
In this case, the equations checked by the
verifier coincide with those of \(\xor^{D_R}\),
regardless of the auxiliary bits. This is the
key observation used in the soundness proof.

If \((G,x)\) is a yes-instance, the prover chooses a BFS
coloring modulo \(3\) rooted at a selected node. Hence every
possible successor is strictly closer to the root. For any
challenge \((R,y)\), the prover can then construct the final
configuration \(z\) in increasing order of distance from the
root, choosing each bit so that the local XOR equation at that
vertex is satisfied.

If \((G,x)\) is a no-instance, the disprover chooses a red
directed cycle on which every red node has exactly one red
successor, and then chooses a configuration with no preimage
under the corresponding XOR map. Hence at least one node must
reject. We refer to \Cref{sec:ALOS} for the full proof.

\begin{remark}
Similar ideas are used in~\cite{reiter2024local}
to establish the expressiveness of the local
second-order logic characterizing the locally
polynomial hierarchy (see Example 4 therein).
Our models, however, are substantially weaker,
and such constructions must be revisited with care.
In particular, Hamiltonicity, which is expressible
in the local second-order logic considered
in~\cite[Example 6]{reiter2024local}, cannot be
recognized even in our strongest model, \portsh,
as established in \Cref{th:ports-ham-pm}.
\end{remark}

\subsubsection{AMOS is in the second level of the SET hierarchy}

Next, in \Cref{sec:AMOS}, we study the problem $\amos$ and show that
it belongs to the second level of the \seth hierarchy.

\begin{restatable}{lemma}{thmAMOS}\label{th:amos}
    \textnormal{$\amos$} is in the second level of the \seth hierarchy.
\end{restatable}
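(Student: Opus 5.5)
The plan is a two-round protocol of the form $\forall\exists$. The disprover first challenges the claim, and the prover then answers with a single global bit whose consistency the nodes check locally. The key observation is that, on a connected graph, a bit that every node checks to be equal to all of its neighbors' bits is forced to be constant over the whole graph. The SET view is enough for this check: node $v$ only has to verify that the set of neighboring values is $\{b(v)\}$.

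\emph{The protocol.} In the first round, the disprover assigns a bit $\ell_1(v)\in\{0,1\}$ to every node. Only the values at selected nodes $v$ (those with $x(v)=1$) matter; we read them as a red/blue coloring of $M(x)$. In the second round, the prover assigns a bit $\ell_2(v)=b(v)$ to every node. The verifier at $v$ accepts if and only if the following two conditions hold:
\begin{itemize}
\item every certificate sequence in $\view{\seth}{v}$ has second component equal to $b(v)$;
\item if $x(v)=1$, then $\ell_1(v)=b(v)$.
\end{itemize}
Both conditions depend only on $v$'s own input and labels and on the set of neighboring label sequences, so this is a valid \seth verifier with constant-size labels.

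\emph{Completeness.} Suppose $|M(x)|\le 1$. If $M(x)=\{u\}$, then for any challenge $\ell_1$ the prover sets $b\equiv\ell_1(u)$ on all nodes. If $M(x)=\emptyset$, the prover sets $b\equiv 0$. In either case the first check passes everywhere because $b$ is constant. The second check passes because the only selected node, if there is one, has $\ell_1(u)=b$. So every challenge has a winning response.

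\emph{Soundness.} Suppose $|M(x)|\ge 2$, and pick distinct $u,w\in M(x)$. The disprover sets $\ell_1(u)=0$ and $\ell_1(w)=1$, with arbitrary values elsewhere. Take any prover answer $b$ under which all nodes accept. The first check then gives $b(v)=b(v')$ for every edge $vv'$, so by connectivity $b$ is constant, say $b\equiv\beta$. The second check at $u$ and at $w$ then gives $\beta=0$ and $\beta=1$, a contradiction. Hence some node rejects for every prover response to this challenge.

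The argument is elementary, and I expect no real obstacle. The one point to handle carefully is the quantifier convention: I must confirm that the second level of the hierarchy, as formalized in \Cref{sec:def}, admits the pattern $\forall\ell_1\exists\ell_2$ (disprover first). If the second level were instead required to start with the prover, I would prepend a dummy empty certificate, which moves the protocol to the pattern $\exists\forall\exists$. Since the statement places \amos{} at level $2$, I expect the $\forall\exists$ pattern to be available.
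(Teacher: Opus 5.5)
Your protocol is the paper's own: a universal bit, then an existential bit that each node checks against all its neighbors (so connectivity forces it to be constant), plus the check that this bit equals the challenge bit at selected nodes, with the same completeness and soundness arguments. Level $2$ is defined as $\Sigma_2\cup\Pi_2$, so the $\forall\exists$ pattern is allowed; the one detail you omit is the check $x(v)\in\{0,1\}$, rejecting otherwise, without which non-binary inputs (not in \amos{}) would be accepted.
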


The protocol has quantifier pattern \(\forall\exists\).
The disprover assigns bits \(d(v)\), and the prover responds with
bits \(a(v)\). Each vertex checks that its \(a\)-value agrees with
those of all its neighbors and, if selected, that \(a(v)=d(v)\).
Since \(G\) is connected, \(a\) must be constant. If at most one
vertex is selected, the prover can satisfy every challenge;
otherwise, the disprover assigns different bits to two selected
vertices, forcing rejection. See \Cref{sec:AMOS}.

\subsubsection{MSO is contained in the SET hierarchy}

Now we use the previous constructions to show that every graph
property expressible in monadic second-order logic belongs to the \seth
hierarchy. More precisely, we show the following lemma: 

\begin{restatable}{lemma}{lemMSOinSET}
\label{lem:MSOinSET}
Every MSO-definable distributed language over a finite
input alphabet belongs to the \seth hierarchy.
\end{restatable}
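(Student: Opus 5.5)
We prove \Cref{lem:MSOinSET} by induction on the structure of an MSO formula, always working with formulas with free variables. For an MSO formula \(\varphi(X_1,\dots,X_m,y_1,\dots,y_r)\) over graphs with a finite input alphabet, we encode the free variables as extra constant-size input bits. Each node \(v\) receives one bit per set variable, namely \([v\in X_i]\), and one bit per first-order variable, namely \([v=y_j]\). We then show that the resulting language \(L_\varphi\) lies in the \seth hierarchy, namely the set of configurations in which the \(y_j\)-bits mark exactly one node each and \(\varphi\) holds under this interpretation. To keep the treatment of first-order variables uniform, we first normalize the formula so that every first-order variable is replaced by a singleton set variable. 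The correctness of each \(y_j\)-encoding is then exactly the language \(\leader\) applied to that bit, and this lies in \(\hier{\seth}_3\) by \Cref{cor:leader}.

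The atomic cases are local. The atom \(X\subseteq Y\) is checked at each node from its own bits, giving level \(0\) or \(1\). The input-label predicates are also checked from each node's own input. The adjacency atom \(\mathrm{adj}(X,Y)\) between singletons holds iff the unique \(X\)-node sees a \(Y\)-marked neighbor, and this is visible in the \seth view of that node. For the inductive steps, we need three closure properties of the \seth hierarchy.

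\emph{Existential set quantification.} \(\exists X\,\psi\) is obtained by letting the prover first guess the bits of \(X\), then running the protocol for \(\psi\). Prepending an existential round raises the level by at most one.

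\emph{Conjunction and disjunction.} For conjunction, we run both protocols in parallel. We align their quantifier prefixes by padding with dummy rounds, concatenate the labels of corresponding rounds, and accept iff both verifiers accept. Disjunction is the delicate case, because acceptance is global: every node must accept. We let the prover, in an initial round, write a global bit \(b\) at every node. Each node checks that its own \(b\) equals all neighbors' \(b\), which is visible in the set view, so by connectivity \(b\) is constant. The nodes then run the protocol for the disjunct selected by \(b\), with the other disjunct's labels ignored.

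\emph{Negation.} This is the main obstacle. Universality of acceptance is not symmetric under complement, so \(\neg\psi\) cannot be handled by dualizing quantifiers and negating the local verifier: negating it turns "all nodes accept" into "some node rejects." The plan is to dualize the quantifier prefix, so that \(\exists\) and \(\forall\) swap roles. After the final round, we certify \emph{"some node rejects"} using the \(\alos\) machinery of \Cref{th:alos}. The set of nodes whose original verifier rejects is a locally computed marking \(M\), and we append the three rounds of the \(\alos\) protocol for \(M\) after the dualized rounds. Since the \(\alos\) protocol certifies "at least one marked node" for a marking that is locally determined, each node computes its own mark from its \seth view and runs the \(\alos\) checks with it. Because the \(\alos\) construction is a 3-coloring orientation plus the XOR acyclicity test, with an \(\exists\forall\exists\) pattern, appending it keeps a bounded level.

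One care point is that the \(\alos\) rounds must be merged after the last dualized quantifier without breaking soundness. The quantifier blocks compose correctly because the game semantics is a plain sequential quantifier string. If the final dualized quantifier is \(\exists\), it merges with the first \(\alos\) round. We must also check that the \(\alos\) verifier only needs, at each node, its own mark plus set information about neighbors; the marks of neighbors are recomputable only up to radius one, so we let the prover re-announce each node's mark as a certificate bit and have each node verify its own. With these closure properties, induction on the formula yields the lemma. The base case for sentences has no free variables.
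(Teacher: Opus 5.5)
Your plan is essentially the paper's proof: induction on the formula with variables represented as bit labelings, and first-order variables guarded by \leader. Disjunction uses a globally constant existential selector, and negation dualizes the prefix and then appends the \alos protocol on the locally rejecting nodes. These last two steps are exactly \Cref{lem:intun} and \Cref{cor:comp}.

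The one place you depart from the paper is that you encode free variables as \emph{input} bits, whereas the paper treats them as previously assigned certificate fields. This has a cost you overlooked. In the \seth model a node cannot read its neighbors' inputs, so your adjacency check (``the unique $X$-node sees a $Y$-marked neighbor'') is not available from the \seth view as you claim. It is easily repaired with one existential round in which the prover copies $Y$ into a certificate and every node checks its copy against its own input (\Cref{rem:simple}). This is the same device you already use in the negation step. There it is actually unnecessary, because the \alos verifier reads only a node's own mark.

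In exchange, your encoding makes every intermediate $L_\varphi$ a genuine distributed language, so \Cref{lem:intun} and \Cref{cor:comp} apply verbatim. The paper, by contrast, has to assert that these lemmas hold uniformly when visible fields have already been assigned.

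One bookkeeping point: if $L_\varphi$ builds in the singleton conditions, then $L_{\neg\varphi}$ is the complement intersected with those conditions (and with well-formedness of the inputs), not the plain complement.
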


The proof follows the natural interpretation of logical variables as certificate
assignments. A monadic second-order variable \(X\) is represented by a bit
labeling of the nodes: a node \(v\) receives value \(1\) if and only if
\(v\in X\). A first-order variable \(x\) is also represented by a bit labeling,
but in this case the labeling must select exactly one node. This condition is
precisely the language \(\leader\), which we have already shown to belong to the
third level of the \seth hierarchy. Thus, whenever a first-order variable is
introduced, the protocol also verifies that the corresponding labeling is a
valid singleton assignment.

Each logical quantifier is simulated by quantifying over the
corresponding bit labeling. Set variables require no additional
condition. For a first-order variable, however, the labeling must
select exactly one node. This condition has to be treated differently
for existential and universal quantifiers. An existential choice is
accepted only if the labeling is a valid leader labeling, whereas an
invalid labeling chosen by a universal quantifier must be ignored.
Thus, schematically,
\[
\exists x\,\psi
\quad\longrightarrow\quad
\exists \ell_x\,
\bigl(\leader(\ell_x)\land\widehat{\psi}\bigr),
\]
whereas
\[
\forall x\,\psi
\quad\longrightarrow\quad
\forall \ell_x\,
\bigl(\neg\leader(\ell_x)\lor\widehat{\psi}\bigr).
\]
Since \leader and its complement belong to the hierarchy, these
validity conditions can be incorporated using the closure properties
proved above.

It remains to explain how the quantifier-free part of the formula is verified.
The atomic formulas of MSO are of three types:
\[
x=y,\qquad x\sim y,\qquad x\in X.
\]
Each of these can be checked locally once the corresponding variable assignments
are encoded by node labelings. Equality \(x=y\) is verified by checking that the
two singleton labelings mark the same node. Adjacency \(x\sim y\) is verified by
the unique node marked by \(x\), which checks whether it sees a neighbor marked
by \(y\). Membership \(x\in X\) is verified by the node marked by \(x\), which
checks whether it also belongs to the set \(X\).

Boolean combinations of atomic formulas are handled using the closure of the
hierarchy under finite intersections, unions, and complement. Therefore the
whole quantifier-free part of the MSO formula can be verified inside the
\seth hierarchy. Combining this with the simulation of the quantifier prefix
shows that every MSO-definable language belongs to the \seth hierarchy. We refer
to \Cref{sec:MSOinSET} for the complete proof.

\subsubsection{The SET hierarchy is contained in MSO}

We next prove the converse direction: every language in the \seth hierarchy is MSO-definable. 

\begin{restatable}{lemma}{lemSETinMSO}
\label{lem:set-in-mso}
Let \(L\in\hier{\seth}_k\) be a language over a finite input
alphabet. Then \(L\) is MSO-definable.
\end{restatable}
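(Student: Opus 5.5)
We translate the protocol directly into an MSO sentence, with each certificate layer encoded by a tuple of monadic set variables. Fix \(L\in\hier{\seth}_k\), with constant \(c\), verifier \(A\), and finite input alphabet \(\Sigma\). Each label \(\ell_i(v)\in\{0,1\}^{\le c}\) comes from a finite set \(\Lambda\) of size at most \(2^{c+1}\). Encode \(\ell_i\) by a tuple of set variables \(\overline{X}_i=(X_{i,\lambda})_{\lambda\in\Lambda}\), where \(X_{i,\lambda}\) is meant to be the set of nodes whose \(i\)-th label is \(\lambda\). The inputs are encoded by the unary predicates \(P_\sigma\), \(\sigma\in\Sigma\), of the input structure. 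The quantifier \(\quantif_i\ell_i\) becomes the block \(\quantif_i\overline{X}_i\).

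\textbf{Validity of the encoding.} We must handle tuples that do not encode a function. Let \(\mathrm{Part}(\overline{X}_i)\) be the first-order formula stating that the sets \(X_{i,\lambda}\) partition \(V\). Existential blocks are relativized as \(\exists\overline{X}_i(\mathrm{Part}(\overline{X}_i)\wedge\cdots)\). Universal blocks are relativized as \(\forall\overline{X}_i(\neg\mathrm{Part}(\overline{X}_i)\vee\cdots)\). Under this relativization, the quantified tuples range exactly over the labelings \(\ell_i:V\to\Lambda\).

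\textbf{The verifier is MSO-definable.} This is the core step, and it is where the \seth restriction is used. Since \(A\) is anonymous and does not depend on \(v\)'s identity, its decision at \(v\) depends only on three things: \(x(v)\in\Sigma\), \(\overline{\ell}(v)\in\Lambda^k\), and \(\view{\seth}{v}\subseteq\Lambda^k\). The last is a subset of the finite set \(\Lambda^k\). So \(A\) induces a finite table \(\Accept\subseteq\Sigma\times\Lambda^k\times 2^{\Lambda^k}\). For \(\tau=(\tau_1,\dots,\tau_k)\in\Lambda^k\), define
\[
\Type_\tau(u)=\bigwedge_{i=1}^k X_{i,\tau_i}(u).
\]
For \(S\subseteq\Lambda^k\), define
\[
\SetView_S(v)=\bigwedge_{\tau\in S}\exists u\,\bigl(u\sim v\wedge\Type_\tau(u)\bigr)\wedge\bigwedge_{\tau\notin S}\neg\exists u\,\bigl(u\sim v\wedge\Type_\tau(u)\bigr).
\]
The acceptance condition is then the finite disjunction
\[
\varphi_{\mathrm{acc}}(v)=\bigvee_{(\sigma,\tau,S)\in\Accept}\bigl(P_\sigma(v)\wedge\Type_\tau(v)\wedge\SetView_S(v)\bigr).
\]
The full sentence is \(\quantif_1\overline{X}_1\cdots\quantif_k\overline{X}_k\,\forall v\,\varphi_{\mathrm{acc}}(v)\), with the relativizations above. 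Only existence of a neighbor of each type is ever needed, never a count, so plain MSO suffices. This is exactly why \(\mathrm{CMSO}\) is not needed here, in contrast with \multiseth.

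\textbf{Correctness.} Under the bijection between labelings and valid partition tuples, the game semantics of the protocol coincides with the semantics of the sentence, by induction on the quantifier prefix. One point needs care: the protocol must be correct for every port numbering, but in \seth the view does not depend on \(\pi\), so the condition is the same for all \(\pi\) and can be dropped.

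\textbf{Main obstacle.} The step needing most care is the justification that \(A\)'s behaviour is a finite table. We must argue that \(A\) receives no further information: no identifiers, no degree beyond what the set view reveals, and the same input \(\Sigma\) everywhere. Because \(\Sigma\) is finite, \(A\) restricted to the finitely many possible arguments is a finite object, even if \(A\) itself is an arbitrary function. The sentence therefore exists, though it need not be effectively computable from \(A\).
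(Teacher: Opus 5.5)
Your proposal is correct and is essentially the paper's proof: partition-relativized tuples of set variables for each certificate round, a type formula for a node's own input and certificates, a set-view formula fixing exactly which neighboring certificate types occur, and a finite disjunction over the accepting local views under the translated quantifier prefix. Your conjunction of negated existentials in $\SetView_S$ is equivalent, under the partition constraints, to the paper's single universal clause; the one small adjustment is to quantify over the protocol's actual per-round alphabets $\Gamma_i$, as the paper does, rather than a uniform $\Lambda=\{0,1\}^{\le c}$, because under the preliminaries' definition (fixed per-round alphabets with a total bound of $c$ bits per node) letting the universal player use labels outside the intended alphabet could change the semantics.
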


The key observation is that a \seth verifier has
access only to a finite local view: its own input
and certificate sequence, together with the set
of certificate sequences assigned to its neighbors.
Since the input and certificate alphabets are finite,
there are only finitely many such views. Importantly,
the verifier cannot read the inputs of its neighbors,
and the neighborhood view does not preserve multiplicities.

This information can be described in MSO. Each certificate labeling is encoded by a finite tuple of monadic second-order variables, one set for each possible certificate value. These sets are required to form a partition of the vertex set, so that every node receives exactly one certificate in each round. Then, for every possible node type, there is an MSO formula saying that a node has that type.

Similarly, for every finite set \(T\) of
neighboring certificate types, there is
an MSO formula expressing that the set
of certificate sequences appearing among
the neighbors of a node is exactly \(T\).
This formula requires every type in \(T\)
to appear at least once and excludes all
other neighboring certificate types.
Crucially, it does not inspect the input
values of neighboring nodes.

Since the verifier is fixed, there are only finitely many accepting  local views. Hence the condition that a node accepts is a finite disjunction of MSO formulas, one for each accepting view. Finally, the quantifier rounds of the
distributed hierarchy are translated into MSO quantifiers over the corresponding certificate sets: existential certificate rounds become existential monadic quantifiers, and universal certificate rounds become universal monadic
quantifiers.

Thus, the whole statement “there exist/for all certificate assignments, every node accepts” can be translated into an MSO sentence. We refer
to \Cref{sec:SETinMSO} for the complete proof. We obtain that every language in the \seth hierarchy is MSO-definable. Consequently, the \seth hierarchy and MSO have the same expressive power. 

\subsubsection{Trading horizon for interaction}

We also consider fixed-radius versions of \seth. In the
\emph{weak \(r\)-horizon model}, a vertex sees its own input and
certificates together with the set of certificate sequences at
other vertices within distance \(r\). This model is not monotone:
at radius two, every vertex of both \(K_{1,n}\) and \(K_{n+1}\)
sees all other vertices, so stars cannot be recognized, although
they can at radius one.

The \emph{strong \(r\)-horizon model} retains the original
neighborhood view and additionally provides, for each
\(2\leq j\leq r\), the set of certificate sequences in the
radius-\(j\) ball. We denote by \(\sethr{r}{k}\) the class of
languages recognized with \(k\) alternating certificate
assignments in this model. It is monotone in \(r\), and
\(\sethr{1}{k}=\hier{\seth}_k\).

We prove that, in this strong model, larger constant horizons do not add
expressive power to the hierarchy. More precisely, we show the following
theorem:

\begin{restatable}{theorem}{thmSEThorizon}\label{th:set-horizon-collapse}
For every fixed \(r\geq 1\),
$
\sethr{r}{~} = \bigcup_{k\geq0}\sethr{r}{k}
=
\bigcup_{k\geq0}\sethr{1}{k} = \mathsf{SET}.$
\end{restatable}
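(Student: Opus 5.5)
The plan is to go through logic, in the same way as \Cref{lem:set-in-mso}. One inclusion is immediate. By monotonicity in \(r\) and the identity \(\sethr{1}{k}=\hier{\seth}_k\), we have \(\mathsf{SET}\subseteq\sethr{r}{~}\). For the converse, we show that every language in \(\sethr{r}{k}\) over a finite input alphabet is MSO-definable. We then invoke \Cref{thm:set-equals-mso}, specifically the direction \Cref{lem:MSOinSET}, to land back in \(\mathsf{SET}\).

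To carry this out, we redo the translation of \Cref{lem:set-in-mso} with larger views. As before, each certificate round \(\ell_i\) is encoded by a tuple of monadic set variables forming a partition of \(V\), one set per possible certificate value. The quantifier pattern of the rounds is reproduced by the corresponding monadic quantifiers. The verifier's view at \(v\) in the strong \(r\)-horizon model has three parts: its own input and certificate sequence, the set of sequences on \(N(v)\), and, for each \(2\le j\le r\), the set of sequences occurring in the ball \(B_j(v)\). Since the alphabets are finite, there are only finitely many views, so acceptance is a finite disjunction over accepting views.

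The only new ingredient is expressing ``\(w\in B_j(v)\)'' in MSO. The relation \(\dist(v,w)\le j\) for fixed \(j\) is first-order definable: \(\exists u_1\ldots u_{j-1}\) forming a walk of length at most \(j\) from \(v\) to \(w\), where equalities are allowed so that shorter walks are covered. Hence, for each finite set \(T\) of certificate sequences, we can write the statement ``the set of sequences appearing in \(B_j(v)\) is exactly \(T\)''. This formula requires that each \(t\in T\) is witnessed by some \(w\) within distance \(j\) lying in the partition classes of \(t\), and that no \(w\) within distance \(j\) lies in the classes of a sequence outside \(T\). Conjoining these statements with the type formulas of \Cref{lem:set-in-mso} gives, for each accepting view, a formula \(\mathrm{view}_\tau(v)\). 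The full sentence is then
\[
\quantif_1\overline{X}_1\cdots\quantif_k\overline{X}_k\;\Bigl(\bigwedge_i \Part(\overline{X}_i)\ \text{(relativized to the quantifier type)}\ \wedge\ \forall v\bigvee_{\tau\ \text{accepting}}\mathrm{view}_\tau(v)\Bigr).
\]
Here, as in the earlier lemma, the partition conditions are conjoined under existential blocks and appear as hypotheses of an implication under universal blocks. Correctness is checked round by round, exactly as before.

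The main obstacle I expect is purely one of bookkeeping. We must make sure that the definition of the strong model does not let the verifier see neighbours' inputs or multiplicities beyond what was described; otherwise the formula must include input predicates. That is harmless over a finite alphabet, since inputs are unary predicates in MSO, so even that variant is MSO-expressible. We also need the equality \(\sethr{1}{k}=\hier{\seth}_k\) and the monotonicity, which are stated in the text. The final conclusion, chaining the inclusions, gives \(\sethr{r}{~}=\mathrm{MSO}=\mathsf{SET}\) for every fixed \(r\). The level may increase, which is consistent with the statement, since only the unions are claimed equal.
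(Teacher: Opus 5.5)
Your translation is correct, but it only proves the theorem for languages over a finite input alphabet, so there is a genuine gap in scope. In this paper a distributed language may have arbitrary inputs $x:V\to\{0,1\}^*$. Both $\sethr{r}{k}$ and $\hier{\seth}_k$ are defined for such languages, and neither the theorem nor the paper's proof assumes a finite input alphabet. Your chain $\sethr{r}{k}\to\mathrm{MSO}\to\mathsf{SET}$ needs inputs to be unary predicates from a fixed finite vocabulary, and \Cref{thm:set-equals-mso} is only available under that hypothesis. For a language in $\sethr{r}{k}$ with unbounded inputs, your argument therefore gives nothing. You state the restriction when setting up the translation, but then conclude the unrestricted equality.

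You can close the gap within your approach. The certificate part of a strong $r$-horizon view ranges over the finite set $\mathcal C\times(2^{\mathcal C})^{r}$. Hence the verifier depends on $x(v)$ only through the Boolean function $F_{x(v)}=A(x(v),\cdot)$ on that set. This function is computable from $x(v)$ and takes finitely many values, which is the same observation as in \Cref{lem:langfrag}. Replacing each input by its class gives a finite-alphabet language $L'\in\sethr{r}{k}$ with $(G,x)\in L\iff(G,F\circ x)\in L'$. Apply your argument to $L'$, then precompose the resulting \seth verifier with $x(v)\mapsto F_{x(v)}$.

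The paper instead avoids logic entirely. After the original rounds, the prover supplies sets $T_0(v),\ldots,T_r(v)\subseteq\mathcal C$. Each vertex checks $T_0(v)=\{\gamma(v)\}$ and $T_j(v)=T_0(v)\cup\bigcup_{u\in N(v)}T_{j-1}(u)$, using only its set view. Induction shows that $T_j(v)$ is exactly the set of sequences in $B_j(v)$, so the radius-one verifier reconstructs the strong view and simulates $A$. That route works for any inputs. It also gives the sharp bound $\sethr{r}{k}\subseteq\sethr{1}{k+1}$, whereas going through MSO and \Cref{lem:MSOinSET} only lands in some much higher level that depends on the formula.
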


The prover certifies the sets \(T_j(v)\) of certificate
sequences appearing in \(B_j(v)\), for \(0\leq j\leq r\).
Writing \(\gamma(v)\) for the certificate sequence of \(v\),
the verifier checks \(T_0(v)=\{\gamma(v)\}\) and
\[
T_j(v)=T_0(v)\cup\bigcup_{u\in N(v)}T_{j-1}(u).
\]
These checks use only the ordinary \seth view, since
multiplicities are irrelevant. The certified sets
reconstruct the additional horizon information and can
be merged into the last existential assignment, or
supplied in one additional existential round.

\subsection{The power of the PORTS hierarchy}

Next, we focus on understanding the expressive power of the
\portsh hierarchy. Clearly, this hierarchy contains both the
\seth and \multiseth hierarchies, since a \portsh verifier
can recover the corresponding views by ignoring port numbers
and, if necessary, multiplicities. Therefore, the languages
\(\amos\), \(\alos\), and \(\leader\) also belong to the
\portsh hierarchy. Moreover, the closure properties proved
for the \seth hierarchy, namely closure under finite
intersections, finite unions, and complement, extend directly
to the \portsh hierarchy. In particular, every MSO-definable
language belongs to the \portsh hierarchy.


To go beyond MSO, we introduce an additional ingredient:
the ability to count selected nodes modulo a fixed constant.
This is where port numbering becomes useful. By distinguishing
neighbors through their local ports, nodes can select a unique
parent in a certified spanning tree and verify local aggregation
constraints involving the contributions of all their children.

More precisely, for every fixed \(p\geq 2\) and
\(0\leq q<p\), we define the language \(\lcount{q,p}\)
of instances in which the number of selected nodes is
congruent to \(q\) modulo \(p\), and show that
\(\lcount{q,p}\) belongs to the \portsh hierarchy.

\begin{restatable}{lemma}{thmCount}\label{lemm:count}
For all $q,p\in \mathbb{N}$, with $p\geq 1$, and $0\leq q <p$,    \textnormal{$\lcount{q,p}$} belongs to $\hier{\portsh}_{9}$.
\end{restatable}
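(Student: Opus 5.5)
We prove that \(\lcount{q,p}\) lies in \(\hier{\portsh}_{9}\) by certifying a rooted spanning tree in which every non-root vertex points to its parent through a port. The number of selected vertices is then aggregated modulo \(p\) along the tree. The difficulty is that an anonymous verifier with constant-size labels cannot check acyclicity of the parent pointers directly. We therefore reuse the machinery behind \Cref{th:alos} and \Cref{cor:leader}, and rely on closure under intersection (and complement) of the hierarchy, which also holds in \portsh.

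\textbf{Step 1 (root).} The prover marks a root set \(\rho\). We intersect with the \leader protocol applied to \(\rho\), a \(\forall\exists\forall\)-style check placed at level \(3\), so that exactly one root is marked.

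\textbf{Step 2 (parent pointers).} Each non-root vertex \(v\) receives a parent port \(P(v)\in\{1,\ldots,\deg(v)\}\). This is not constant size, so we encode it using the \portsh view. The prover also gives a \(3\)-coloring \(c\) as in the \alos protocol, so that the parent has color \(c(v)-1 \bmod 3\). In addition, each vertex gets a bit \(\beta(v)\), and the parent is defined as the neighbor at the smallest port number whose color is \(c(v)-1\) and whose bit \(\beta\) equals a designated ``chosen'' value.

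More simply, we can avoid encoding the parent at all: \(v\) defines its parent canonically as the neighbor of color \(c(v)-1 \bmod 3\) at the smallest port. This is a deterministic function of \(v\)'s \portsh view and needs no extra label. The children of \(u\) are then those neighbors \(w\) that chose \(u\). The vertex \(u\) cannot see this through its own ports, so each \(w\) must announce it. To do so, \(w\) carries a bit \(\mathrm{ch}(w)\) plus its color, and the aggregation uses the pair (value, is-child-of-lowest-port). Since \(u\) cannot tell which neighbor \(w\) selected, the cleanest fix is to use the successor relation of \(c\) itself as a DAG rather than as a tree. I would instead let the prover give each \(w\) a small label naming a ``slot'' class, and exploit port numbers at \(u\) so that \(u\) sums the partial counts \(s(w)\in\Z_p\) exactly over the neighbors \(w\) flagged as children.

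Correctness of the flags is checked by a second tree-independent trick. With challenge bits, the disprover tests, via an \amos-like \(\forall\exists\) gadget restricted to each closed neighborhood, that every non-root vertex flags exactly one parent. Multiplicities are available here through ports.

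\textbf{Step 3 (acyclicity).} The graph of flagged parent edges has out-degree exactly \(1\) at non-roots. If it has no directed cycle it is a spanning tree rooted at \(\rho\), since \(G\) is connected. We reuse the XOR surjectivity criterion from \Cref{th:alos}: the disprover gives a configuration \(y\), and the prover answers with \(z\) satisfying \(y(v)=z(v)+z(\mathrm{parent}(v))\) at non-roots and \(y=z\) at the root. Here the parent is identified by port, so no red-set trick is needed. This adds a \(\forall\exists\) pair and is sound because out-degree-\(\le 1\) XOR is surjective iff the graph is acyclic.

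\textbf{Step 4 (counting).} In the last existential round, the prover gives \(s(v)\in\Z_p\). Each vertex checks \(s(v)=x(v)+\sum_{\text{children }w}s(w) \bmod p\), and the root checks \(s(\rho)=q\).

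\textbf{Counting levels.} Stacking the existential tree data, the \(\forall\exists\) XOR test, the \(\forall\exists\) unique-parent test and the \leader gadget, and merging adjacent existential rounds, should give an alternation depth of \(9\).

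The main obstacle is Step 2. We must enforce, in an anonymous network with constant labels, that a child's choice of parent is visible to the parent, so that the parent knows exactly which neighbors to sum over. My first attempt is to have each vertex store the bit ``my parent sees me through its port \(i\)''. Since this is not constant, I would instead verify consistency through a challenge-response: the disprover marks arbitrary subsets, and the prover must show that the parent-side and child-side counts of marked children agree modulo \(2\).
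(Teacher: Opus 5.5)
Your Steps~1 and~3 (a unique root via \leader, the canonical parent at the smallest eligible port, and the XOR test for acyclicity of the port-selected parent map) agree with the paper. However, Step~2, which you yourself call the main obstacle, is never resolved, and without it the check in Step~4 is not local. A constant-size label on a child $w$ (a bit $\mathrm{ch}(w)$, a ``slot'' class, its color) cannot tell a neighbor $u$ whether $w$'s selected parent is $u$ or another, identically labeled eligible neighbor of $w$. Since $u$ does not see $w$'s ports, $u$ cannot restrict $\sum s(w)$ to its children. The \amos-like gadget tests the wrong property: under the smallest-port rule every non-root vertex already has exactly one parent, and the real problem is making that choice visible from the parent's side. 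The ``marked subsets whose counts agree modulo~$2$'' idea is not specified as a protocol, and comparing the counts at $u$ would again require knowing which marked neighbors chose $u$. Consequently, your count of nine levels is not backed by a concrete protocol. (Also, \leader is an $\exists\forall\exists$ protocol in the paper, not $\forall\exists\forall$.)

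The missing idea is to localize the recurrence with a single universal challenge and let the \emph{child} do the checking. The subtree values $\ell_c$ (your $s$) must be committed in the first existential round, together with $(\ell_r,\ell_t)$, not in the last round. The disprover then marks, via a bit labeling $\ell_p$, a vertex $v^*$ where the recurrence fails. The prover answers with $\ell$ such that $\ell(u)=\ell_c(u)$ if $\ell_p(\parent_\pi(u))=1$, and $\ell(u)=0$ otherwise. Each $u$ checks this itself by reading the challenge bit at its first eligible port. If $v^*$ is the only marked vertex, the port-indexed sum $\sum_{u\in N(v^*)}\ell(u)$ is exactly the sum of $\ell_c$ over the children of $v^*$, and $v^*$ checks $\ell_c(v^*)\equiv x(v^*)+\sum_{u\in N(v^*)}\ell(u)\pmod p$. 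Uniqueness matters: if a second vertex were marked, a neighbor of $v^*$ whose parent is that other vertex would also contribute to the sum. Challenges that are not singletons are neutralized by requiring ``$\ell_p$ is not a singleton, or the two equations hold.'' This disjunction is implemented with a globally constant existential selector and the $\forall\exists\forall\exists\forall\exists$ protocol for $\overline{\leader}$ obtained from the complementation corollary. Interleaving it with the $\exists\forall\exists$ tree protocol yields the paper's nine-round prefix.
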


For the proof of \Cref{lemm:count}, we combine
\(\leader\) with the coloring orientations used
for \(\alos\). In the \portsh model, each non-root
vertex selects a unique parent among its eligible
neighbors using the smallest local port. We then
certify that this functional orientation is acyclic
and has exactly one root. Consequently, the selected
edges form a rooted spanning tree of \(G\).

This rooted tree allows us to count by collecting information from the leaves
towards the root. Given an instance \((G,x)\), the prover assigns to each node
a value \(s_v\in\{0,\ldots,p-1\}\), intended to represent the number of marked
nodes in the subtree rooted at \(v\), modulo \(p\). Each node \(v\) should then
verify that
\begin{equation}\label{eq:sum}
    s_v \equiv x(v)+\sum_{u\in T_v}s_u \pmod p,
\end{equation}
where \(T_v\) denotes the set of children of \(v\) in \(T\).

This bottom-up aggregation is standard in distributed certification
\cite{feuilloley2021hierarchy,korman2005proof}. In our setting, however, one
must be careful about what is locally visible: every non-root node knows its
unique parent in the certified tree, but a node \(v\) does not identify its
children individually.

To make the verification of \Cref{eq:sum} local, we use two additional
interactions. Once \(T\) and the values \(\{s_v\}_{v\in V}\) are fixed, the
disprover marks a node \(v^*\) at which \Cref{eq:sum} is to be tested. The
prover responds by marking precisely the children of \(v^*\). This is locally
checkable: a node is marked by the prover if and only if its parent is the node
marked by the disprover. Then \(v^*\) uses the multiset of marked neighboring
certificates to compute
\[
x(v^*)+\sum_{u\in T_{v^*}}s_u \pmod p
\]
and checks that this value is equal to \(s_{v^*}\).

If the assignment \(\{s_v\}_{v\in V}\) is incorrect, the disprover chooses a
node \(v^*\) for which \Cref{eq:sum} fails. Then some node rejects: either the
prover marks the children inconsistently, or \(v^*\) receives exactly the
values of its children and detects that the modular equation is false.

It remains to deal with the case where the disprover does not select a unique
node \(v^*\). As before, this can be handled using \(\leader\): the verifier
accepts when the challenge is not unique, and performs the modular check only
when the challenge selects a single node. This is implemented by a Boolean
combination of the previously studied languages.

We obtain that \(\lcount{q,p}\) is recognized at a constant level of the
\portsh hierarchy. In our explicit construction, the resulting Boolean
combination places the language at level \(9\). See \Cref{sec:counting} for
the complete proof.

\subsubsection{The PORTS hierarchy strictly extends CMSO}

Combining the previous techniques, we show that every
language expressible in counting monadic second-order
logic (CMSO) belongs to the \portsh hierarchy.

\thmCMSO*

The proof of \Cref{th:CMSO} follows the same general
approach as \Cref{lem:MSOinSET}, with the additional
simulation of counting predicates using the certification
of \(\lcount{q,p}\) as a subroutine. This allows us to
extend the correspondence between logical quantification
and alternating certificates to CMSO formulas.

We then show that this inclusion is strict by exhibiting
a language in the \multiseth hierarchy that is not
definable in CMSO. Since the \multiseth hierarchy is
contained in the \portsh hierarchy, this also establishes
that the expressive power of \portsh goes beyond CMSO.

\thmExout*

For this separation, we consider the language $L$
consisting of graphs in which every node of degree at
least $3$ has the same number of neighbors of degree
$1$ and degree $2$. This language belongs to the first
level of the \multiseth hierarchy. The prover assigns
to each node its degree truncated at $3$, and the nodes
locally verify these labels. Nodes of degree at least
$3$ additionally check that the numbers of neighbors
labeled $1$ and $2$ coincide. These checks are possible
because the \multiseth model preserves multiplicities.

To prove that $L$ is not definable in CMSO, we show that
any CMSO formula defining $L$ would induce a CMSO
formula over finite words recognizing a non-regular
language. This contradicts the fact that CMSO-definable
word languages are regular~\cite{courcellebook}.

Together, these results establish that CMSO is strictly
contained in the \portsh hierarchy. Whether every
CMSO-definable language also belongs to the \multiseth
hierarchy remains open.

\subsection{Equivalence of hierarchies on bounded-degree graphs}

In \Cref{sec:boundeddegree}, we study the expressive power of
our hierarchies when restricted to bounded-degree graphs. We show
that, from sufficiently high levels onward, the \seth, \multiseth,
and \portsh hierarchies coincide level-wise with the locally
polynomial hierarchy introduced in \cite{reiter2024local},
assuming a fixed finite input alphabet.

The main challenge is to simulate the additional information
available in \cite{reiter2024local}, where nodes are equipped with locally
unique identifiers. On bounded-degree graphs, the identifiers
needed within a constant horizon can be represented using
constant-size labels. The key idea is to certify an appropriate
coloring of a power of the graph, ensuring that nearby nodes
receive distinct colors. These colors allow even a \seth verifier
to recover the distinctions between neighboring nodes that are
normally lost when multiplicities are ignored. The proof uses
alternating certificates to verify the validity of the coloring,
with different constructions depending on whether the protocol
starts with an existential or a universal quantifier.

An important consequence is that our hierarchies have infinitely
many distinct levels. Indeed, \cite{reiter2024local} established strict
separations in the locally polynomial hierarchy even on graphs
of bounded degree. By transferring these separations through
our equivalence result, we conclude that the \seth, \multiseth,
and \portsh hierarchies are all infinite.

\subsection{Explicit problems outside the constant hierarchy}

Finally, in \Cref{sec:sizeout}, we establish explicit
limitations of the \portsh hierarchy. We identify two
different obstacles to verification with constant-size
certificates: the limited information that can be exchanged
in graphs of unbounded degree, and the presence of unbounded
local inputs, even on bounded-degree graphs.


We first show that neither Hamiltonicity nor the existence
of a perfect matching belongs to the \portsh hierarchy.

The proof of \Cref{th:ports-ham-pm} relies on a connection
with communication complexity. More precisely, we show that
every language over a fixed finite input alphabet belonging
to the \portsh hierarchy induces a uniformly bounded
deterministic communication protocol on complete bipartite
graphs.

To obtain the separation, we consider complete bipartite
graphs $K_{n,m}$, where Alice knows $n$ and Bob knows $m$.
Such a graph admits a perfect matching if and only if
$n=m$, and a Hamiltonian cycle if and only if $n=m\geq 2$.
However, deciding equality between two arbitrarily large
integers requires an unbounded amount of deterministic
communication. This contradiction shows that neither
language belongs to the \portsh hierarchy.


We then consider a different limitation, arising from
unbounded input alphabets. We define the language $\size$,
consisting of all pairs $(G,x)$ such that $G$ is an
$n$-node graph and $x(v)=n$ for every $v\in V$.

\thmSize*

For the proof of \Cref{th:langout}, we introduce a lemma
based on \emph{fragile inputs}. Let $(G,x)$ be a network
configuration. A \emph{perturbation} at a node $v$
consists of changing its input $x(v)$ to a different value,
while leaving the rest of the configuration unchanged.

For a fixed $d>0$, we say that $L$ has a \emph{fragile
$d$-degree input value} $s$ if there exists an instance
$(G,x)\in L$ and a node $v\in V$ of degree $d$ with
$x(v)=s$, such that every perturbation at $v$ produces
a no-instance. Given a distributed language $L$, we
denote by $\frag{d}{L}$ the number, possibly infinite,
of fragile $d$-degree input values of $L$.

We establish the following lemma.

\begin{restatable}{lemma}{lemlangfrag}
\label{lem:langfrag}
For every language $L$ in the \portsh hierarchy and
every $d>0$, $\frag{d}{L}$ is finite.
\end{restatable}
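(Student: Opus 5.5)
The plan is a pigeonhole argument on the finitely many ways a node of degree \(d\) can interact with the verifier. Fix \(L\in\hier{\portsh}_k\), witnessed by a constant \(c\) and a verifier \(A\), and fix \(d>0\).

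The verifier at \(v\) sees \(x(v)\), its own certificate sequence \(\overline{\ell}(v)\), and the set of pairs (port, neighbour sequence). The key fact is that the inputs of the other nodes never enter \(v\)'s view, and \(x(v)\) enters no other node's view. So the input of \(v\) affects acceptance only through \(v\)'s own decision.

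\textbf{Step 1: the input profile.} To each input word \(s\), associate the Boolean function
\[
f_s:(\overline{\lambda},W)\mapsto A(s,\overline{\lambda},W),
\]
where \(\overline{\lambda}\) ranges over sequences of \(k\) labels of length at most \(c\), and \(W\) ranges over port views of degree \(d\). A degree-\(d\) port view is a set of \(d\) pairs (port, sequence). Both domains are finite and depend only on \(k\), \(c\), \(d\). Hence there are at most \(N=2^{M}\) distinct profiles \(f_s\), where \(M\) is the size of the domain. The anonymous algorithm may also use \(\deg(v)\), but this is fixed to \(d\).

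\textbf{Step 2: same profile preserves membership.} I claim that if \(f_s=f_{s'}\), then for every configuration \((G,x)\) and every node \(v\) of degree \(d\) with \(x(v)=s\), we have \((G,x)\in L\iff(G,x')\in L\), where \(x'\) agrees with \(x\) except \(x'(v)=s'\).

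For this, fix any port numbering \(\pi\). The two instances have the same label space, and every label assignment \(\ell_1,\dots,\ell_k\) produces identical views at all nodes except \(v\). At \(v\), the local decisions agree because \(f_s=f_{s'}\). So the matrix predicate "all nodes accept" is the same function of \((\ell_1,\dots,\ell_k)\) in both instances. The quantifier prefix \(\quantif_1\ell_1\cdots\quantif_k\ell_k\) therefore evaluates identically, and both instances lie in \(L\) or both lie outside it.

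\textbf{Step 3: conclude.} Suppose \(\frag{d}{L}>N\). By pigeonhole, two distinct fragile values \(s\neq s'\) share a profile. Take the witness \((G,x)\in L\) for \(s\), with \(v\) of degree \(d\) and \(x(v)=s\). Perturbing \(x(v)\) to \(s'\) yields a yes-instance by Step 2. This contradicts the fragility of \(s\). Hence \(\frag{d}{L}\le N<\infty\).

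\textbf{Main obstacle.} The only delicate point is Step 2: checking that the model truly gives no node access to neighbours' inputs. This holds here, since every view in the definition of the hierarchy contains only certificate sequences and ports. I also need the certificate bound \(c\) to be uniform, independent of the inputs, so that the label space is literally identical in both instances. The definition guarantees this as well.
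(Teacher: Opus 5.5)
Your proposal is correct and follows essentially the same route as the paper. Both arguments use a pigeonhole over the finitely many Boolean functions \(F_s\) defined on the finite set of degree-\(d\) views (own certificate tuple plus port-indexed neighbouring certificates). Both then observe that replacing \(s\) by \(s'\) with \(F_s=F_{s'}\) leaves every local decision unchanged under every certificate assignment, so the whole quantified statement has the same truth value; your explicit bound \(N=2^{M}\) is a small sharpening of the paper's bare finiteness claim.
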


The proof of \Cref{lem:langfrag} relies on a counting
argument. Since the degree $d$ is fixed and certificates
have constant size, there are only finitely many possible
local certificate views, including port information.
Consequently, infinitely many fragile input values would
imply that two distinct values induce the same local
verification behavior for every possible certificate
view. Replacing one by the other would then preserve
the acceptance of the entire protocol, contradicting
fragility.

As a consequence, $\size$ cannot belong to the \portsh
hierarchy, since it has infinitely many fragile input
values. Indeed, it suffices to consider cycles $C_n$,
where every node has degree $2$ and input $n$. Changing
the input of any node produces a no-instance, showing
that $\frag{2}{\size}$ is infinite. For the complete proofs of these results, refer to
\Cref{sec:sizeout}.

\section{Preliminaries}
\label{sec:prelim}

In this section, we introduce the basic definitions and the distributed
verification models used throughout the paper. We then formally define
the corresponding hierarchies of constant-size certificates and recall
the counting monadic second-order logic on graphs.

\subsection{Basic definitions}
\label{sec:def}

For integers $s<t$, we denote by $[s,t]$ the set of all integers $k$
such that $s\leq k\leq t$. We also use $[t]$ as shorthand for
$[1,t]$, with $[0]=\emptyset$.

Throughout this paper, graphs are finite, simple, and undirected.
Unless stated otherwise, all graphs are connected. Given a graph
$G=(V,E)$, we write $n=|V|$ for its number of nodes and $N(v)$
for the neighborhood of a node $v\in V$. The degree of $v$ is denoted
by $\delta(v)=|N(v)|$, and $\Delta(G)$ denotes the maximum degree
of $G$. We also write $V(G)$ and $E(G)$ for the vertex and edge
sets of $G$, respectively.

\subsection{Distributed languages}

Given a graph $G=(V,E)$, an \emph{input} is a mapping
$x:V\to\{0,1\}^*$. The pair $(G,x)$ is called a
\emph{network configuration}. A \emph{distributed language} is a
set of network configurations, that is,
\[
L\subseteq
\{(G,x):G=(V,E),\ x:V\to\{0,1\}^*\}.
\]
The complement of $L$ is the distributed language
\[
\overline L=\{(G,x):(G,x)\notin L\}.
\]

A \emph{graph language} is a set of graphs.
When treating a graph language as a distributed
language, we fix a singleton input alphabet
\(\{\bot\}\) and identify each graph \(G\)
with the configuration \((G,x_\bot)\), where
\(x_\bot(v)=\bot\) for every \(v\in V(G)\). For example,
the language of $3$-colorable graphs is defined by
\[
3\text{-}\textsf{colorable}
=
\{G:\exists c:V(G)\to[3],\
\forall uv\in E(G),\ c(u)\neq c(v)\}.
\]
We say that a distributed language has a \emph{finite input alphabet}
if the input values of its nodes belong to a fixed finite set.

For each positive integer $\Delta$, let $\mathcal B(\Delta)$ denote
the class of graphs of maximum degree at most $\Delta$, that is,
\[
G\in\mathcal B(\Delta)
\iff
\Delta(G)\leq\Delta.
\]
Given a class of graphs $\mathcal G$ and a distributed language $L$,
we define the restriction of $L$ to $\mathcal G$ as
\[
L_{\mathcal G}
=
\{(G,x)\in L:G\in\mathcal G\}.
\]
For example, restricting $2$-$\textsf{colorable}$ to
$\mathcal B(2)$ yields the language consisting of paths and
even cycles.

\subsection{Local views and verification models}

We consider three anonymous models of local verification, denoted
by \seth, \multiseth, and \portsh. Nodes do not have unique
identifiers, and the models differ in the information they provide
about neighboring certificates.

Given a graph $G=(V,E)$, a \emph{labeling} is a mapping
$\ell:V\to\{0,1\}^*$. A \emph{labeling list} is a tuple
$\overline{\ell}=(\ell_1,\ldots,\ell_k)$ of labelings.
For a positive integer $c$, we say that $\overline{\ell}$ is
\emph{$c$-bounded} if
$
|\overline{\ell}(v)|
=
\sum_{i=1}^{k}|\ell_i(v)|
\leq c
$
for every $v\in V$.

Let $\overline{\ell}$ be a labeling list. We define the following three local views:

\begin{itemize}

\item In the \seth model, a node observes the set of neighboring
labeling lists:
\[
\view{\seth}{v}
=
\{\overline{\ell}(u):u\in N(v)\}.
\]
In particular, multiplicities are not visible.

\item In the \multiseth model, a node observes the multiset of
neighboring labeling lists:
\[
\view{\multiseth}{v}
=
\{\!\{\overline{\ell}(u):u\in N(v)\}\!\}.
\]
Thus, multiplicities are preserved, but neighboring nodes carrying
identical labeling lists cannot be distinguished.

\item In the \portsh model, every node $v$ is equipped with
a local port numbering, given by an injective map
$
\pi_v:N(v)\to[\delta(v)].
$
Port assignments are local: distinct neighbors of a node
receive distinct port numbers, but the assignments at
opposite endpoints of an edge are independent.
The \portsh-view of $v$ is the set of port-certificate pairs
\[
\view{\portsh}{v}
=
\bigl\{
(\pi_v(u),\overline{\ell}(u)):u\in N(v)
\bigr\}.
\]
Hence, a node can distinguish its neighbors through their
local port numbers, even when several neighbors carry
identical labeling lists.

\end{itemize}

The three models satisfy the natural information ordering
\[
\seth\preceq\multiseth\preceq\portsh,
\]
since a set view can be recovered from a multiset view, and a
multiset view can be recovered from a port-ordered tuple.

\paragraph{Local verifiers.}

A \emph{local verification algorithm}, also called a
\emph{distributed verifier}, for model
$\modelo\in\{\seth,\multiseth,\portsh\}$ is a computable
function $A$ which, for every network configuration $(G,x)$,
every labeling list $\overline{\ell}$, and every node $v\in V$,
produces an output
$
\alg{G,x,v,\overline{\ell}}\in\{0,1\}
$
depending only on the tuple
$\bigl(
x(v),
\overline{\ell}(v),
\view{\modelo}{v}
\bigr).$

In particular, a node can read its own input and its own
certificates, but it cannot directly read the inputs of neighboring
nodes. Only their certificate labels are available through the
corresponding local view.

For the \portsh model, the output may depend on the local port
numbering. However, the distributed languages recognized by the
model must be independent of the particular choice of port
numbering.

We interpret output $1$ as \textbf{accept} and output $0$
as \textbf{reject}. When the configuration and labeling list are
clear from the context, we simply write $\alg{v}$ instead of
$\alg{G,x,v,\overline{\ell}}$.

The verifier accepts a configuration with labeling list
$\overline{\ell}$ if every node accepts, that is,
\[
\alg{G,x,\overline{\ell}}=1
\iff
\forall v\in V,\quad
\alg{G,x,v,\overline{\ell}}=1.
\]
Otherwise, we write $\alg{G,x,\overline{\ell}}=0$.

Notice that even with constant-size certificates, the local
verifier need not have a finite input domain, since the original
input $x(v)$ may have arbitrary length. Nevertheless, the set of
possible certificate views is finite in the \seth model. For
\multiseth and \portsh, this is also the case when the degree
is bounded. In particular, port numbering does not provide
additional unbounded information at a fixed degree.

\subsection{Constant-size certification hierarchies}

We now formally define the hierarchy of distributed decision with
constant-size certificates for each model
$\modelo\in\{\seth,\multiseth,\portsh\}$.

Let $k\geq0$. A distributed language $L$ belongs to the $k$-th
level of the hierarchy if there exist a constant $c>0$ and a
distributed verifier $A$ for model $\modelo$ such that
\[
(G,x)\in L
\iff
\quantif_1\ell_1\,
\quantif_2\ell_2\cdots
\quantif_k\ell_k,\quad
\alg{G,x,\overline{\ell}}=1,
\]
where the quantifiers
$\quantif_1,\ldots,\quantif_k\in\{\forall,\exists\}$
alternate, and the labeling list
$\overline{\ell}=(\ell_1,\ldots,\ell_k)$ is $c$-bounded.

Each labeling $\ell_i$ ranges over assignments from a fixed finite
certificate alphabet $C_i$, chosen independently of the graph size.
The alphabets and their encodings are fixed as part of the protocol,
with a total of at most $c$ bits per node across all certificate
assignments.

We distinguish the two possible choices of the first quantifier.
We write
$\hier{\modelo}(\Sigma_k)$
for the class of languages recognized by protocols starting with
an existential quantifier, and
$
\hier{\modelo}(\Pi_k)
$
for those starting with a universal quantifier. In both cases,
the remaining quantifiers alternate. For $k=0$, there are no
certificate assignments, and both classes coincide with local
decision in model $\modelo$.

We write $\hier{\modelo}_k$ for the union of the two classes at
level $k$, and define the full hierarchy by
\[
\hier{\modelo}
=
\bigcup_{k\geq0}\hier{\modelo}_k.
\]

For the \portsh hierarchy, the equivalence defining membership
must hold for every valid port numbering of the graph. Thus, port
numbers may influence the local verification procedure and the
certificates chosen by the provers and disprovers, but not the
language recognized by the protocol.

The information ordering between the three models immediately
implies that, for every $k\geq0$,
\[
\hier{\seth}_k
\subseteq
\hier{\multiseth}_k
\subseteq
\hier{\portsh}_k.
\]
The same inclusions hold separately for the $\Sigma_k$ and
$\Pi_k$ levels.

\subsection{Counting monadic second-order logic (CMSO)}

Finally, we consider graph languages expressible in counting
monadic second-order logic (CMSO), also referred to as
$\mathrm{CMS}_1$~\cite{courcellebook}. This logic extends monadic
second-order logic (MSO) with counting predicates. Both logics
allow quantification over vertices (first-order quantification,
usually denoted by lowercase variables) and over sets of vertices
(monadic second-order quantification, usually denoted by uppercase
variables).\footnote{We do not allow quantification over sets of
edges. Allowing such quantification yields the more expressive
logics $\mathrm{MSO}_2$ and $\mathrm{CMSO}_2$, respectively.}

In addition to Boolean connectives and quantifiers, CMSO formulas
are built from the following atomic formulas, with their
corresponding interpretations:
\begin{itemize}
    \item $x_i=x_j$: equality of vertices;
    \item $x_i\sim x_j$: adjacency of vertices;
    \item $x_i\in X_j$: membership of a vertex in a set;
    \item $\cardmod{q}{p}{X_i}$: the cardinality of $X_i$
    is congruent to $q$ modulo $p$, where $p\geq2$ and
    $0\leq q<p$ are fixed integers.
\end{itemize}
For languages over a fixed finite input alphabet, we also
allow a unary predicate for each input symbol, interpreted
by the vertices carrying that symbol.
MSO is the fragment of CMSO obtained by excluding the counting
predicates $\cardmod{q}{p}{X_i}$. These logics are also commonly
denoted $\mathrm{MSO}_1$ and $\mathrm{CMSO}_1$, respectively,
to emphasize that second-order quantification is restricted to
sets of vertices.

For a CMSO sentence $\phi$, we define the graph language
\[
L_\phi=\{G:G\models\phi\}.
\]
Many natural graph properties can be expressed in MSO. For
instance, the language of $3$-colorable graphs is defined by
the following formula:
\[
\begin{aligned}
\exists X_1\,\exists X_2\,\exists X_3\;
\Bigg(
&\left[
\forall x\,
\bigvee_{i\in[3]} (x\in X_i)
\right]\\
&\wedge
\left[
\bigwedge_{i\in[3]}
\forall x\,\forall y\,
\bigl(
(x\in X_i\wedge y\in X_i)
\rightarrow\neg(x\sim y)
\bigr)
\right]
\Bigg).
\end{aligned}
\]
Other examples of MSO-definable graph properties include
connectivity, planarity, and exclusion of a fixed graph $H$
as a minor.

Every CMSO formula can be transformed into an equivalent formula
in \emph{prenex normal form}, consisting of a sequence of
first-order or second-order quantifiers followed by a
quantifier-free formula:
\[
Q_1z_1\,Q_2z_2\cdots Q_kz_k\;\psi,
\]
where $Q_i\in\{\exists,\forall\}$ and $\psi$ is
quantifier-free.

The \emph{quantifier rank} (or quantifier depth) of a formula
is the maximum nesting depth of its quantifiers. In particular,
a formula in prenex normal form with $k$ quantified variables
has quantifier rank $k$. This should not be confused with
\emph{quantifier alternation depth}, which measures the number
of alternations between existential and universal quantifiers.

In the results relating logical definability to our distributed
hierarchies, we use the number of quantified variables in a
prenex representation as the relevant complexity parameter.
This allows us to simulate each logical quantifier by a
corresponding certificate quantifier, with an additional
constant number of hierarchy levels.


\section{Fundamental Tools} \label{sec:constructions} 
In this section we show the core construction for our results. A key element that we need in our proofs is to be able to verify that at least (resp. at most) one node is marked in a graph. More precisely, we study three languages:
\begin{enumerate}
    \item $\alos = \{(G,x): x:V\mapsto\{0,1\}, |\{v \in V: x(v)=1\}|\geq 1\}.$
    \item $\amos = \{(G,x): x:V\mapsto\{0,1\}, |\{v \in V: x(v)=1\}|\leq 1\}.$
    \item $\leader = \alos \cap \amos. $
\end{enumerate}
The language $\alos $ corresponds to the set of network configurations such that there is \emph{at least one node selected}. In addition,  $\amos $ corresponds to the set of network configurations such that there is \emph{at most one node selected}. Finally, $\leader$ is the set of network configurations in which there is \emph{exactly one node selected}. 
\subsection{Closure under union and intersection}
\label{sec:unionandinter}

In this section, we establish closure under finite
intersections and unions for the \seth, \multiseth,
and \portsh hierarchies. The proof applies to all
three models, so we write
\(\modelo\in\{\seth,\multiseth,\portsh\}\).

We now prove the closure result.

\begin{lemma}\label{lem:intun}
Let \(\modelo\in\{\seth,\multiseth,\portsh\}\). Let \(L_1,\ldots,L_s\) be languages such that \(L_i\in \hier{\modelo}_{k_i}\) for every \(i\in\{1,\ldots,s\}\), and let \(k=\max_i k_i\). Then
\[
\bigcap_{i=1}^s L_i \in \hier{\modelo}_K
\quad\text{and}\quad
\bigcup_{i=1}^s L_i \in \hier{\modelo}_{k+1},
\]
where $K=k$ if there is a single $i$ with $k_i=k$, and $K=k+1$ otherwise.
\end{lemma}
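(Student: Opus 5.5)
The plan is to simulate the given protocols \emph{in parallel}, after padding them so that all of them follow one common alternating quantifier pattern. Concretely, fix for each \(i\) a protocol \((A_i,c_i)\) for \(L_i\) with \(k_i\) alternating rounds.

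\textbf{Step 1: padding to a common pattern.} Adding a dummy round, whose labels are a single constant symbol that the verifier ignores, does not change the recognized language. A dummy round can be appended at the end with whichever quantifier continues the alternation, or prepended with the quantifier opposite to the current first one. Hence a protocol with \(k_i<k\) rounds can be turned into a \(k\)-round protocol starting with either quantifier: if its first quantifier already matches, append dummy rounds; if not, prepend one dummy round (\(k_i+1\le k\)) and then append. If exactly one \(L_i\) attains \(k_i=k\), its pattern is fixed and all the others are padded to it, giving length \(K=k\). Otherwise, two level-\(k\) protocols may start with different quantifiers, so we pad everything to length \(k+1\) starting with \(\exists\), prepending one dummy round where needed; this gives \(K=k+1\).

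\textbf{Step 2: intersection.} Once all protocols share the pattern \(\quantif_1\cdots\quantif_K\), the combined \(j\)-th label of a node is the tuple \((\ell^1_j(v),\ldots,\ell^s_j(v))\), which still has constant size. The combined verifier accepts at \(v\) iff each \(A_i\) accepts on the \(i\)-th coordinates. This is well defined in all three models, because projecting a \seth, \multiseth, or \portsh view of tuple-sequences onto coordinate \(i\) gives exactly the corresponding view for protocol \(i\). The projection of a set is the set of projections, multiplicities are preserved under projection, and ports are kept.

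Correctness follows because the components are independent. The condition ``all nodes accept'' is the conjunction over \(i\) of ``all nodes accept in protocol \(i\)''. We then use the game-theoretic fact that
\[
\quantif_1(a_1,b_1)\cdots\quantif_K(a_K,b_K)\,[P(\bar a)\wedge Q(\bar b)]
\iff
\bigl[\quantif_1a_1\cdots\quantif_K a_K\,P\bigr]\wedge\bigl[\quantif_1b_1\cdots\quantif_Kb_K\,Q\bigr].
\]
For one direction, combine the winning strategies of each player coordinatewise. For the other, a player who wins the product game wins each component game by simulating arbitrary moves in the other coordinates. Iterating over \(s\) components gives the claim for \(\bigcap_i L_i\).

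\textbf{Step 3: union.} Pad all protocols to a common length-\(k\) pattern starting with \(\forall\); by Step 1 this is always possible with at most one prepended dummy, and when \(k_i=k\) with an \(\exists\) start we instead use the length-\(k\) pattern starting with \(\exists\) and merge. The simplest uniform route is to pad all protocols to length \(k\) starting with \(\forall\) when possible, and then prepend a new existential round in which the prover writes an index \(i\in[s]\) at every node. Each node checks that all neighbors carry the same index; this is possible in the \seth view, since the node sees the index coordinate of its neighbors' certificate sequences. By connectivity, the index is therefore global. The node then runs \(A_i\) on the \(i\)-th coordinates and ignores the others. If \((G,x)\in L_i\), the prover writes \(i\) and follows protocol \(i\). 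If \((G,x)\notin\bigcup_i L_i\), any uniform index \(i\) leads to a protocol whose soundness lets the disprover force a rejection, and a non-uniform index is rejected locally. This uses at most \(k+1\) rounds.

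\textbf{Main obstacle.} The delicate step is the level bookkeeping in Step 1, in particular handling protocols of the same maximal level that start with different quantifiers, which is exactly what forces \(K=k+1\). A second point needing care is the claim that coordinate projections of the views are computable from the combined view. This is immediate for sets, multisets, and ports, but it must be stated, since nodes cannot read their neighbors' inputs and can rely only on certificate sequences.
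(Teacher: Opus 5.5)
Your proposal is correct and follows essentially the same route as the paper. Both pad the protocols to a common quantifier pattern, with \(K=k+1\) needed only when two level-\(k\) protocols may start with different quantifiers, and both take coordinatewise products of certificates with a conjunctive verifier for intersection; for union, both use a globally constant existential selector in the first of \(k+1\) rounds, merge it with the first round of any \(\exists\)-starting level-\(k\) protocol, and place \(\forall\)-starting protocols in rounds \(2,\ldots,k+1\). Your strategy-combination argument for the product game spells out what the paper calls direct, and your Step~3 briefly overstates that padding to a \(\forall\)-start of length \(k\) is ``always possible'', though you correct this in the same sentence by merging the \(\exists\)-starting level-\(k\) protocols with the selector round.
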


\begin{proof}
  We first prove closure under finite intersection. We give the proof for the case \(s=2\); the generalization to larger $s$ is straightforward.

By adding dummy certificate rounds, we can suppose that both languages are recognized by protocols with exactly $K$ rounds and the same sequence of quantifiers: $K=k$ is enough when there is a single $i$ with $k_i=k$, and  $K=k+1$ is enough otherwise (in particular, when two protocols use $k$ rounds but start with different quantifiers).
We have \(L_1,L_2\in \hier{\modelo}_{K}\).

Thus, for each \(i\in\{1,2\}\), there exists a local verifier \(A_i\) such that \((G,x)\in L_i\) if and only if
\[
\quantif_1 \ell^i_1,\quantif_2 \ell^i_2,\ldots,\quantif_{K} \ell^i_{K},
\forall v\in V,\ 
A_i(G,x,v,\overline{\ell}^i)=1,
\]
where \(\overline{\ell}^i=(\ell^i_1,\ldots,\ell^i_{K})\).

We define a new verifier \(A\) for \(L_1\cap L_2\). In each round \(j\), the certificate assigned to a node is the pair \((\ell^1_j(v),\ell^2_j(v))\). The verifier \(A\) locally simulates both \(A_1\) and \(A_2\) on the corresponding coordinates and accepts at \(v\) if and only if both simulations accept:
\[
A(G,x,v,\overline{\ell}^1,\overline{\ell}^2)
=
A_1(G,x,v,\overline{\ell}^1)
\wedge
A_2(G,x,v,\overline{\ell}^2).
\]

This construction is valid in all three models.
From the neighborhood view of pairs of certificate
sequences, a node can recover the view corresponding
to each coordinate. In the \portsh model, the local
port numbers are preserved in both projections.
Hence the verifier can simulate both original
verifiers.

It follows directly that \((G,x)\in L_1\cap L_2\) if and only if all nodes accept in this combined protocol. Hence \(L_1\cap L_2\in \hier{\modelo}_{K}\). The argument extends to any finite number of languages by taking tuples of certificates and conjoining all verifiers.

We now prove closure under finite union.
Let $L_1,L_2\in\hier{\modelo}_k$.

Consider an alternating quantifier prefix of
length $k+1$ starting existentially:
\[
\exists\,\forall\,\exists\,\cdots.
\]
Every $k$-round protocol starting existentially
can be embedded in the first $k$ rounds of
this prefix. Similarly, every $k$-round
protocol starting universally can be embedded
in rounds $2,\ldots,k+1$.

In the first round, the prover assigns
a selector $b(v)\in\{1,2\}$ to each vertex.
Every vertex checks that its selector agrees
with those of all its neighbors. Since the
graph is connected, acceptance requires
$b$ to be constant throughout the graph.

The remaining certificate fields simulate
the protocol corresponding to the selected
language, using the appropriate positions
in the common quantifier prefix. Unused
certificate fields are ignored.

If $(G,x)\in L_1\cup L_2$, the prover
selects a language containing $(G,x)$
in the first round and follows a winning
strategy for that language. The selected
protocol therefore accepts against every
subsequent universal challenge.

Conversely, suppose that
$(G,x)\notin L_1\cup L_2$.
Any nonconstant selector causes a local
rejection. If the selector is constant,
it specifies a language that does not
contain $(G,x)$. Since the selection was
made before the universal challenges,
the disprover can follow a winning
strategy against the selected protocol.
Thus some vertex rejects.

Therefore,
$L_1\cup L_2\in\hier{\modelo}_{k+1}.$ The same construction works for any fixed finite union, using a constant-size
selector.

All certificates have constant size,
and the argument applies to the \seth,
\multiseth, and \portsh models.
\end{proof}

\subsection{Coloring orientations}

First, we show an auxiliary result regarding orientations. Given a graph \(G\), we would like to encode, using constant-size certificates, a local orientation induced by a coloring modulo \(3\).

Let \(G=(V,E)\) be a graph and let \(x:V\to\{0,1\}\) be an input function. Given a map \(c:V\to\{0,1,2\}\), we define the set of \emph{successors} of a node \(v\in V\), with respect to the coloring~\(c\), as
\[
S(v)=\{u\in N(v): c(u)=c(v)-1 \mod 3\}.
\]

The prover will also mark some nodes as roots. Let \(\rho:V\to\{0,1\}\) be this root marking, and let \(B(\rho)=\{v\in V:\rho(v)=1\}\). We say that the pair \((c,\rho)\) is a \emph{valid rooted coloring} for \((G,x)\) if the following conditions hold.

\begin{itemize}
    \item For every \(v\in B(\rho)\), we have \(x(v)=1\) and \(S(v)=\emptyset\).
    \item For every \(v\notin B(\rho)\), we have \(S(v)\neq\emptyset\).
\end{itemize}

Observe that these conditions are locally checkable in the \seth model, since a node only needs to know whether the set of certificates received from its neighbors contains a node with color \(c(v)-1 \mod 3\).

Later, the disprover will choose a set \(R\subseteq V\) of \emph{red nodes}. For every node \(v\), define \(S_R(v)=S(v)\cap R\). Intuitively, \(S_R(v)\) is the set of red successors of \(v\). Given a valid rooted coloring \((c,\rho)\) and a set \(R\subseteq V\), we define a directed graph \(D_R=D(G,c,\rho,R)\) as follows.

\begin{itemize}
    \item The vertex set of \(D_R\) is \(R\).
    \item For every \(v\in R\), if \(S_R(v)=\{u\}\), then \(D_R\) contains the directed edge \((v,u)\).
    \item If \(S_R(v)=\emptyset\) or \(|S_R(v)|>1\), then \(v\) has no outgoing edge in \(D_R\).
\end{itemize}

Thus, every node in \(D_R\) has out-degree at most \(1\).

\begin{lemma}\label{lem:dycycle}
    Let \((G,x)\) be a network configuration with \(x:V\to\{0,1\}\). Then \((G,x)\in\alos\) if and only if there exists a valid rooted coloring \((c,\rho)\) such that, for every set \(R\subseteq V\), the graph \(D_R=D(G,c,\rho,R)\) has no directed cycle.
\end{lemma}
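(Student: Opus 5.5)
The proof splits into the two directions of the equivalence.

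\emph{($\Rightarrow$)} Suppose $(G,x)\in\alos$ and fix a selected node $r$ with $x(r)=1$. We use a BFS coloring: let $c(v)=\dist(v,r)\bmod 3$, set $\rho(r)=1$, and set $\rho(v)=0$ for $v\neq r$. We then check that $(c,\rho)$ is a valid rooted coloring.
\begin{itemize}
\item Every neighbor $u$ of $v$ satisfies $|\dist(u,r)-\dist(v,r)|\le 1$. So $c(u)\equiv c(v)-1 \pmod 3$ holds exactly when $\dist(u,r)=\dist(v,r)-1$; the cases of equal distance or distance larger by one give residues $c(v)$ and $c(v)+1$, which are different from $c(v)-1$ modulo $3$.
\item Hence $S(r)=\emptyset$, since no neighbor of $r$ has distance $-1$.
\item Every $v\neq r$ has a neighbor on a shortest path to $r$ (this uses connectivity of $G$), so $S(v)\neq\emptyset$.
\end{itemize}
For an arbitrary $R\subseteq V$, every edge $(v,u)$ of $D_R$ satisfies $u\in S(v)$, so $\dist(u,r)=\dist(v,r)-1$. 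The distance to $r$ strictly decreases along every directed path, so $D_R$ has no directed cycle.

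\emph{($\Leftarrow$)} We argue by contrapositive. Assume no node is selected and let $(c,\rho)$ be any valid rooted coloring. Roots must have $x=1$, so $B(\rho)=\emptyset$, and therefore every node has $S(v)\neq\emptyset$. Consider the directed graph $H$ on $V$ with arcs $v\to u$ for all $u\in S(v)$. Every vertex of this finite graph has out-degree at least one, so following arcs from any vertex eventually repeats a vertex, and $H$ contains a directed cycle. Take a \emph{shortest} directed cycle $C=v_0\to v_1\to\cdots\to v_{m-1}\to v_0$ in $H$, and set $R=V(C)$.

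The key claim is that $S_R(v_i)=\{v_{i+1}\}$ for every $i$, with indices taken modulo $m$. Suppose instead that some $v_i$ has a second successor $v_j\in R$ with $j\neq i+1$. Then $v_i\to v_j$ followed by the cycle path from $v_j$ back to $v_i$ is a directed cycle. It is strictly shorter than $C$, because it skips $v_{i+1},\dots,v_{j-1}$; this set is nonempty since $j\neq i+1$, and $j\neq i$ because $c(v_j)\ne c(v_i)$.

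Two degenerate cases need separate checks. A $2$-cycle $v\to u\to v$ would require $c(u)=c(v)-1$ and $c(v)=c(u)-1$ modulo $3$, which is impossible, so $m\ge 3$. Moreover, $C$ is a simple cycle in $G$ because $G$ is simple. With the claim established, $D_R$ contains exactly the arcs $v_i\to v_{i+1}$, so $D_R=C$ is a directed cycle, which is the required contradiction.

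The main obstacle is the shortest-cycle argument, in particular making sure that chords of $C$ inside $R$ cannot produce a second red successor. The minimality of $C$ handles this directly.
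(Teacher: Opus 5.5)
Your proof is correct and takes essentially the same route as the paper. The forward direction uses a BFS coloring modulo $3$ rooted at a selected node. The converse uses a shortest directed cycle in the successor relation, whose minimality forces every cycle vertex to have exactly one red successor. You also make explicit a step the paper only asserts: adjacent vertices have BFS distances differing by at most one, so the preceding-color condition forces the distance to the root to decrease strictly.
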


\begin{proof}
First, suppose that \((G,x)\notin\alos\). Then no node is selected, that is, \(x(v)=0\) for every \(v\in V\). Let \((c,\rho)\) be any valid rooted coloring. Since every root must satisfy \(x(v)=1\), we have \(B(\rho)=\emptyset\). Therefore, by validity, every node \(v\in V\) satisfies \(S(v)\neq\emptyset\).

Now consider the directed relation that connects each node \(v\) to all nodes in \(S(v)\). Since every node has at least one successor, this relation contains a directed cycle, and we can consider a minimal such cycle. More precisely, there exist distinct nodes \(R=\{v_0,\dots,v_{k-1}\}\) such that \(S_R(v_i)= \{v_{i+1}\}\) for every \(i\), with indices taken modulo \(k\). Indeed, if \(v_j\in S(v_i)\) for \(j\neq i+1\), we could form a cycle avoiding \(v_{i+1}\) by connecting \(v_i\) to \(v_j\) instead of \(v_{i+1}\) (taking indices modulo \(k\)), which would contradict minimality of cycle $R$. Hence \(D_R\) contains the directed cycle \((v_0,v_1,\dots,v_{k-1},v_0)\).

Conversely, suppose that \((G,x)\in\alos\). Let \(r\) be a selected node, and let \(T\) be a BFS tree of \(G\) rooted at \(r\). Define \(\rho(r)=1\), \(\rho(v)=0\) for every \(v\neq r\), and \(c(v)=d_T(r,v)\mod 3\), where \(d_T(r,v)\) denotes the distance from \(r\) to \(v\) in \(T\).

We claim that \((c,\rho)\) is a valid rooted coloring. Since \(r\) is selected, \(x(r)=1\). Moreover, \(S(r)=\emptyset\), because no neighbor of the root has distance \(-1\) from \(r\). For every node \(v\neq r\), the parent of \(v\) in the BFS tree has distance \(d_T(r,v)-1\), and therefore belongs to \(S(v)\). Hence \(S(v)\neq\emptyset\).

Now let \(R\subseteq V\) be arbitrary. We prove that \(D_R\) is acyclic. If \((v,u)\) is an edge of \(D_R\), then \(u\in S(v)\). Since the coloring comes from BFS distances, this implies \(d_T(r,u)=d_T(r,v)-1\). Therefore, the distance to the root strictly decreases along every directed edge of \(D_R\). Hence \(D_R\) cannot contain a directed cycle.
\end{proof}

In the next sections, we use this fact to verify that \(\alos\) belongs to the hierarchy. The remaining difficulty is to express, in the \seth model, the XOR verification associated with such red orientations, without relying on ports or on multiplicities of identical neighboring certificates.

\subsection{Surjectivity of XOR maps}

Given a directed graph \(D=(V,E)\), we define the \emph{XOR map over \(D\)}, denoted by \(\xor^D:\{0,1\}^V\to\{0,1\}^V\), as follows:
\[
\xor^D(z)_v = z_v + \sum_{u\in N_D^+(v)} z_u \mod 2.
\]

We sometimes refer to maps \(z:V\to\{0,1\}\) as \emph{state configurations}, or simply as configurations.

In this paper we will apply this map only to directed graphs of maximum out-degree at most \(1\). For such a graph \(D\), we define the function \(\parent_D:V\to V\cup\{\bot\}\) by setting \(\parent_D(v)=u\) if \(N_D^+(v)=\{u\}\), and \(\parent_D(v)=\bot\) if \(N_D^+(v)=\emptyset\). Then, for every node \(v\), we have
\[
\xor^D(z)_v =
\begin{cases}
    z_v & \text{if } \parent_D(v)=\bot,\\
    z_v+z_{\parent_D(v)} \mod 2 & \text{otherwise.}
\end{cases}
\]

For the sake of simplicity, we write \(\xor\) instead of \(\xor^D\) whenever the underlying directed graph is clear from the context.

The following lemma links the surjectivity of the XOR map with the acyclicity of the underlying directed graph.

\begin{lemma}\label{lemm:XORsurj}
    Let \(D=(V,E)\) be a directed graph with maximum out-degree at most \(1\). Then \(\xor^D\) is surjective if and only if \(D\) contains no directed cycle.
\end{lemma}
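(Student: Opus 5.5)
Since \(\xor^D\) is a map from the finite set \(\{0,1\}^V\) to itself, surjectivity is equivalent to injectivity. Moreover, \(\xor^D\) is \(\mathbb{F}_2\)-linear, so it is enough to look at its kernel. The plan is to prove the two directions separately: acyclicity gives surjectivity by solving the equations in a topological order, and a directed cycle gives a nonzero kernel element.

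\emph{Acyclic implies surjective.} Suppose \(D\) has no directed cycle. Each vertex has out-degree at most \(1\), so following \(\parent_D\) from any vertex \(v\) produces a path \(v,\parent_D(v),\parent_D(\parent_D(v)),\dots\). This path cannot repeat a vertex, so it must stop at a vertex whose parent is \(\bot\). Let \(h(v)\) be the length of this path. Given a target \(y\in\{0,1\}^V\), define \(z\) by induction on \(h\):
\begin{itemize}
\item if \(\parent_D(v)=\bot\), set \(z_v=y_v\);
\item otherwise set \(z_v=y_v+z_{\parent_D(v)} \bmod 2\).
\end{itemize}
The value \(z_{\parent_D(v)}\) is already defined when we reach \(v\), because \(h(\parent_D(v))=h(v)-1\). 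By construction \(\xor^D(z)=y\), so \(\xor^D\) is surjective.

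\emph{Cycle implies not surjective.} Suppose \(D\) contains a directed cycle \(C=(v_0,\dots,v_{k-1},v_0)\). Because out-degrees are at most \(1\), we have \(\parent_D(v_i)=v_{i+1}\), with indices taken mod \(k\). Let \(z\) be the indicator of \(V(C)\) and check \(\xor^D(z)=0\):
\begin{itemize}
\item a vertex on \(C\) has \(z_v=1\) and its parent is also on \(C\), so \(\xor^D(z)_v=1+1=0\);
\item a vertex \(w\) off \(C\) has \(z_w=0\), and \(z_{\parent_D(w)}\) is either \(0\) or refers to \(\bot\) or to a vertex on \(C\).
\end{itemize}
The second case is a problem: an off-cycle vertex whose parent lies on \(C\) gets \(\xor^D(z)_w=1\), so the plain indicator is not in the kernel. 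I would fix this by setting \(z_w=1\) for every vertex \(w\) whose forward path eventually enters \(C\). Then \(z_w=z_{\parent_D(w)}\) for all such \(w\), and every other vertex has \(z=0\) with its parent also having \(z=0\) or being \(\bot\). So \(z\neq 0\) lies in the kernel. Hence \(\xor^D\) is not injective, and therefore not surjective.

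The first direction is routine. The only subtle step is building the kernel vector in the second direction: the naive indicator of the cycle fails, and one has to take the whole basin of vertices that flow into \(C\).
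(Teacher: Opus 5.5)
Your proof is correct and follows the paper's argument. In the acyclic direction you solve the equations outward from the out-degree-$0$ vertices. In the cycle direction you take the indicator of the set of vertices that can reach $C$ (the paper's $A(C)$, your ``basin'') as a nonzero element mapped to $\mathbf{0}$, and conclude non-surjectivity from finiteness, exactly as the paper does.
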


\begin{proof}
First, suppose that \(D\) contains no directed cycle. Since every node has out-degree at most \(1\), every directed path eventually reaches a node of out-degree \(0\). Thus, \(D\) is a collection of directed trees whose edges point toward roots of out-degree \(0\).

Let \(y\in\{0,1\}^V\) be arbitrary. We construct a configuration \(z\in\{0,1\}^V\) such that \(\xor(z)=y\).

Let \(B_0\) be the set of nodes of out-degree \(0\). For every \(v\in B_0\), set \(z_v=y_v\). Now suppose that \(z\) has already been defined for all nodes at directed distance at most \(i\) from \(B_0\). Let \(v\) be a node at directed distance \(i+1\) from \(B_0\), and let \(u=\parent_D(v)\). Then \(z_u\) has already been defined, so we set \(z_v=y_v+z_u\mod 2\). This guarantees that \(\xor(z)_v=y_v\).

Proceeding inductively, we define \(z\) for every node of \(D\), and therefore \(\xor\) is surjective.

Conversely, suppose that \(D\) contains a directed cycle \(C\). Let \(A(C)\) be the set of nodes that can reach \(C\) by following directed edges. That is, \(v\in A(C)\) if and only if there exists a directed path from \(v\) to some node of \(C\).

Let \(\mathbf{0}\) be the all-zero configuration, and define a configuration \(y\in\{0,1\}^V\) by setting \(y_v=1\) if \(v\in A(C)\), and \(y_v=0\) otherwise.

Clearly \(\xor(\mathbf{0})=\mathbf{0}\). We claim that also \(\xor(y)=\mathbf{0}\). Indeed, if \(v\in A(C)\), then \(v\) has exactly one outgoing neighbor \(u\), and this neighbor also belongs to \(A(C)\). Hence \(\xor(y)_v=1+1=0\mod 2\). If \(v\notin A(C)\), then either \(v\) has no outgoing neighbor, in which case \(\xor(y)_v=y_v=0\), or its outgoing neighbor also lies outside \(A(C)\), in which case again \(\xor(y)_v=0+0=0\mod 2\).

Thus \(\xor(\mathbf{0})=\xor(y)\), while \(y\neq \mathbf{0}\). Therefore, \(\xor\) is not injective. Since \(\{0,1\}^V\) is finite, \(\xor\) is not surjective.
\end{proof}

\subsection{ALOS is in the third level of the \seth hierarchy}
\label{sec:ALOS}

Now, we put all the pieces together and show that \(\alos\) is in the third level of the \seth hierarchy. We start by stating the following immediate consequence of \Cref{lem:dycycle} and \Cref{lemm:XORsurj}.

\begin{lemma}\label{lemm:nonempt}
    Let \(G=(V,E)\) be a graph and let \(x:V\to\{0,1\}\) be an input function. Then \((G,x)\in\alos\) if and only if there exists a valid rooted coloring \((c,\rho)\) such that, for every set \(R\subseteq V\), the \(\xor\) map defined over \(D_R=D(G,c,\rho,R)\) is surjective.
\end{lemma}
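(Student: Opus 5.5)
The statement is obtained by chaining \Cref{lem:dycycle} with \Cref{lemm:XORsurj}, so the plan is a two-line composition rather than a new argument. First I would note that, for any valid rooted coloring \((c,\rho)\) and any \(R\subseteq V\), the directed graph \(D_R=D(G,c,\rho,R)\) has maximum out-degree at most \(1\). This holds by construction: a red node receives an outgoing edge only when \(S_R(v)\) is a singleton, and otherwise it receives none. Hence \Cref{lemm:XORsurj} applies to every \(D_R\), and it gives that \(\xor^{D_R}\) is surjective if and only if \(D_R\) has no directed cycle.

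Next I would substitute this equivalence into \Cref{lem:dycycle}. That lemma says \((G,x)\in\alos\) if and only if some valid rooted coloring \((c,\rho)\) makes every \(D_R\) acyclic. Replacing ``\(D_R\) has no directed cycle'' by ``\(\xor^{D_R}\) is surjective'' inside the quantifiers \(\exists (c,\rho)\,\forall R\) keeps the statement equivalent, because the two conditions are equivalent for each fixed pair \(((c,\rho),R)\). This yields the claim.

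The only point that needs care is that \(\xor^{D_R}\) is defined on configurations in \(\{0,1\}^R\), since the vertex set of \(D_R\) is \(R\) and not \(V\). Both earlier lemmas are stated for arbitrary vertex sets, so this causes no difficulty. I would also mention the degenerate case \(R=\emptyset\): the map on the one-point space \(\{0,1\}^{\emptyset}\) is trivially surjective, and \(D_\emptyset\) is trivially acyclic, so the equivalence holds there as well. I do not expect a real obstacle here; the substantive work was done in \Cref{lem:dycycle} and \Cref{lemm:XORsurj}.
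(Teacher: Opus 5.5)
Your proposal is correct and follows the paper's proof, which composes Lemma~\ref{lem:dycycle} with Lemma~\ref{lemm:XORsurj} by substituting the pointwise equivalence ``acyclic iff surjective'' inside the quantifiers $\exists(c,\rho)\,\forall R$. Your explicit check that every $D_R$ has out-degree at most $1$, so that the XOR lemma applies, and your remark on the case $R=\emptyset$ are details the paper leaves implicit.
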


\begin{proof}
    By \Cref{lem:dycycle}, \((G,x)\in\alos\) if and only if there exists a valid rooted coloring \((c,\rho)\) such that, for every set \(R\subseteq V\), the graph \(D_R=D(G,c,\rho,R)\) has no directed cycle. By \Cref{lemm:XORsurj}, the map \(\xor\) over \(D_R\) is surjective if and only if \(D_R\) contains no directed cycle.
\end{proof}

\Cref{lemm:nonempt} characterizes \(\alos\) in
terms of the surjectivity of XOR maps on the
graphs \(D_R\). However, a \seth verifier cannot
directly check the equations of \(\xor^{D_R}\),
since determining whether a vertex has exactly
one red successor requires distinguishing
neighbors that may carry identical certificates.

Instead, our protocol uses locally checkable
XOR-style equations determined by the distinct
certificate types visible among red successors.
We prove completeness by showing that, for a
BFS coloring rooted at a selected vertex,
the prover can satisfy these equations for
every challenge.

For soundness, the disprover chooses the vertex
set of a shortest directed cycle in the successor
relation. Every red vertex then has exactly one
red successor, so the local equations coincide
with those of \(\xor^{D_R}\). The non-surjectivity
of this map provides a challenge that no prover
response can satisfy.

\thmALOS*

\begin{proof}
The protocol has three rounds. In the first round, the prover assigns to every node a rooted coloring \((c,\rho)\), where \(c(v)\in\{0,1,2\}\) and \(\rho(v)\in\{0,1\}\).

In the second round, the disprover assigns to every node a red bit \(r(v)\in\{0,1\}\), defining the set \(R=\{v\in V:r(v)=1\}\), and a bit \(y(v)\in\{0,1\}\). The bit \(y(v)\) is interpreted as a challenge configuration on the red nodes.

In the third round, the prover assigns to every node a bit \(z(v)\in\{0,1\}\) and an auxiliary bit \(a(v)\in\{0,1\}\).

Each node \(v\) first checks that \(x(v)\in\{0,1\}\),
rejecting otherwise. It then verifies that \((c,\rho)\)
is a valid rooted coloring:
\begin{enumerate}
    \item if \(\rho(v)=1\), then \(v\) checks that
    \(x(v)=1\) and \(S(v)=\emptyset\);
    \item if \(\rho(v)=0\), then \(v\) checks that
    \(S(v)\neq\emptyset\).
\end{enumerate}
These checks are valid in \seth: the input conditions
are local, and the successor conditions only require
testing the presence of neighboring colors.

If \(v\notin R\), then \(v\) performs no further check. Suppose then that \(v\in R\). Let \(A(v)\) be the set of pairs \((a(u),z(u))\) that appear among red successors of \(v\), that is, among nodes \(u\in S_R(v)\). 

The verifier now checks the local XOR equation as follows:
\begin{itemize}
    \item if \(A(v)=\emptyset\), then \(v\) checks that \(y(v)=z(v)\);
    \item if \(A(v)=\{(\alpha,b)\}\), then \(v\) checks that \(y(v)=z(v)+b\mod 2\);
    \item if \(|A(v)|>1\), then \(v\) checks that \(y(v)=z(v)\).
\end{itemize}

In other words, if \(v\) sees a unique red-successor type, it uses that type as its successor in the XOR equation. If it sees either no red-successor type or at least two different red-successor types, it treats itself as having no successor. The auxiliary bit is precisely what allows the last prover to exhibit that two red successors should be distinguished, in which case the outgoing edge of \(v\) is cut.

We now prove soundness and completeness.

First, assume that \((G,x)\in\alos\). Let \(r\) be a selected
node. In the first round, the prover chooses \(\rho(r)=1\),
\(\rho(v)=0\) for every \(v\neq r\), and a BFS coloring $
    c(v)=\dist(r,v)\bmod 3.$
This is a valid rooted coloring. Moreover, for every non-root
vertex \(v\),
\[
    S(v)=\{u\in N(v):\dist(r,u)=\dist(r,v)-1\}.
\]
In particular, every successor of \(v\) is strictly closer
to \(r\).

Now consider an arbitrary challenge \((R,y)\) chosen by the
disprover. In the third round, the prover sets $a(v)=0$ for every $v\in V$,
and constructs \(z\) in increasing order of distance from
\(r\).

Suppose that the values \(z(u)\) have already been defined
for all vertices \(u\) with
\(\dist(r,u)<\dist(r,v)\). Since every successor of \(v\)
is closer to \(r\), all values \(z(u)\) for
\(u\in S_R(v)\) are already fixed. If \(v\notin R\), choose \(z(v)\) arbitrarily. If \(v\in R\),
define
\[
z(v)=
\begin{cases}
y(v),
&
\text{if }
\{z(u):u\in S_R(v)\}
\text{ is empty or contains both bits},\\[1mm]
y(v)+b\pmod 2,
&
\text{if }
\{z(u):u\in S_R(v)\}=\{b\}.
\end{cases}
\]

Since \(a\equiv0\), the set \(A(v)\) seen by a red vertex is $ A(v)=\{(0,z(u)):u\in S_R(v)\}.$
Therefore, if the successor values are empty or contain both
\(0\) and \(1\), the verifier uses the equation
\(y(v)=z(v)\); if all red successors have the same value
\(b\), it uses the equation
\(y(v)=z(v)+b\pmod2\).
By construction, the corresponding equation is satisfied
at every red vertex.

Hence, for every challenge \((R,y)\), the prover has a
third-round response \((a,z)\) that makes all vertices
accept.

Now assume that \((G,x)\notin\alos\). If some vertex
has input outside \(\{0,1\}\), it rejects regardless
of the certificates. Otherwise, \(x\in\{0,1\}^V\),
so no vertex is selected. Consider any first-round
certificate \((c,\rho)\) proposed by the prover.

If \((c,\rho)\) is not a valid rooted coloring, then at least one node rejects in the  verification step, independently of the remaining certificates.

Suppose then that \((c,\rho)\) is valid. By \Cref{lem:dycycle}, there exists a set \(R\subseteq V\) such that \(D_R=D(G,c,\rho,R)\) contains a directed cycle. The disprover chooses \(R\) to be the vertex set of such a cycle, so that every red node has exactly one red successor.

By \Cref{lemm:XORsurj}, the map \(\xor^{D_R}\) is not surjective. Hence there
exists a configuration \(y:R\to\{0,1\}\) with no preimage under \(\xor^{D_R}\).
The disprover chooses this configuration \(y\).

Now consider any third-round answer of the prover, including any auxiliary
bits and any configuration \(z\). Since every red node has exactly one red
successor, the auxiliary bits do not affect the local equation checked by the
verifier: each red node \(v\) checks precisely
\(y(v)=z(v)+z(\parent_{D_R}(v))\mod 2\).
Thus, if all red nodes accepted, we would have \(\xor^{D_R}(z)=y\), contradicting
the choice of \(y\). Therefore at least one red node rejects.

All certificates have constant size: \(c(v)\), \(\rho(v)\), \(r(v)\), \(y(v)\), \(z(v)\), and \(a(v)\) are all taken from fixed finite alphabets. We conclude that \(\alos\) belongs to the third level of the \seth hierarchy.
\end{proof} 

\subsection{AMOS is in the second level of the \seth hierarchy}
\label{sec:AMOS}

In this section, we show that \(\amos\) belongs to the
second level of the \seth hierarchy. Let \(G=(V,E)\)
be a graph and let \(x:V\to\{0,1\}\) be an input
function. We write \(M(x)=\{v\in V:x(v)=1\}\).
The language \(\amos\) consists of all configurations
\((G,x)\) such that \(|M(x)|\leq 1\).

The idea is to use a universal challenge to force two
marked nodes, if they exist, to disagree on a globally
constant labeling. Since the graph is connected, the
prover can enforce such a labeling using only local
equality checks.

\thmAMOS*

\begin{proof}
We describe a two-round protocol in the \seth model.

In the first round, the disprover assigns a bit
\(d(v)\in\{0,1\}\) to every node \(v\in V\).
In the second round, the prover answers with a bit
\(a(v)\in\{0,1\}\) at every node.

The local verifier at each node \(v\) first checks
that \(x(v)\in\{0,1\}\), rejecting otherwise.
It then checks the following conditions:

\begin{enumerate}
    \item For every neighbor \(u\in N(v)\),
    $
        a(v)=a(u).
    $
    \item If \(x(v)=1\), then
    $
        a(v)=d(v).
    $
\end{enumerate}

Both conditions are locally checkable in the \seth
model. Indeed, a node knows its own input and
certificates, and can verify whether all neighboring
certificates have the same \(a\)-value as its own.

We now prove correctness.

Suppose first that \((G,x)\in\amos\).
Let \(d:V\to\{0,1\}\) be an arbitrary challenge
of the disprover.

If \(M(x)=\emptyset\), the prover assigns
\(a(v)=0\) to every node. Both conditions are
satisfied.

If \(M(x)=\{r\}\), the prover assigns
\(a(v)=d(r)\) to every node \(v\in V\).
The first condition holds because \(a\) is
constant, and the second holds because \(r\)
is the only marked node.

Thus, for every challenge \(d\), there exists
a response \(a\) such that all nodes accept.

Conversely, suppose that \((G,x)\notin\amos\).
If some vertex has input outside \(\{0,1\}\),
it rejects immediately. Otherwise,
\(x\in\{0,1\}^V\), and there exist two
distinct marked nodes \(u,v\in M(x)\).

The disprover chooses a challenge \(d\) such that
\(d(u)=0\) and \(d(v)=1\).

Assume, for contradiction, that the prover can
produce a labeling \(a\) accepted by every node.
By the first verification condition and the
connectivity of \(G\), the labeling \(a\) must
be constant over \(V\). However, the second
condition requires
\[
a(u)=d(u)=0
\quad\text{and}\quad
a(v)=d(v)=1,
\]
a contradiction.

Therefore, for this challenge, every response
of the prover causes at least one node to reject.

Finally, both certificates consist of a single
bit per node. Hence the protocol uses constant-size
certificates and has the quantifier pattern
$\forall d\,\exists a.$
We conclude that $\amos\in\hier{\seth}_2.$
\end{proof}

\subsection{Leader, spanning-tree and closure operations.}\label{sec:leader}

Now we state some corollaries of the previous results. The first important corollary is that the problem $\leader$ is in the hierarchy $\hier{\seth}.$

\thmleader*

\begin{proof}
    Since $\leader = \alos \cap \amos,$ as a consequence of the first result in \Cref{lem:intun} we have that  $\leader \in \hier{\seth}_3.$
\end{proof}

A direct application of the latter corollary and of the coloring orientations used in the proof of \Cref{th:alos} is that a suitable spanning-tree-like language belongs to the \seth hierarchy. 

It is important to notice that we use a slightly unconventional definition of this language. Traditionally, see, e.g.~\cite{korman2005proof}, the language \textsc{SpanningTree} consists of graphs with marked edges, where the marked edges encode a spanning tree \(T\) of \(G\). This definition does not suit our model, since in the \seth model we have no identifiers or ports that would allow nodes to refer to individual incident edges.

Instead, we define a \seth-induced spanning tree language, denoted by \(\spann\), whose instances are network configurations of the form \((G,(x,c))\), where \(x:V\to\{0,1\}\) and \(c:V\to\{0,1,2\}\). The map \(x\) is intended to mark the root, and the coloring \(c\) is intended to define local successor relations through the sets
\(S(v)=\{u\in N(v):c(u)=c(v)-1\mod 3\}\).

Formally, we say that \((G,(x,c))\in\spann\) if the following conditions hold:
\begin{enumerate}
    \item there exists a unique node \(r\in V\) such that \(x(r)=1\);
    \item the pair \((c,x)\) is a valid rooted coloring;
    \item for every set \(R\subseteq V\), the graph \(D_R=D(G,c,x,R)\) is acyclic.
\end{enumerate}

Equivalently,
\[
\spann=\left\{(G,(x,c)):
\begin{array}{l}
(G,x)\in\leader,\\
(c,x)\text{ is a valid rooted coloring,}\\
\text{for every }R\subseteq V,\ D_R\text{ is acyclic}
\end{array}
\right\}.
\]

Observe that the first condition is exactly \((G,x)\in\leader\). The second condition is locally checkable in the \seth model. The third condition can be verified using the XOR characterization of acyclicity from \Cref{lemm:XORsurj}, exactly as in the proof of \Cref{th:alos}.

More precisely, we obtain the following corollary.

\begin{corollary}\label{cor:span}
    \(\textnormal{\spann}\in \hier{\seth}_3\).
\end{corollary}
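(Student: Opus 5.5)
The proof expresses \(\spann\) as the intersection of three languages, shows that each lies at level at most \(3\), and then applies \Cref{lem:intun}. Write \(\spann=L_1\cap L_2\cap L_3\). Here \(L_1\) consists of the configurations whose root-marking component \(x\) is a \leader instance. \(L_2\) consists of those for which \((c,x)\) is a valid rooted coloring. \(L_3\) consists of those for which \(D_R=D(G,c,x,R)\) is acyclic for every \(R\subseteq V\).

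Each piece is handled as follows.

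\begin{itemize}
\item \(L_1\): by \Cref{cor:leader}, \(L_1\in\hier{\seth}(\Sigma_3)\). The node simply ignores the \(c\)-component of its input.
\item \(L_2\): this is a local check in the \seth model. A node only needs its own input \((x(v),c(v))\) and whether some neighbor carries color \(c(v)-1\bmod 3\). Neighbors' inputs are not visible, so I first have the prover copy \(c\) as a certificate in round one, and each node checks that the copy agrees with its own input. This places \(L_2\) in \(\hier{\seth}(\Sigma_1)\), padded with dummy rounds to the pattern \(\exists\forall\exists\).
\item \(L_3\): I reuse the protocol of \Cref{th:alos}, with the first prover round replaced by a copy of the input \((c,x)\) that is checked locally against the node's own input.
\end{itemize}

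For \(L_3\), the soundness and completeness arguments must be redone without the freedom to choose the coloring.

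\textbf{Completeness.} Suppose every \(D_R\) is acyclic. Fix a challenge \((R,y)\). I must produce \((a,z)\) satisfying the visible-type equations. Set \(a\equiv 0\), and define \(z\) on \(R\) by recursion along the relation \(v\to S_R(v)\), restricted to vertices \(v\) with \(|S_R(v)|=1\).

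The key claim is that the verifier's equation at \(v\) refers to a successor value only when all red successors of \(v\) carry the same \(z\)-value. If \(v\) has two or more red successors, the prover may simply let them disagree, or else treat \(v\) as cut. This step needs care. The recursion must be well founded on the relation ``\(v\) points to its red successors'', and that relation can contain cycles even when every \(D_R\) is acyclic. However, any cycle in that relation is a cycle \(C\) with \(D_C\) containing it: pass to a minimal cycle, exactly as in the proof of \Cref{lem:dycycle}. So acyclicity of all \(D_R\) is equivalent to acyclicity of the full successor relation \(v\to S(v)\). With this, \(z\) can be defined along a topological order, just as for the BFS coloring in \Cref{th:alos}.

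\textbf{Soundness.} If some \(D_R\) has a cycle, take a minimal cycle of the successor relation. The disprover then proceeds exactly as in \Cref{th:alos}, using \Cref{lemm:XORsurj}.

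\textbf{Combining.} All three protocols have the pattern \(\exists\forall\exists\). Intersecting them round by round as in the proof of \Cref{lem:intun} keeps the level at \(3\), because all of them start existentially.

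I expect the main obstacle to be the completeness step for \(L_3\). The earlier proof relied on a BFS coloring in which successors strictly decrease the distance to the root. Here that must be replaced by the observation that the hypothesis ``every \(D_R\) is acyclic'' already forces the whole successor digraph to be acyclic, via the minimal-cycle argument. I would establish this observation first as a small claim.
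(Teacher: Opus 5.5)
Your proposal is correct and is essentially the paper's proof. The key step is the same minimal-cycle observation: every \(D_R\) is acyclic exactly when the full successor relation \(H=\{(v,u):u\in S(v)\}\) is acyclic. Hence \(a\equiv0\), together with the \alos{} rule for \(z\) applied along a topological order of \(H\), gives completeness; that rule is computed from the \(z\)-values of \emph{all} red successors, not only when \(|S_R(v)|=1\). Soundness uses a minimal cycle together with \Cref{lemm:XORsurj}.

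The only difference is bookkeeping. The paper supplies the copy \(\widehat c\) in the last existential round, obtains a \(\forall\exists\) protocol for conditions 2--3, and intersects it with the \(\exists\forall\exists\) protocol for \leader. You instead copy \(c\) in round one and intersect three \(\exists\forall\exists\) protocols round by round. This is also fine: protocols with equal quantifier patterns can be merged round by round, which avoids the extra round allowed for in \Cref{lem:intun}.
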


\begin{proof}
We verify the three conditions in the definition of \(\spann\).

The first condition, \((G,x)\in\leader\), belongs to
\(\hier{\seth}_3\) by \Cref{cor:leader}.

For the remaining conditions, the prover uses a certificate
\(\widehat c:V\to\{0,1,2\}\) copying the input coloring \(c\).
Every vertex checks locally that $
    \widehat c(v)=c(v).$ 
Hence neighboring values of \(c\) become visible through the
certificates. The validity of the rooted coloring \((c,x)\)
can then be checked locally: a vertex with \(x(v)=1\) checks
that it has no successor, while every vertex with \(x(v)=0\)
checks that it has at least one successor.

It remains to verify that, for every \(R\subseteq V\), the
graph \(D_R=D(G,c,x,R)\) is acyclic. Consider the directed
relation $
    H=\{(v,u):u\in S(v)\}.$
We first observe that
\[
    H\text{ is acyclic}
    \quad\Longleftrightarrow\quad
    D_R\text{ is acyclic for every }R\subseteq V.
\]
The forward implication is immediate. Conversely, suppose
that \(H\) contains a directed cycle, and choose one of
minimum length. Let \(R\) be its vertex set. By minimality,
each vertex of the cycle has exactly one successor in \(R\);
otherwise an additional edge between two vertices of the
cycle would produce a shorter directed cycle. Hence this
cycle also appears in \(D_R\).

We now give a two-round protocol for the acyclicity condition.
The disprover chooses \(R\subseteq V\) and a bit labeling
\(y\). The prover responds with a bit labeling \(z\), an
auxiliary bit labeling \(a\), and the copy \(\widehat c\)
described above.

If \(H\) is acyclic, process its vertices in an order in which
every successor of a vertex has already been processed.
Set \(a\equiv0\), and define \(z(v)\) exactly as in the
completeness proof of \Cref{th:alos}: if the set of \(z\)-values
of red successors of \(v\) is empty or contains both bits,
set \(z(v)=y(v)\); if all red successors have the same value
\(b\), set $z(v)=y(v)+b\pmod2.$
Thus every local equation checked by the verifier is satisfied.

Conversely, if some \(D_R\) contains a directed cycle, the
disprover chooses \(R\) to be the vertex set of such a cycle
and chooses a configuration \(y\) with no preimage under
\(\xor^{D_R}\). Every red vertex then has exactly one red
successor, so the auxiliary bits cannot change the equation
being checked. Hence no response of the prover makes all
vertices accept.

Therefore the second and third conditions can be verified
with the quantifier pattern
$\forall (R,y)\;\exists(\widehat c,a,z).$
They belong to \(\hier{\seth}_2\). Intersecting this language
with \(\leader\in\hier{\seth}_3\), using
\Cref{lem:intun}, gives
$\spann\in\hier{\seth}_3.$
\end{proof}

Now we show that the complement of a language in the \seth hierarchy also belongs to the hierarchy, with a constant increase in the level. More precisely, for \(k\geq 0\), let
\[
\textnormal{co-}\hier{\seth}_k=\{L:\overline{L}\in \hier{\seth}_k\}.
\]

\begin{corollary}\label{cor:comp}
    \(\textnormal{co-}\hier{\seth}_k\subseteq \hier{\seth}_{k+3}\).
\end{corollary}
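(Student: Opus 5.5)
The plan is to dualize the quantifier prefix and then certify the existence of a rejecting node with the three-round protocol for \(\alos\) from \Cref{th:alos}.

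Let \(\overline L\in\hier{\seth}_k\), witnessed by a verifier \(A\), a constant \(c\), and an alternating prefix \(\quantif_1\ell_1\cdots\quantif_k\ell_k\). Negating the defining equivalence gives, for every configuration,
\[
(G,x)\in L
\iff
\overline{\quantif_1}\ell_1\cdots\overline{\quantif_k}\ell_k\;\exists v\in V:\ A(G,x,v,\overline\ell)=0 ,
\]
where \(\overline{\forall}=\exists\) and \(\overline{\exists}=\forall\). The existential quantification over a vertex is not a local acceptance condition. Fix the labelings \(\overline\ell\) and define the bit
\[
m(v)=1-A(G,x,v,\overline\ell).
\]
Each node can compute \(m(v)\) from its own \seth view. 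The remaining condition is then exactly \((G,m)\in\alos\).

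The new protocol has \(k+3\) rounds. First come the \(k\) dualized rounds, which use the original certificate alphabets. Then come the three rounds \(\exists(c,\rho)\,\forall(r,y)\,\exists(a,z)\) of the \(\alos\) protocol, run with \(m\) playing the role of the input.

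The new verifier at \(v\) works in two steps:
\begin{itemize}
\item It projects its neighborhood view onto the first \(k\) coordinates. This projection is possible in \seth, since the projection of a set of tuples is the set of projections. From this view it computes \(m(v)\).
\item It then runs the \(\alos\) verifier with input \(m(v)\).
\end{itemize}
The key point to check is that the \(\alos\) verifier reads the input only at the node itself. Indeed, the only input-dependent test is ``\(\rho(v)=1\Rightarrow x(v)=1\)''. Every test involving neighbors uses only the certificates \(c,r,a,z\), which are visible in the \seth view. Hence substituting the locally computed \(m(v)\) for \(x(v)\) is legitimate.

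Correctness follows in two steps:
\begin{itemize}
\item For each fixed \(\overline\ell\), the proof of \Cref{th:alos} shows the following: the prover has a winning strategy in the last three rounds if and only if \(m\) marks at least one node. That proof holds for an arbitrary \(0/1\) input on a connected graph.
\item Plugging this equivalence into the dualized prefix yields exactly the displayed characterization of \(L\).
\end{itemize}
Certificates remain of constant size.

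The main bookkeeping point is that the resulting prefix must alternate. If \(\overline{\quantif_k}=\forall\), the prefix \(\overline{\quantif_1}\cdots\overline{\quantif_k}\,\exists\forall\exists\) already alternates and has length \(k+3\). If \(\overline{\quantif_k}=\exists\), I merge the first \(\alos\) round into the \(k\)-th round. This gives an alternating prefix of length \(k+2\), and I pad it with a dummy final round whose certificates are ignored to reach level \(k+3\).

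The case \(k=0\) is the same argument with no dualized rounds, and it gives level \(3\). Hence \(L\in\hier{\seth}_{k+3}\).
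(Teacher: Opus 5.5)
Your proof is correct and follows essentially the same route as the paper: you dualize the quantifier prefix and then append the three-round \(\alos\) protocol to certify that some node rejects. The differences are minor. The paper has the prover existentially guess an extra labeling \(\ell'\) of rejecting nodes, checked via \(\ell'(v)=1\Rightarrow A(G,x,v,\overline{\ell})=0\), whereas you feed the locally computed indicator \(m(v)\) directly; and your explicit merging and padding to keep the prefix alternating is a detail the paper leaves implicit.
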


\begin{proof}
Let \(L\in \textnormal{co-}\hier{\seth}_k\). Then \(\overline{L}\in \hier{\seth}_k\). By definition, there exists a \seth local verifier \(A\) such that
\[
(G,x)\in\overline{L}
\iff
\quantif_1\ell_1\,
\quantif_2\ell_2\,
\cdots
\quantif_k\ell_k\,
\forall v\in V,\ 
A(G,x,v,\overline{\ell})=1,
\]
where \(\overline{\ell}=(\ell_1,\ldots,\ell_k)\).

Taking the complement of this statement, we obtain
\[
(G,x)\in L
\iff
\overline{\quantif}_1\ell_1\,
\overline{\quantif}_2\ell_2\,
\cdots
\overline{\quantif}_k\ell_k\,
\exists v\in V,\ 
A(G,x,v,\overline{\ell})=0,
\]
where \(\overline{\exists}=\forall\) and \(\overline{\forall}=\exists\).

The existential quantifier over nodes can be expressed by means of a non-empty set of marked nodes. More precisely, introduce an additional labeling \(\ell':V\to\{0,1\}\). The condition \(\exists v\in V\) such that \(A(G,x,v,\overline{\ell})=0\) is equivalent to the existence of a labeling \(\ell'\) such that:
\begin{enumerate}
    \item \((G,\ell')\in\alos\);
    \item for every node \(v\), if \(\ell'(v)=1\), then \(A(G,x,v,\overline{\ell})=0\).
\end{enumerate}

Therefore,
$$
\begin{aligned}
(G,x)\in L
\iff\;&
\overline{\quantif}_1\ell_1\,
\overline{\quantif}_2\ell_2\,
\cdots
\overline{\quantif}_k\ell_k\,
\exists \ell'\in\{0,1\}^V
\\
&\text{such that }
\begin{cases}
(G,\ell')\in\alos,\\
\forall v\in V,\ \ell'(v)=1\Rightarrow
A(G,x,v,\overline{\ell})=0.
\end{cases}
\end{aligned}
$$

The condition \((G,\ell')\in\alos\) is in \(\hier{\seth}_3\) by \Cref{th:alos}. The second condition is locally checkable without adding any new alternation: each node only has to verify that, if it is marked by \(\ell'\), then the original verifier \(A\) rejects at that node.

Thus, after the \(k\) complemented quantifier rounds, we can append the three-round \seth protocol for \(\alos\). This gives a protocol in \(\hier{\seth}_{k+3}\). Hence \(L\in \hier{\seth}_{k+3}\), and therefore
\[
\textnormal{co-}\hier{\seth}_k\subseteq \hier{\seth}_{k+3}.
\]
\end{proof}

\begin{example}
    Since \(3\)-\textsf{colorability} belongs to \(\hier{\seth}_1\), 
    \Cref{cor:comp} implies that \(\textsf{non}\)-\(3\)-\textsf{colorability} 
    belongs to \(\hier{\seth}_4\). We remark that \cite{goos2011locally} shows 
    that \(\textsf{non}\)-\(3\)-\textsf{colorability} requires certificates of 
    size \(\tilde{\Omega}(n^2)\) at the first level of the hierarchy, even when 
    the nodes of the network have unique identifiers and the verification 
    algorithm is allowed to inspect a constant-radius neighborhood.
\end{example}

The same argument applies to the \multiseth and
\portsh hierarchies. Indeed, the proof only uses
the fact that \(\alos\) belongs to the third level
and that the local rejection condition of the
original verifier can be checked in the underlying
model. Since
\[
    \hier{\seth}_3
    \subseteq\hier{\multiseth}_3
    \subseteq\hier{\portsh}_3,
\]
we obtain
\[
\begin{aligned}
\textnormal{co-}\hier{\multiseth}_k
&\subseteq\hier{\multiseth}_{k+3},\\
\textnormal{co-}\hier{\portsh}_k
&\subseteq\hier{\portsh}_{k+3}.
\end{aligned}
\]
The construction is uniform in the port numbering
and also applies to predicates of port-numbered
configurations.

\section{The power of the \seth hierarchy}\label{sec:SETequalsMSO}

In this section, we show that the \seth hierarchy equals MSO. Recall that MSO on graphs uses quantification over vertices, called first-order
quantification, and over sets of vertices, called monadic second-order
quantification. Its atomic formulas are:
\begin{itemize}
    \item \(x_i=x_j\), expressing equality of vertices;
    \item \(x_i\sim x_j\), expressing adjacency;
    \item \(x_i\in X_j\), expressing membership of a vertex in a set.
\end{itemize}

We show first that any graph language expressible in MSO belongs to some level of the \seth hierarchy. Later, we show the opposite direction. 

\subsection{The \seth hierarchy contains MSO}\label{sec:MSOinSET}

We first prove the containment for MSO in the \seth hierarchy.

\lemMSOinSET*

\begin{proof}
Let \(\phi\) be an MSO sentence. We prove the result by induction on
the structure of \(\phi\). Variable assignments are represented by
bit labelings. A set variable \(X\) is represented by an arbitrary
labeling $\ell_X:V\to\{0,1\},$
where \(\ell_X(v)=1\) means \(v\in X\). A first-order variable \(x\)
is represented by a bit labeling
$\ell_x:V\to\{0,1\},$
intended to mark exactly one vertex.

We prove the stronger claim that, for every subformula
\(\psi\), its truth can be verified uniformly for any
fixed assignment of its free variables, represented by
bit labelings. These labelings are treated as previously
assigned certificate fields: they remain visible to the
verifier and its neighbors through the usual \seth view,
and are not quantified again by the subprotocol.

We assume that the labelings of free first-order
variables mark exactly one vertex. Under this
invariant, the atomic formulas are locally checkable.

For \(x=y\), every vertex checks that $\ell_x(v)=\ell_y(v).$
For \(x\sim y\), the unique vertex marked by \(\ell_x\) checks that
the type \(\ell_y=1\) appears among its neighbors. For \(x\in X\),
the unique vertex marked by \(\ell_x\) checks that
\(\ell_X(v)=1\). If the vocabulary contains input predicates
\(P_\sigma\), corresponding to a fixed finite input alphabet, then
\(P_\sigma(x)\) is checked by the unique vertex marked by
\(\ell_x\), which verifies that its own input is \(\sigma\).

Negations of atomic formulas, as well as conjunctions and
disjunctions, can be handled using closure of the \seth hierarchy
under complement, intersection, and union.

It remains to simulate quantification. For a set variable there is
no validity restriction. Thus $\exists X\,\psi$ is simulated by an existentially quantified bit labeling \(\ell_X\),
followed by the protocol for \(\psi\), and similarly $\forall X\,\psi$ is simulated using a universally quantified bit labeling.

For a first-order variable, the labeling must represent a single
vertex. Let \(\operatorname{Sing}(\ell_x)\) denote the property that
\((G,\ell_x)\in\leader\). By \Cref{cor:leader},
\(\operatorname{Sing}\) belongs to the \seth hierarchy.

An existential first-order quantifier is simulated as
$\exists \ell_x\;
    \bigl(
        \operatorname{Sing}(\ell_x)
        \land
        \widehat{\psi}
    \bigr).$
Thus an invalid labeling cannot witness an existential quantifier. A universal first-order quantifier is simulated as $
    \forall \ell_x\;
    \bigl(
        \neg\operatorname{Sing}(\ell_x)
        \lor
        \widehat{\psi}
    \bigr).$ Hence labelings that do not represent a vertex impose no
requirement, while every singleton labeling must satisfy the
translation of \(\psi\). This is exactly the semantics of
universal first-order quantification.

By \Cref{lem:intun,cor:comp}, Boolean combinations
can be implemented uniformly while preserving all
previously assigned certificate fields. Likewise,
each logical quantifier introduces a new bit labeling
before the auxiliary rounds verifying its scope.
Consecutive quantifiers of the same type can be
merged, and dummy rounds can be added when needed.

The induction therefore yields a finite-round
\seth protocol for every subformula, uniformly
in its free-variable assignments. Since \(\phi\)
is a sentence, the resulting protocol has no
free variables and recognizes precisely \(L_\phi\).

Hence every MSO-definable distributed language over a finite input
alphabet belongs to the \seth hierarchy.
\end{proof}

\subsection{MSO contains the \seth hierarchy}\label{sec:SETinMSO}

We now show that the \seth hierarchy captures
exactly MSO-definable languages. Combined with \Cref{lem:MSOinSET}, this gives an exact correspondence between
MSO and the \seth hierarchy.

Let \(A\) be a local verifier in the \seth model
using certificate alphabets
\(\Gamma_1,\ldots,\Gamma_k\), and let \(\Sigma\)
be the finite input alphabet.

The local view of a node consists of its own
input and certificate sequence, together with
the set of certificate sequences appearing
among its neighbors. In particular, a node
does not observe the input values of its
neighbors.

We therefore distinguish two types of local
information. An \emph{own type} is a tuple
\[
\tau=(\sigma,a_1,\ldots,a_k)
\in\Sigma\times\prod_{i\in[k]}\Gamma_i,
\]
while a \emph{neighbor certificate type} is
a tuple
\[
\alpha=(a_1,\ldots,a_k)
\in\prod_{i\in[k]}\Gamma_i.
\]

Each certificate assignment is encoded by
a finite tuple of monadic second-order
variables, one set for each possible
certificate value. These sets are required
to form a partition of the vertex set.

We then construct MSO formulas describing
the own type of a node and the set of
neighbor certificate types it observes.
Since all alphabets are finite, there are
only finitely many possible local views,
and the acceptance condition of the
verifier can be expressed as a finite
disjunction of MSO formulas.

\lemSETinMSO*

\begin{proof}
Let \(L\in \hier{\seth}_k\). By definition, there exists a local verifier
\(A\), finite certificate alphabets \(\Gamma_1,\ldots,\Gamma_k\), and
quantifiers \(\quantif_i\in\{\exists,\forall\}\), for \(i\in[k]\), such that
\((G,x)\in L\) if and only if
\[
\quantif_1\ell_1:V\to\Gamma_1\;
\quantif_2\ell_2:V\to\Gamma_2\;
\cdots\;
\quantif_k\ell_k:V\to\Gamma_k\;
\forall v\in V,\ A(G,x,v,\ell_1,\ldots,\ell_k)=1.
\]

Let \(\Sigma\) be the finite input alphabet. For every input symbol
\(\sigma\in\Sigma\), we use a unary predicate \(P_\sigma\), where
\(P_\sigma(v)\) means that \(x(v)=\sigma\).

For each certificate round \(i\in[k]\), we encode a labeling
\(\ell_i:V\to\Gamma_i\) by a tuple of monadic second-order variables
\( \overline X^i=(X^i_a)_{a\in\Gamma_i}\).
The intended meaning is that \(v\in X^i_a\) if and only if \(\ell_i(v)=a\). We enforce this intended meaning with the partition formula
\[
\Part_i(\overline X^i)
:=
\forall v\,
\left[
\left(\bigvee_{a\in\Gamma_i} v\in X^i_a\right)
\wedge
\bigwedge_{\substack{a,b\in\Gamma_i\\a\neq b}}
\neg\left(v\in X^i_a\wedge v\in X^i_b\right)
\right].
\]
Thus, \(\Part_i(\overline X^i)\) says that every vertex receives exactly one
certificate from \(\Gamma_i\).

\newcommand\translationquantif{T}

We now define the MSO translation of the quantifier prefix. For a formula
\(\psi\), define recursively
\[
\translationquantif_{k+1}(\psi):=\psi,
\]
and, for \(i\in[k]\),
\[
\translationquantif_i(\psi):=
\begin{cases}
\exists \overline X^i\,
\bigl(\Part_i(\overline X^i)\wedge \translationquantif_{i+1}(\psi)\bigr),
&\text{if }\quantif_i=\exists,\\[1ex]
\forall \overline X^i\,
\bigl(\Part_i(\overline X^i)\Rightarrow \translationquantif_{i+1}(\psi)\bigr),
&\text{if }\quantif_i=\forall.
\end{cases}
\]
This ensures that existential and universal quantifiers range only over valid certificate labelings.

It remains to express the local acceptance
predicate of \(A\) in MSO.

We distinguish the finite sets of own types
and neighboring certificate types:
\[
\mathcal T_{\mathrm{own}}
=
\Sigma\times\prod_{i\in[k]}\Gamma_i,
\qquad
\mathcal T_{\mathrm{nbr}}
=
\prod_{i\in[k]}\Gamma_i.
\]

For every own type
\(\tau=(\sigma,a_1,\ldots,a_k)\),
define
\[
\Type_\tau(v)
:=
P_\sigma(v)\wedge
\bigwedge_{i\in[k]}v\in X^i_{a_i}.
\]

For every neighboring certificate type
\(\alpha=(a_1,\ldots,a_k)\), define
\[
\Cert_\alpha(u)
:=
\bigwedge_{i\in[k]}u\in X^i_{a_i}.
\]

Notice that \(\Cert_\alpha(u)\) does not
refer to the input of \(u\), since
neighboring input values are not visible
in the \seth model.

For every \(T\subseteq\mathcal T_{\mathrm{nbr}}\),
define
\[
\begin{aligned}
\SetView_T(v)
:={}&
\bigwedge_{\alpha\in T}
\exists u\,
\bigl(u\sim v\wedge\Cert_\alpha(u)\bigr)
\\
&\wedge
\forall u\,
\left(
u\sim v
\Rightarrow
\bigvee_{\alpha\in T}\Cert_\alpha(u)
\right).
\end{aligned}
\]

The first condition requires every
certificate type in \(T\) to appear
among the neighbors of \(v\), while
the second excludes all other types.
Thus \(\SetView_T(v)\) describes exactly
the neighboring information available
to the \seth verifier.

Since \(A\) is fixed and all alphabets
are finite, its accepting local views
form a finite set $
\mathcal A
\subseteq
\mathcal T_{\mathrm{own}}
\times 2^{\mathcal T_{\mathrm{nbr}}}.
$ We define $
\Accept_A(v)
:=
\bigvee_{(\tau,T)\in\mathcal A}
\left(
\Type_\tau(v)\wedge\SetView_T(v)
\right).
$

By construction, \(\Accept_A(v)\) holds
if and only if the original \seth
verifier \(A\) accepts at \(v\).

Finally, define the MSO sentence
$
\varphi_L := \translationquantif_1\bigl(\forall v\,\Accept_A(v)\bigr).$
By the construction above, for every input graph \((G,x)\),
\[
(G,x)\in L
\quad\text{if and only if}\quad
(G,x)\models\varphi_L.
\]
Therefore \(L\) is MSO-definable.

We can also bound the quantifier depth of the resulting
formula in terms of the certificate alphabets. The formula
\(\forall v\,\Accept_A(v)\) has quantifier depth at most \(2\).
For each round \(i\), the tuple \(\overline X^i\) contains
\(|\Gamma_i|\) monadic second-order variables, while the
partition condition \(\Part_i(\overline X^i)\) has quantifier
depth \(1\). Since this condition is combined with the
remaining translation by conjunction or implication, the
recursive definition of \(\translationquantif_i\) gives
a quantifier depth of at most
\[
2+\sum_{i=1}^{k}|\Gamma_i|
\]
for \(\varphi_L\). Thus the bound depends on the certificate
alphabets of the chosen protocol, as well as on the number
of certificate rounds.
\end{proof}

\subsection{Trading horizon for interaction}
\label{sec:horizon-set}

In this subsection, we discuss extensions of the \seth model in which the
verifier can inspect certificates beyond its immediate neighborhood. We first
explain that there are two natural ways to define such a model, denoted the \emph{weak \(r\)-horizon \seth model} and the \emph{strong \(r\)-horizon \seth
model}.  The weak
model need not become more expressive as the
horizon increases. For the strong model, we
show that any fixed horizon can be simulated
at horizon one with at most one additional
interaction.

In the \emph{weak \(r\)-horizon \seth model}, a node
\(v\) always has access to its own input \(x(v)\)
and its own certificate sequence
\(\overline{\ell}(v)\), separately from the
certificates of other vertices. In addition, it
observes the set
$W_r(v)=
\left\{
\overline{\ell}(u):
u\in B_r(v)\setminus\{v\}
\right\},$
where
$B_r(v)=\{u\in V:\dist(u,v)\leq r\}.$
Thus, the node observes the set of certificate
sequences appearing at other vertices within
distance at most \(r\), without distinguishing
their distances or multiplicities.

In particular, for \(r=1\),
$W_1(v)=
\{\overline{\ell}(u):u\in N(v)\},$
so the weak \(1\)-horizon model coincides exactly
with the original \seth model.

The power of the weak model is not monotone in the horizon. In particular, the weak
\(2\)-horizon model does not necessarily simulate the usual radius-one model.
Consider the language
\[
\textsc{Star}=\{G: G\simeq K_{1,n}\text{ for some }n\geq 2\}.
\]

This language is recognizable in the weak
\(1\)-horizon \seth model with two alternating
certificate assignments.

In the first round, the prover assigns each
vertex a role in \(\{C,L_0,L_1\}\), where \(C\)
denotes a center and \(L_0,L_1\) denote two
types of leaves. In the second round, the
disprover assigns an arbitrary bit \(d(v)\)
to each vertex.

Each vertex performs the following checks
using its own certificates and the set of
certificate sequences of its neighbors.

A vertex marked \(C\) checks that it has no
neighbor marked \(C\), and that it has at
least one neighbor of type \(L_0\) and
at least one of type \(L_1\).

A vertex marked \(L_0\) or \(L_1\) checks
that all its neighbors are marked \(C\),
and that the set
\[
\{d(u):u\in N(v),\ \sigma(u)=C\}
\]
contains exactly one element, where
\(\sigma\) denotes the first-round role
assignment.

If \(G=K_{1,n}\) with \(n\geq2\), the prover
marks the center with \(C\), one leaf with
\(L_0\), and all remaining leaves with \(L_1\).
Every verification condition then holds,
regardless of the disprover's bits.

Conversely, suppose that a first-round role
assignment is accepted for every assignment
of the disprover's bits. There can be no
edge between two centers or between two
leaves, by the local role checks. Moreover,
every leaf has exactly one center neighbor:
it must have at least one, and if it had
two distinct center neighbors, the disprover
could assign different bits to them, causing
the leaf to reject.

Since \(G\) is connected, these conditions
imply that there is exactly one center and
that every other vertex is adjacent only
to this center. Finally, the center must
have at least two leaves, since it sees
both types \(L_0\) and \(L_1\). Therefore,
\(G\simeq K_{1,n}\) for some \(n\geq2\).

However, \(\textsc{Star}\) is not recognizable in the weak
\(2\)-horizon model. Consider, for every \(n\geq 2\), the graphs \(K_{1,n}\)
and \(K_{n+1}\) on the same vertex set. Both graphs have diameter at most \(2\).
Hence, in the weak \(2\)-horizon model, for every node \(v\) in either graph,
the ball \(B_2(v)\) is the whole vertex set. It follows that no weak \(2\)-horizon \seth protocol can distinguish
\(K_{1,n}\) from \(K_{n+1}\). 
Thus \(\textsc{Star}\) is not recognizable in the weak \(2\)-horizon \seth
hierarchy.

To obtain a monotone extension of the original \seth
model, we introduce the following definition.

Let \(r\geq1\). In the \emph{strong \(r\)-horizon
\seth model}, suppose that a protocol uses \(k\)
certificate assignments $\ell_i:V\to\Gamma_i$ for  $i\in[k].$
Write $
\gamma(v)=(\ell_1(v),\ldots,\ell_k(v))$
for the certificate sequence of vertex \(v\).

After the certificate assignments have been made,
each vertex \(v\) has access to its own input
\(x(v)\) and its own certificate sequence
\(\gamma(v)\), separately from the certificates
of other vertices.

In addition, the vertex observes the tuple
$
\left(
S_N(v),S_{\leq2}(v),\ldots,S_{\leq r}(v)
\right),
$
where
$
S_N(v)=\{\gamma(u):u\in N(v)\}
$
is the usual \seth neighborhood view, and, for
every \(2\leq j\leq r\),
$S_{\leq j}(v)=
\{\gamma(u):u\in B_j(v)\}.$

The sets \(S_{\leq j}(v)\) include the
certificate sequence of \(v\), since
\(v\in B_j(v)\). However, the immediate
neighborhood view \(S_N(v)\) is provided
separately. Thus, a vertex can always
distinguish its own certificates from the
presence of the same certificates among
its immediate neighbors.

As usual in the \seth model, multiplicities
are not visible in any of these sets. When \(r=1\), the tuple consists only of
\(S_N(v)\). Hence the strong \(1\)-horizon
model coincides exactly with the original
\seth model. We denote the corresponding
\(k\)-th level by \(\sethr{r}{k}\), so that $\sethr{1}{k}=\hier{\seth}_k.$ Moreover, for every \(r\geq1\) and every
\(k\geq0\),
$
\sethr{1}{k}\subseteq\sethr{r}{k},
$
since a strong \(r\)-horizon verifier can
ignore the additional ball information
and use only \(S_N(v)\).

We now prove the converse inclusion with
at most one additional interaction.

\thmSEThorizon*

\begin{proof}
Let \(L\in\sethr{r}{k}\), and let \(A\) be a
strong \(r\)-horizon \seth verifier recognizing
\(L\). We construct a radius-one \seth protocol
using at most one additional certificate round.

Let \(\ell_i:V\to\Gamma_i\), for \(i\in[k]\),
be the certificate assignments of the original
protocol. Write
\(\mathcal C=\Gamma_1\times\cdots\times\Gamma_k\)
for the finite alphabet of complete certificate
sequences, and let
\(\gamma(v)=(\ell_1(v),\ldots,\ell_k(v))
\in\mathcal C\).
For \(k=0\), we take \(\mathcal C\) to consist
of the unique empty sequence.

The prover supplies to every vertex \(v\)
a tuple
\((T_0(v),T_1(v),\ldots,T_r(v))\),
where each \(T_j(v)\subseteq\mathcal C\).
These sets are intended to satisfy
\(T_j(v)=\{\gamma(u):u\in B_j(v)\}\)
for \(0\leq j\leq r\).

The tuple is supplied in an existential
certificate assignment after all original
certificate fields have been determined.
If the original protocol ends with an
existential assignment, these additional
fields are merged into that assignment.
Otherwise, we append one existential round.

We claim that the intended values of the
sets \(T_j(v)\) can be verified using only
radius-one \seth information.

First, every vertex \(v\) checks
\(T_0(v)=\{\gamma(v)\}\).
Then, for every \(j\in[r]\), it checks
\(T_j(v)=T_0(v)\cup
\bigcup_{u\in N(v)}T_{j-1}(u)\).
This condition is locally checkable in
the \seth model: the vertex reads its own
set \(T_0(v)\) and the set of values
\(T_{j-1}(u)\) appearing among its neighbors.
Multiplicities are irrelevant because the
operation is set union.

We now show that if all vertices pass
these checks, every claimed set is correct.
The assertion holds for \(j=0\) by the
first check. Suppose it holds for \(j-1\).
Since
\(B_j(v)=\{v\}\cup
\bigcup_{u\in N(v)}B_{j-1}(u)\),
we obtain
\[
\begin{aligned}
T_j(v)
&=
\{\gamma(v)\}\cup
\bigcup_{u\in N(v)}T_{j-1}(u)\\
&=
\{\gamma(u):u\in B_j(v)\}.
\end{aligned}
\]
Thus the assertion follows by induction
for every \(0\leq j\leq r\).

Once these sets have been verified,
the radius-one verifier reconstructs
the strong \(r\)-horizon view required
by \(A\).

The vertex already knows its own input
\(x(v)\) and original certificate sequence
\(\gamma(v)\). Moreover, it directly observes
the original \seth neighborhood view
\(S_N(v)=\{\gamma(u):u\in N(v)\}\)
by projecting the certificate sequences
of its neighbors onto the fields of the
original \(k\) rounds.

For each \(2\leq j\leq r\), the remaining
component of the strong view is
\(S_{\leq j}(v)=T_j(v)\).
Therefore the verifier can reconstruct
exactly the tuple
\((S_N(v),S_{\leq2}(v),\ldots,S_{\leq r}(v))\)
and simulate the original verifier \(A\).
If any consistency check fails, the new
verifier rejects.

We now establish correctness. Fix any
assignment of the original \(k\)
certificate fields.

If the original verifier accepts at
every vertex, the prover can supply
the correct sets
\(T_j(v)=\{\gamma(u):u\in B_j(v)\}\).
Every consistency check succeeds, and
the new verifier reconstructs the exact
strong \(r\)-horizon view at every vertex.
Hence all vertices accept.

Conversely, suppose that some assignment
of the additional sets makes all vertices
accept. The consistency checks force all
the sets \(T_j(v)\) to have their intended
values. Consequently, the new verifier
simulates exactly the strong \(r\)-horizon
verifier \(A\), which must therefore
accept at every vertex.

Thus, for every fixed assignment of
the original certificate fields, there
exists an accepting assignment of the
additional sets if and only if the
original verifier accepts everywhere.
This equivalence preserves the entire
alternating quantifier prefix of the
original protocol.

Finally, all certificates have constant
size: the alphabet \(\mathcal C\) is finite,
\(r\) is fixed, and each additional
certificate consists of \(r+1\) subsets
of \(\mathcal C\).

If the original protocol ends with a
universal assignment, the new existential
round extends its alternating prefix by
one round. If it ends with an existential
assignment, the additional fields are
merged into that round, so no extra round
is needed. For \(k=0\), one existential
round suffices. Whenever necessary, a
dummy round can be appended to obtain
exactly \(k+1\) alternating rounds.

Therefore,
\(\sethr{r}{k}\subseteq\sethr{1}{k+1}\).
Together with
\(\sethr{1}{k}\subseteq\sethr{r}{k}\),
this implies
\(\bigcup_{k\geq0}\sethr{r}{k}
=\bigcup_{k\geq0}\sethr{1}{k}
=\hier{\seth}\).
\end{proof}

\subsection{Comparison with distributed graph automata}

In \cite{Reiter_2015}, an equivalence is established between languages recognized by MSO formulae and those recognized by alternating distributed graph automata.

Alternating distributed graph automata are automata using a finite set of states $Q$ with a local transition of the form\footnote{This  is the simplified form from \cite{Reiter_2015} when restricting to graph without edge labeling, to fit the formalism of the present paper.}: \(Q\times 2^Q\to 2^Q\). This local transition map is applied uniformly at each node of the input graph on the state of the node together with the set of states of neighbors. Together with the initialization map used to read input labels on nodes locally, it implements the same mode of communication as in the \seth model.

Moreover, states are leveled and typed level-wise as existential, universal or permanent. The transition map can only increase the level until reaching a permanent state, so that there is convergence after a bounded number of applications. Universal and existential states have the same interpretation as in classical alternating automata. Since the type of states is the same at each level, this is equivalent to alternating a bounded number of prover/disprover rounds and letting a local \seth algorithm decide whether it (locally) accepts or not given all the local labelings introduced by the prover/disprover interaction.

The key difference between these two models lies in the accepting condition: for alternating distributed graph automata, a global condition on the set of permanent states present in the graph after convergence is used, whereas in the \seth model the unanimity condition is used (all nodes must accept locally) as in the LOCAL model and other hierarchies based on it \cite{reiter2024local,feuilloley2021hierarchy}. In the formalism of \cite{Reiter_2015}, the unanimity condition is an accepting condition consisting of one ``accepting set'' which is the singleton \(\{\mathtt{accept}\}\) (meaning that a run is accepting if and only if the set of permanent sets in the graph once stabilized is exactly \(\{\mathtt{accept}\}\)).

Thus, the \seth model can be seen as a particular case of alternating distributed graph automata, and Theorem~\ref{thm:set-equals-mso} can be seen as a strengthened version of the equivalence result of \cite{Reiter_2015} using a simpler accepting condition, which is natural in the context of distributed computing. This difference in accepting condition is not just a technical convention, and we would like to point out three noticeable differences between the two models.

\subparagraph*{Levels and number of alternations.} The two models involve quantifier alternation (or levels) but they are not equivalent level-wise. In particular, the ALOS problem which is key in establishing Theorem~\ref{thm:set-equals-mso} is at level three of the \({\seth}\) hierarchy (Theorem~\ref{th:alos}), but it is trivial in the formalism of alternating distributed graph automata: just take the initialization map that read the local input label into states and use as accepting condition the two sets \(\{1\}\) and \(\{1,0\}\) (thus requiring at least one $1$ among input labels on nodes). Thus ALOS doesn't require any kind of alternation in the formalism of \cite{Reiter_2015}, which places it at the same level as purely LOCAL languages.

\subparagraph{Non-connected graphs.} If one extends the \seth model to non-connected graphs, the accepting condition is such that a graph is accepted if and only if all of its connected components are accepted. In particular ALOS on non-connected graphs cannot be recognized in the \seth model. With the global accepting condition of \cite{Reiter_2015}, there is no such limitation and ALOS is recognized on non-connected graphs using the same distributed graph automaton as above.

Deciding if a graph is connected (still a MSO property) is easy with distributed graph automata. Indeed, with one level of universal states one can put a random bit on each node and goes to permanent state \(b\in\{0,1\}\) if $b$ is the only random bit seen in the neighborhood and to the $\mathtt{reject}$ permanent state otherwise. Then the accepting condition is made of sets \(\{1\}\), \(\{0\}\) and all possible sets containing \texttt{reject}. This characterizes connectedness because non-connected graphs can lead to the set of permanent states \(\{0,1\}\), and a connected graph will never produce \(\{0,1\}\) as final set of permanent states because if both bits $0$ and $1$ are present in the graph, then a node will see them both in its neighborhood and turn into \texttt{reject}. Connectedness is however impossible to capture with model \seth.

\subparagraph{Infinite graphs.} In both models, no identifier are used on nodes to distinguish them, so they both make perfect sense on arbitrary infinite graphs. Although \cite{Reiter_2015} only deals with finite graphs, it seems clear that the equivalence with MSO holds on infinite graphs (in particular one can check that the automaton for ALOS above perfectly works on infinite graphs). We would like to stress that the equivalence between \seth and MSO breaks on infinite graphs for a deep reason: recognized language with the unanimity acceptance condition are topologically closed, but some MSO languages like ALOS are not. Although it goes beyond the scope of the present paper, let us make this statement more precise.

Fix $G=\Z$, the bi-infinite line graph ($i\in\Z$ is connected to $i+1$ and $i-1$), and let us restrict to languages made only of labelings of $G$, or equivalently, languages of bi-infinite words. For a finite alphabet \(\Sigma\), the space
\(\Sigma^{\mathbb Z}\), equipped with the product
of the discrete topology on \(\Sigma\), is compact. The language ALOS restricted to $G$ is not a closed set in this topology: indeed, the sequence of yes-instances where the unique occurrence of $1$ is at position $i$ converges, when \(i\to\infty\), to the all $0$ configuration which is a no-instance. However, we claim that any language of bi-infinite words recognized in the \seth hierarchy is closed, and hence that ALOS is not recognizable in \seth for infinite graphs.

This closedness property roughly holds for the following reasons: first, acceptance at a given node $i$ is a closed condition because it depends on finitely many positions (neighbors of $i$). So unanimous acceptance is an intersection of closed set, hence a closed set. This is true even in the context of a fixed free variable not part of the input labeling. Second, the property is preserved when climbing up in the hierarchy because quantification has the following topological interpretation: on one hand, universal quantification corresponds to intersection, and, on the other hand, existential quantification corresponds to a continuous projection (hence sending closed sets to closed sets by the closed map lemma since the ambient space is compact). As an alternative proof, one can check that \seth-recognized languages belongs to the u-MSO fragment of \cite{T_rm__2014} and then use Corollary 1 therein.

This topological constraint disappears on finite graphs (which do not form a compact set), and the finiteness property is crucially used in lemmas~\ref{lem:dycycle} (all nodes having a successor implies existence of a cycle) and~\ref{lemm:XORsurj} (injectivity is equivalent to surjectivity for finite maps). Both lemmas are false on infinite graphs.

\section{Counting in \portsh and comparison with CMSO}
\label{sec:MSOCMSOinH}

We now turn to CMSO. Recall that CMSO extends MSO with counting predicates
\(\cardmod{q}{p}{X_i}\), which express cardinality modulo a fixed positive
integer. Here we prove their verification in \portsh; the weaker \multiseth
model is considered separately below.

\subsection{Counting modulo a constant}
\label{sec:counting}

The key point is not merely to count neighboring certificates but to obtain a
rooted tree on \emph{every} connected graph with constant-size vertex labels.
Demanding that each non-root vertex have exactly one neighbor of the preceding
color modulo three does not suffice: such a coloring need not exist (already on
\(K_{3,3}\)). Instead, in \portsh every node selects the first eligible
neighbor according to its fixed local ordering of ports.

Fix an arbitrary port numbering \(\pi\) of \(G=(V,E)\), a root marking
\(r:V\to\{0,1\}\), and a coloring \(t:V\to\{0,1,2\}\). Define
\[
S_t(v)=\{u\in N(v):t(u)=t(v)-1\pmod 3\}.
\]
If \(S_t(v)\ne\emptyset\), let \(\parent_\pi(v)\) be the neighbor in
\(S_t(v)\) incident to the smallest local port of \(v\). This selects a
\emph{single} neighbor regardless of how many eligible neighbors share the
same labels. At a root, or if no eligible neighbor exists, set \(\parent_\pi(v)=\bot\).

Let \(\portspann\) denote the property of a port-numbered configuration
\((G^\pi,(r,t))\) satisfying:
\begin{enumerate}
\item exactly one node \(r_0\) has \(r(r_0)=1\);
\item \(S_t(r_0)=\emptyset\), and \(S_t(v)\ne\emptyset\) for every
      \(v\ne r_0\);
\item the functional digraph \(T_{r,t,\pi}\), with edges
      \(v\to\parent_\pi(v)\) for \(v\ne r_0\), is acyclic.
\end{enumerate}
The selected parent relation, and hence \(\portspann\),
may depend on the port numbering \(\pi\). We use
\(\portspann\) only as an auxiliary condition in the
counting protocol, which must recognize the same language
for every port numbering.

We say that such a predicate admits a uniform \(k\)-round
\portsh protocol if the same verifier and certificate
alphabets recognize it for every fixed port numbering,
using \(k\) alternating certificate assignments.

Under these conditions \(T_{r,t,\pi}\) is a spanning tree directed toward
\(r_0\): in a finite acyclic functional digraph every path reaches a vertex
without a parent, and this vertex must be the unique root.

\begin{corollary}\label{cor:span-ports}
The predicate \(\portspann\) admits a uniform
three-round \portsh protocol with quantifier pattern
\(\exists\forall\exists\).

Moreover, for every connected graph \(G\) and every
port numbering \(\pi\), there exist \(r,t\) satisfying
\(\portspann\).
\end{corollary}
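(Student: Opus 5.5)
Here $r,t$ are given as input, so the protocol has to certify the three conditions in the definition of \(\portspann\) for a fixed port numbering \(\pi\). I would certify each condition separately and then combine them by \Cref{lem:intun}, keeping the pattern \(\exists\forall\exists\).

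\textbf{Condition 1.} This is exactly \((G,r)\in\leader\). By \Cref{cor:leader} it is in \(\hier{\seth}_3\subseteq\hier{\portsh}_3\), and it is obtained as \(\alos\cap\amos\). The protocol for \(\alos\) has pattern \(\exists\forall\exists\). The protocol for \(\amos\) has pattern \(\forall\exists\), which embeds into rounds \(2,3\) of the prefix \(\exists\forall\exists\) after a dummy first round. So the intersection stays at three rounds with pattern \(\exists\forall\exists\).

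\textbf{Condition 2.} In the first existential round the prover copies the input as \(\widehat t(v)=t(v)\) and \(\widehat r(v)=r(v)\). Each node checks this copy against its own input, which makes the neighbours' colours visible. Then \(S_t(v)=\emptyset\) when \(r(v)=1\), and \(S_t(v)\neq\emptyset\) otherwise, are presence-of-colour tests, which are local.

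\textbf{Condition 3 (the main work).} I adapt the XOR argument of \Cref{th:alos} to the port-selected parent. In the universal round the disprover chooses a bit labeling \(y\). In the last existential round the prover answers with a bit labeling \(z\). Node \(v\) computes \(\parent_\pi(v)\) itself, as the neighbour on the smallest port whose copied colour \(\widehat t\) equals \(t(v)-1 \pmod 3\). It then checks \(y(v)=z(v)+z(\parent_\pi(v))\pmod 2\), or \(y(v)=z(v)\) when \(\parent_\pi(v)=\bot\). The port view lets \(v\) read \(z\) on exactly that neighbour, so no red set and no auxiliary bits are needed. The verifier is therefore checking the equations of \(\xor^{T}\) exactly, where \(T=T_{r,t,\pi}\) has out-degree at most \(1\).

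By \Cref{lemm:XORsurj}, if \(T\) is acyclic then every \(y\) has a preimage \(z\), so the prover always wins. If \(T\) has a directed cycle, some \(y\) has no preimage, and the disprover chooses that \(y\), so some node rejects. This gives pattern \(\exists\forall\exists\), with a dummy or merged first round. The verifier and alphabets are independent of \(\pi\), so the protocol is uniform.

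\textbf{Combining.} Intersect conditions 1--3 with \Cref{lem:intun}. All three protocols are aligned to the same prefix \(\exists\forall\exists\), so the result stays at three rounds.

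\textbf{Existence.} Given connected \(G\) and any \(\pi\), pick any vertex \(r_0\) and set \(r\) to mark exactly \(r_0\). Set \(t(v)=\dist(r_0,v)\bmod 3\). Every non-root vertex has a BFS neighbour one step closer, so \(S_t(v)\neq\emptyset\), while \(S_t(r_0)=\emptyset\). Whatever eligible neighbour \(\pi\) selects, it is strictly closer to \(r_0\), so \(T_{r,t,\pi}\) is acyclic.

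The step I expect to be most delicate is the uniformity in \(\pi\). The predicate itself depends on \(\pi\), so I must make sure the verifier defines \(\parent_\pi\) from its own port view exactly as the predicate does. I must also ensure no check secretly relies on neighbours' inputs rather than the copied certificates.
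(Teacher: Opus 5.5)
Your proposal is correct and matches the paper's proof essentially step by step. It uses \leader{} for the unique root and an existential copy of $(r,t)$ to make neighbouring colours visible. It then checks the exact equations $y(v)=z(v)+z(\parent_\pi(v))$, read at the first eligible port so that no red set or auxiliary bits are needed, and applies \Cref{lemm:XORsurj}. Existence comes from a BFS colouring modulo $3$.

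One cosmetic point concerns the level count. As literally stated, \Cref{lem:intun} places the intersection of two level-$3$ protocols at level $4$. To stay at three rounds, either invoke its proof (identical $\exists\forall\exists$ patterns can be paired coordinatewise), or put the copy in the last existential round so the acyclicity check is a level-$2$ $\forall\exists$ protocol. The latter is how the paper counts it.
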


\begin{proof}
The unique-root requirement is the language \(\leader\), which belongs to
\(\hier{\seth}_3\subseteq\hier{\portsh}_3\).
The first existential round can also copy the bounded input \(r,t\)
into certificates and locally check that the copies are faithful, as in
\Cref{rem:simple}. Each vertex can then inspect the copies at its ports,
verify the second condition, and identify its selected parent. The universal
round supplies \(y\), and the final existential round supplies \(z\).

For acyclicity, use \Cref{lemm:XORsurj} on \(T_{r,t,\pi}\). For a universal
bit labeling \(y\), the final prover supplies a bit labeling \(z\). A root
checks \(y(v)=z(v)\), while every other node checks
\[
y(v)=z(v)+z(\parent_\pi(v))\pmod 2.
\]
Each equality is local in \portsh: a node reads \(z\) at the first eligible
port. The condition \(\forall y\,\exists z\) holds exactly when the selected
functional digraph is acyclic. Hence this condition lies at level two.
Combining these checks with the three-round
protocol for \(\leader\) gives a uniform
\(\exists\forall\exists\) protocol for \(\portspann\).
All checks use the same fixed port numbering.

Finally, let \(r_0\) be any node and put \(t(v)=\dist_G(r_0,v)\bmod 3\),
with \(r\) marking \(r_0\). Every non-root vertex has a neighbor one step
closer to the root, while the root has no preceding-color neighbor.
Along every eligible edge \(v\to u\), BFS distance strictly decreases:
adjacent vertices have distances differing by at most one, and the
preceding-color condition forces the difference to be exactly one.
Consequently every port-selected parent relation is acyclic, for every
\(\pi\). This proves the existence assertion.
\end{proof}

For \(p\ge1\) and \(0\le q<p\), define
\[
\lcount{q,p}=\{(G,x):x\in\{0,1\}^V,\quad
\lvert M(x)\rvert\equiv q\pmod p\},
\qquad M(x)=\{v\in V:x(v)=1\}.
\]
Any other integer representative \(q\) is handled by replacing it with its
residue modulo \(p\).

As usual, a fixed finite alphabet can be copied into a constant-size
certificate in an existential round, making that information visible to
neighbors; cf. the following observation.

\begin{remark}\label{rem:simple}
When the input alphabet is bounded, one existential quantifier can make
bounded input information locally visible to the verifier. For example,
if \(x\in\{0,1\}^k\) for fixed \(k\), then
\[
\forall v\,\forall u\in N(v),\quad x(v)=x(u)
\]
holds if and only if there exists a labeling \(\ell\) such that each node
checks \(\ell(v)=x(v)\) and \(\ell(u)=\ell(v)\) at all its neighbors.
This works in each of the three local models.
\end{remark}

\thmCount*

\begin{proof}
Fix an arbitrary port numbering \(\pi\). The initial prover supplies
\(\ell_r:V\to\{0,1\}\) and \(\ell_t:V\to\{0,1,2\}\), intended to certify
\(\portspann\). By the preceding corollary, a suitable pair exists for
all connected yes-instances, independently of \(\pi\).

The prover also supplies \(\ell_c:V\to\{0,\ldots,p-1\}\). For each
vertex \(v\), the intended recurrence is
\begin{equation}\label{eq:computree}
\ell_c(v)\equiv x(v)+
\sum_{u:\parent_\pi(u)=v}\ell_c(u)\pmod p.
\end{equation}
Here the sum ranges over the \emph{children} of \(v\) in the port-selected
tree, not over all its neighbors. The prover must also satisfy the local
condition \(\ell_r(v)=1\Rightarrow\ell_c(v)=q\). If all recurrences hold,
induction from the leaves gives
\(\ell_c(r_0)\equiv \lvert M(x)\rvert\pmod p\).

To test all these equations without allowing a vertex to inspect the
\emph{port choices of its neighbors}, the disprover marks a challenge vertex
by a bit labeling \(\ell_p:V\to\{0,1\}\). When
\((G,\ell_p)\in\leader\), let \(v^*\) be the uniquely marked node. If the
challenge is not in \(\leader\), it is ignored.

The final prover supplies \(\ell:V\to\{0,\ldots,p-1\}\). We require
\begin{equation}\label{eq:erase}
\ell(u)=
\begin{cases}
\ell_c(u), &\text{if }u\ne r_0\text{ and }
\ell_p(\parent_\pi(u))=1,\\
0, &\text{otherwise.}
\end{cases}
\end{equation}
This equation is locally checkable in \portsh: a non-root node selects its
parent using the first eligible port, reads the parent's challenge bit, and
checks its own value \(\ell(u)\). The root checks \(\ell(r_0)=0\).

At any challenged vertex \(v\) the verifier also checks
\begin{equation}\label{eq:local-sum}
\ell_c(v)\equiv x(v)+\sum_{u\in N(v)}\ell(u)\pmod p.
\end{equation}
This sum is local in \portsh, which presents one certificate value per
incident port (including repeated values). If the challenge is unique and
\eqref{eq:erase} holds, the sum in \eqref{eq:local-sum} is exactly the sum
of \(\ell_c\) over the children of \(v^*\). Thus
\eqref{eq:local-sum} tests \eqref{eq:computree} at \(v^*\).

We now combine these conditions. Fix a port numbering
\(\pi\). The protocol uses the outer quantifier prefix
\[
    \exists(\ell_r,\ell_t,\ell_c)
    \quad\forall\ell_p
    \quad\exists\ell.
\]
After these assignments, the remaining verification
requires:
\begin{enumerate}
    \item
    \((G^\pi,(\ell_r,\ell_t))\in\portspann\).

\item
Every vertex \(v\) checks that \(x(v)\in\{0,1\}\)
and that \(\ell_r(v)=1\Rightarrow\ell_c(v)=q\).

    \item
    Either \((G,\ell_p)\notin\leader\), or
    \eqref{eq:erase} holds at every vertex and
    \eqref{eq:local-sum} holds at the unique vertex
    marked by \(\ell_p\).
\end{enumerate}

The first condition is verified by the uniform
protocol of \Cref{cor:span-ports}. The second is
directly local. In the third condition, the two
equations are also directly checkable in \portsh:
all the labelings involved have already been chosen
in the outer rounds and are therefore available
as neighboring certificate fields.

Although the first and third conditions may depend
on \(\pi\), the complete protocol uses the same
verification algorithms for every port numbering.
For every fixed \(\pi\), these conditions are
equivalent to membership in \(\lcount{q,p}\).
Here the outer existential labels have constant size because \(p\) is fixed.
For completeness, use a BFS-distance coloring and the true subtree sums;
for each valid challenge the final prover uses the uniquely determined
values in \eqref{eq:erase}. 
For soundness, invalid inputs are rejected locally,
and invalid tree certificates are caught by the
\(\portspann\) check. For a valid binary input
and a valid tree, an incorrect root value causes
rejection; otherwise, the disprover challenges
a vertex where the recurrence fails.
 Equation \eqref{eq:erase} forces exactly its children's
contributions, and \eqref{eq:local-sum} rejects for every final reply.


We now make the number of rounds explicit.
By \Cref{cor:span-ports}, \(\portspann\) has an
\(\exists\forall\exists\) protocol. Since
\(\leader\) has the same quantifier pattern,
\Cref{cor:comp} gives a
\(\forall\exists\forall\exists\forall\exists\)
protocol for \(\overline{\leader}\).
Denote their certificate assignments by
\(\alpha_1,\alpha_2,\alpha_3\) and
\(\beta_1,\ldots,\beta_6\), respectively.

Let \(\mathcal E\) denote the local condition
requiring \eqref{eq:erase} everywhere and
\eqref{eq:local-sum} at every challenged vertex.
Condition (3) above is precisely
\(\overline{\leader}(\ell_p)\lor\mathcal E\).
We implement this disjunction using a globally
constant existential selector \(b\), chosen
before the universal challenges of the
\(\overline{\leader}\) protocol.

The complete alternating prefix is
\[
\begin{aligned}
&\exists(\ell_r,\ell_t,\ell_c)\;
\forall\ell_p\;
\exists(\ell,b,\alpha_1)\;
\forall(\alpha_2,\beta_1)\\
&\exists(\alpha_3,\beta_2)\;
\forall\beta_3\;
\exists\beta_4\;
\forall\beta_5\;
\exists\beta_6.
\end{aligned}
\]
The \(\alpha\)-fields verify \(\portspann\).
If \(b=0\), the \(\beta\)-fields verify
\(\overline{\leader}(\ell_p)\); if \(b=1\),
the verifier checks \(\mathcal E\) directly.
The root condition is checked locally.

Correctness follows from the preceding argument:
on yes-instances, the prover certifies a valid
tree and correct subtree sums, choosing the
appropriate branch for each challenge. On
no-instances, either the tree or root check
fails, or the disprover selects a vertex
with an incorrect recurrence. In the latter
case, neither branch can succeed.

All certificate alphabets are finite, and
the construction is uniform in the port
numbering. Thus
\[
\lcount{q,p}\in\hier{\portsh}_9.
\]

\end{proof}

This bound is an upper bound only. In particular, we do not assert that the
same protocol works in \multiseth: an unnumbered multiset does not let a
node choose the first eligible port when several neighbors carry the same
certificate.

\subsection{The \portsh hierarchy contains CMSO}
\label{sec:CMSOinPORTS}

\thmCMSO*

\begin{proof}
Let \(\phi\) be a CMSO sentence. We use the same inductive
translation as in the proof of \Cref{lem:MSOinSET}.

First-order variables are represented by singleton bit labelings,
and set variables by arbitrary bit labelings. The non-counting
atomic formulas $x=y, x\sim y, x\in X$
are handled exactly as in the \seth case, since a \portsh verifier
can simulate a \seth verifier.

For a counting atom $\cardmod{q}{p}{X},$
the bit labeling representing \(X\) defines an instance of
\(\lcount{q',p}\), where \(q'=q\bmod p\). By
\Cref{lemm:count}, this language belongs to the \portsh hierarchy.
Negated atoms and Boolean combinations are handled using closure
under complement, finite intersection, and finite union.

The logical quantifiers are treated as in
\Cref{lem:MSOinSET}. Set quantifiers range over arbitrary bit
labelings. For a first-order variable \(x\), let
\(\operatorname{Sing}(\ell_x)\) denote the property that
\(\ell_x\) marks exactly one vertex. Then
$\exists x\,\psi$
is simulated by $
    \exists \ell_x\;
    \bigl(
        \operatorname{Sing}(\ell_x)
        \land
        \widehat{\psi}
    \bigr), $
whereas $\forall x\,\psi$
is simulated by $
    \forall \ell_x\;
    \bigl(
        \neg\operatorname{Sing}(\ell_x)
        \lor
        \widehat{\psi}
    \bigr).$
Since \(\leader\in\hier{\seth}\subseteq\hier{\portsh}\),
these guards are available in the \portsh hierarchy.

Applying the construction recursively yields a constant-size
alternating certification protocol with finitely many rounds.
Therefore every CMSO-definable language over a finite input
alphabet belongs to \(\hier{\portsh}\).
\end{proof}

\section{The \multiseth hierarchy is not included in CMSO}

In this section we exhibit a problem in the \multiseth hierarchy that is not expressible in CMSO, thereby proving that this hierarchy is not contained in CMSO, and also showing that the \portsh hierarchy is strictly more expressive than CMSO.

\thmExout*

\begin{proof}
  Consider the language $L$ of graphs (without input labels) such that each node of degree $3$ or more has an equal number of neighbors of degree $1$ and of degree $2$.
  It is clear that $L \in \hier{\multiseth}_1$ because each node can make publicly available whether its degree is \(1\) or \(2\) through an existential certificate (see \Cref{rem:simple}); then each node can check locally the desired property on its local view.
  
  We now show that \(L\) is not CMSO-definable.
Suppose, for a contradiction, that there exists a CMSO
sentence \(\phi\) defining \(L\).

Consider words of the form
\(w=\#a^m(bc)^k\), where \(m,k\geq2\).
We associate with each such word a graph \(G_w\) whose
vertices are the positions of \(w\). The first position,
labeled \(\#\), is adjacent to every position labeled
\(a\) or \(b\). In addition, each position labeled \(b\)
is adjacent to the position labeled \(c\) immediately
following it. There are no other edges.

The resulting graph is simple and connected. Its first
vertex \(r\) has degree \(m+k\). Every vertex labeled
\(a\) or \(c\) has degree one, while every vertex labeled
\(b\) has degree two. Thus \(r\) is the only vertex of
degree at least three, and it has exactly \(m\) neighbors
of degree one and \(k\) neighbors of degree two.
Consequently, \(G_w\in L\) if and only if \(m=k\).

The graph \(G_w\) is first-order interpretable in the
word \(w\). Indeed, its vertices are the positions of
the word, and its adjacency relation is defined by
\[
\begin{aligned}
E(x,y)\equiv{}&
  \bigl(P_\#(x)\land
        (P_a(y)\lor P_b(y))\bigr)\\
&\lor
  \bigl(P_\#(y)\land
        (P_a(x)\lor P_b(x))\bigr)\\
&\lor
  \bigl(P_b(x)\land P_c(y)
        \land\operatorname{succ}(x,y)\bigr)\\
&\lor
  \bigl(P_b(y)\land P_c(x)
        \land\operatorname{succ}(y,x)\bigr),
\end{aligned}
\]
where \(P_\sigma(x)\) means that position \(x\) carries
symbol \(\sigma\), and \(\operatorname{succ}(x,y)\)
means that \(y\) immediately follows \(x\).

Replacing each adjacency atom in \(\phi\) by this
formula yields a CMSO sentence \(\phi'\) over words,
such that \(w\models\phi'\) if and only if
\(G_w\models\phi\)
for every word of the form \(\#a^m(bc)^k\) with
\(m,k\geq2\).

Since the language
\(W=\{\#a^m(bc)^k:m,k\geq2\}\)
is regular, it is MSO-definable. Therefore, intersecting
\(\phi'\) with a formula defining \(W\) gives a CMSO
sentence defining
\(\{\#a^k(bc)^k:k\geq2\}\).
However, this language is not regular: the prefixes
\(\#a^k\), for \(k\geq2\), are pairwise distinguishable
by the suffixes \((bc)^k\).

This contradicts the fact that every CMSO-definable
language of finite words is regular
(see \cite[Theorem 1.16 and Section 5.2.7]{courcellebook}).
Hence \(L\) is not CMSO-definable.
\end{proof}

\section{Separating the \multiseth and \portsh hierarchies}
\label{sec:multiset-ports}

We now show that port numbers strictly increase the expressive power of
constant-size alternating certification. The separation already holds for
graph languages without input labels.

\thmMultisetPorts*

The inclusion is immediate: a \portsh verifier can discard port numbers
and recover the multiset of neighboring certificate sequences. To prove
strictness, we encode binary words by graphs. We show that \portsh can
recognize every decidable word language on this encoding, provided that
it is closed under reversal. We then construct such a language that no
\multiseth protocol recognizes.

\subsection{Encoding words by graphs.}
For a word \(w=w_1\cdots w_n\in\{0,1\}^n\), with \(n\geq6\), let
\(G_w\) be obtained from the path \(v_1,\ldots,v_n\) by adding a vertex
\(r\) adjacent to every path vertex, and by attaching one leaf to
\(v_i\) if and only if \(w_i=1\). \Cref{fig:word-encoding} illustrates this construction for
\(w=1011010\).

\begin{figure}[t]
\centering
\begin{tikzpicture}[
    x=1cm,
    y=1cm,
    font=\small,
    vertex/.style={
        circle,
        draw,
        fill=black,
        inner sep=0pt,
        minimum size=4pt
    },
    edge/.style={line width=0.6pt}
]
    \foreach \i in {1,...,7} {
        \node[
            vertex,
            label={[xshift=1pt]below right:\(v_{\i}\)}
        ] (v\i) at ({\i-1},0) {};
    }

    \draw[edge]
        (v1)--(v2)--(v3)--(v4)--(v5)--(v6)--(v7);

    \node[vertex, label={above:\(r\)}]
        (r) at (3,2) {};

    \foreach \i in {1,...,7} {
        \draw[edge] (r)--(v\i);
    }

    \foreach \i in {1,3,4,6} {
        \node[vertex] (leaf\i) at ({\i-1},-1.05) {};
        \draw[edge] (v\i)--(leaf\i);
    }

    \node[anchor=east] at (-0.45,-1.55) {\(w_i:\)};

    \foreach \i/\bit in {
        1/1, 2/0, 3/1, 4/1, 5/0, 6/1, 7/0
    } {
        \node at ({\i-1},-1.55) {\(\bit\)};
    }
\end{tikzpicture}
\caption{The graph \(G_w\) for \(w=1011010\).
The vertices \(v_1,\ldots,v_7\) induce a path, and the hub \(r\)
is adjacent to every path vertex. A pendant leaf is attached to
\(v_i\) exactly when \(w_i=1\). Vertex names and bits are shown
for illustration only; the word is encoded entirely by the
structure of the unlabeled graph.}
\label{fig:word-encoding}
\end{figure}

The vertex \(r\) has degree \(n\), whereas every other
vertex has degree at most four. Hence \(r\) is uniquely determined by
the graph, and its neighborhood induces the original path. It follows
that
\[
    G_w\cong G_{w'}
    \quad\Longleftrightarrow\quad
    w'=w\ \text{or}\ w'=w^{\mathrm{rev}},
\]
where \(w^{\mathrm{rev}}\) denotes the reversal of \(w\).

For a language \(D\subseteq\bigcup_{n\geq6}\{0,1\}^n\) closed under
reversal, define the graph language
\[
    L_D=\{G:G\cong G_w\text{ for some }w\in D\}.
\]

\begin{lemma}\label{lem:ports-word-languages}
If \(D\) is decidable and closed under reversal, then
\(L_D\in\hier{\portsh}\).
\end{lemma}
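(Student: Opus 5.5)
The plan is to let the hub \(r\) read the whole word in the order of its own ports and then decide membership in \(D\) locally. The verifier at \(r\) is an arbitrary computable function of its \portsh view, and that view is a port-ordered list of \(n\) constant-size certificate sequences. This list can carry \(n\) bits in a known order, so once the word is transported faithfully into port order, \(r\) simply runs a decider for \(D\). Since \(D\) is closed under reversal, it does not matter in which direction the path is read. Everything else is the certification that the transported word is correct, which I would split into independent conditions and combine with \Cref{lem:intun}, \Cref{cor:comp} and \Cref{cor:leader}.

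\textbf{Step 1 (shape).} First I would check that the graph is isomorphic to some \(G_w\) with \(n\geq 6\). This property is \(\mathrm{MSO}_1\)-definable. One asserts the existence of a vertex \(r\) and a set \(P=N(r)\) such that \(G[P]\) is a path: it is connected, all degrees in \(G[P]\) are at most two, and it is acyclic, which can be phrased by saying that every nonempty subset has a vertex with at most one neighbour inside it. One further asserts that every vertex outside \(P\cup\{r\}\) is a leaf attached to a vertex of \(P\), that each vertex of \(P\) has at most one such leaf, and that \(|P|\geq 6\). By \Cref{lem:MSOinSET} this condition lies in \(\hier{\seth}\subseteq\hier{\portsh}\). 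On top of this, in an existential round the prover labels the vertices with roles (hub, path, leaf), a bit \(b(v)\) on path vertices (``I carry a pendant leaf''), and a coloring \(c\) of the path vertices by position modulo \(3\). These labels are locally checkable. For instance, the hub is the unique vertex with role \(r\), which \leader\ guarantees, and a path vertex checks that \(b(v)=1\) exactly when it sees a neighbour with role leaf. The coloring \(c\) orients the path, and the prover may choose either direction.

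\textbf{Step 2 (transport).} The prover also gives each path vertex a bit \(t(v)\). The intended meaning is that if \(v\) sits at port \(j\) of \(r\), then \(t(v)=w_j\), where \(w_j\) is the \(b\)-value of the \(j\)-th path vertex in the \(c\)-orientation. To verify this, I would use a universal challenge followed by an existential answer. The disprover marks one path vertex \(u\), and only challenges that pass the \leader test are considered, via the guard \(\neg\operatorname{Sing}\lor\cdots\) as in \Cref{lem:MSOinSET}. The prover then marks a set \(Q\) and a vertex \(p\). The path vertices check locally, using the orientation \(c\), that \(Q\) is a nonempty prefix of the oriented path ending at \(p\): the first vertex of the path is in \(Q\), membership in \(Q\) is closed under predecessor, and \(p\) is the unique vertex in \(Q\) whose successor is outside \(Q\). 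The hub then checks two things with its port view: that \(|Q|\) equals the port index of the challenged vertex \(u\), and that \(t(u)=b(p)\). All the fields it needs are visible certificates of its neighbours. A wrong transport \(t\) is caught by challenging a position where it differs from the path word. Conversely, a correct \(t\) can always be answered by taking \(Q\) to be the prefix of length equal to the port index of \(u\).

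\textbf{Step 3 (decision).} Finally, \(r\) reads \((t(v))\) in increasing port order, obtaining \(w\) or \(w^{\mathrm{rev}}\), and accepts iff this word lies in \(D\). Closure under reversal makes the result independent of the orientation chosen and of the port numbering. The resulting protocol has a bounded number of rounds, because \leader, its complement and the MSO shape check each contribute a fixed number of levels.

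I expect Step 2 to be the main obstacle: the whole difficulty is moving unordered path information into \(r\)'s port order with constant-size labels. The delicate points are guarding against ill-formed challenges with \leader, and making sure the prefix \(Q\) is genuinely forced by local checks, which relies on the path being correctly oriented from Step 1. If the direct prefix-counting check turns out to be awkward to state locally, my fallback is to certify \(|Q|\) through a port-counting argument at \(r\), in the spirit of the children-marking trick in \Cref{lemm:count}.
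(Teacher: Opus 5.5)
Your proposal is correct and follows essentially the paper's proof. The word is written on the hub's neighbours in port order, the hub decides membership in $D$ itself, and a $\forall\exists$ challenge, in which the prover marks a path prefix ending at the relevant position, lets the hub match path positions with port indices by counting. The differences are only in implementation: the paper lets the challenged vertex fix the path position, reads the port index as $|N(r)\cap T|$, broadcasts the leaf bit through a constant labeling $a$, and verifies the prefix with an MSO condition; you instead let the challenged vertex fix the port index, read both bits directly at the hub, and check the prefix locally via the mod-$3$ orientation.
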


\begin{proof}
Fix an arbitrary port assignment. The prover first marks two
vertices \(r,s\), two sets \(P,F\subseteq V(G)\), and chooses
a bit labeling \(b: P \to \{0,1\}\). Each vertex marking is represented by
one bit per vertex. We require that exactly one vertex receives
each of the marks for \(r\) and \(s\), and impose the following
conditions:
\begin{enumerate}
    \item \(\{r\},P,F\) partition \(V(G)\), and \(N(r)=P\).
    \item \(G[P]\) is a path on at least six vertices,
    and \(s\) is one of its endpoints.
    \item Every vertex in \(F\) has degree one and its neighbor
    belongs to \(P\); every vertex in \(P\) has at most one
    neighbor in \(F\).
\end{enumerate}
These conditions are MSO-definable and characterize the graphs
\(G_w\), together with a choice of an endpoint \(s\).
Write the path order starting at \(s\) as \(v_1,\ldots,v_n\),
and let \(w_i=1\) if and only if \(v_i\) has a neighbor in \(F\). Let \(p_i\) be the neighbor of \(r\) at its local port \(i\).
The order \(p_1,\ldots,p_n\) may differ from the path order
\(v_1,\ldots,v_n\). The labeling \(b\) defines the word
\[
    z=b(p_1)\cdots b(p_n).
\]
Vertex \(r\) can read \(z\) from its \portsh view and decide
whether \(z\in D\). It remains to verify that \(z=w\).

The disprover chooses a vertex set \(U\), and the prover responds
with a vertex set \(T\) and two bit labelings \(a,h\).
In addition to the conditions on the initial markings, we require:
\begin{enumerate}
    \item If \(U=\{u\}\) for some \(u\in P\), then \(h\equiv1\)
    and \(T\) is exactly the vertex set of the subpath from
    \(s\) to \(u\) in \(G[P]\), including both endpoints.
    \item Otherwise, \(h\equiv0\).
\end{enumerate}
Here \(h\equiv\varepsilon\) means that every vertex receives
the bit \(\varepsilon\). In the second case, there is no
restriction on \(T\).

Let \(\varphi\) be the conjunction of the conditions on the
initial markings, including the uniqueness of the marks for
\(r\) and \(s\), and these requirements on \(U,T,h\).
This is an MSO condition. In particular, the condition on \(T\)
can be expressed as follows: if \(s=u\), require \(T=\{s\}\);
otherwise, require that \(T\subseteq P\) induces a connected
path whose endpoints are exactly \(s\) and \(u\).

In addition to verifying \(\varphi\), the final verifier performs
the following local checks:
\begin{enumerate}
    \item Every vertex checks that its bit \(a\) agrees with
    the bits \(a\) of all its neighbors.
    \item If \(h(v)=1\) and \(v\in U\), vertex \(v\) checks
    that \(a(v)=1\) if and only if it has a neighbor in \(F\).
    \item Vertex \(r\) always checks that \(z\in D\).
    If \(h(r)=1\), it also computes
    \[
        i=|N(r)\cap T|
    \]
    and checks that \(1\leq i\leq n\) and \(b(p_i)=a(r)\).
\end{enumerate}
Every vertex carrying the mark for \(r\) performs the third
check; the condition \(\varphi\) ensures that there is exactly
one such vertex. These checks use only a vertex's own markings
and the markings of its neighbors indexed by its local ports.
Since the graph is connected, the first check forces \(a\)
to be constant.

The protocol therefore uses the prefix \(\exists r,s,P,F,b, \forall U, \exists T,a,h,\) where quantification over \(r,s\) abbreviates the choice of
the binary markings described above. This prefix is followed
by a \seth protocol for \(\varphi\), which exists by
\Cref{thm:set-equals-mso}. The markings chosen before this
additional protocol are treated as fixed finite input labels.
The final \portsh verifier conjoins the outcome of that
protocol with the local checks above. All earlier markings
remain available in the certificate sequences. The resulting
protocol uses a constant number of rounds and a constant
number of certificate bits per vertex, including those
needed to verify \(\varphi\).

For completeness, suppose that \(G=G_w\) with \(w\in D\).
Mark \(r,P,F\) according to the construction of \(G_w\),
and choose an endpoint \(s\) of the path. Let
\(v_1,\ldots,v_n\) be the order starting at \(s\), and write
\(w\) for the word read in this order. Since \(D\) is closed
under reversal, this word belongs to \(D\). Choose \(b\) so that \( b(p_i)=w_i\) for every  \(i\in[n]\). Thus \(z=w\), and vertex \(r\) accepts its membership check.

If \(U\) is not a singleton contained in \(P\), choose
\(h\equiv0\), \(a\equiv0\), and \(T=\varnothing\).
Then \(\varphi\) and all local checks hold.
If \(U=\{v_j\}\), choose
\(h\equiv1\), \(T=\{v_1,\ldots,v_j\}\), \( a\equiv w_j\). Again \(\varphi\) holds. The check at \(v_j\) succeeds by the
definition of \(w_j\), and the check at \(r\) succeeds because \( |N(r)\cap T|=j\) and \(
    b(p_j)=w_j=a(r).\)
In both cases, the prover can win the appended \seth protocol
because \(\varphi\) is true. This proves completeness.

For soundness, suppose that the prover has a winning strategy.
The appended \seth protocol forces the conditions on the
initial markings. Thus these markings determine a graph
\(G_w\) and a path order \(v_1,\ldots,v_n\).
The initial labeling \(b\) fixes the word \( z=b(p_1)\cdots b(p_n),\)
and the unconditional check at \(r\) gives \(z\in D\).

For each \(j\in[n]\), consider the choice \(U=\{v_j\}\).
The condition \(\varphi\) forces
\( h\equiv1 \)   and \(
    T=\{v_1,\ldots,v_j\}.\)
The local check at \(v_j\) gives \(a(v_j)=w_j\), and the
checks along the edges force \(a\) to be constant. Hence
\(a(r)=w_j\). Since \(|N(r)\cap T|=j\), the check at \(r\)
then gives
\[
    z_j=b(p_j)=a(r)=w_j.
\]
The labeling \(b\) was chosen before \(U\), so the same word
\(z\) satisfies this equality for every \(j\). Thus \(z=w\).
Since \(z\in D\), we conclude that \(G\in L_D\).

The port assignment was arbitrary. The initial labeling \(b\)
may depend on it, and the argument works for every assignment.
The decider for \(D\) is used only in the local computation
at \(r\), whose running time is unrestricted in our model.
\end{proof}

\begin{lemma}\label{lem:multiset-word-diagonal}
There exists a decidable language
\(D\subseteq\bigcup_{n\geq6}\{0,1\}^n\), closed under reversal, such
that \(L_D\notin\hier{\multiseth}\).
\end{lemma}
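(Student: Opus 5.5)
Our plan is a diagonalization over all \multiseth protocols. The key fact is that a \multiseth protocol, run on an encoding graph \(G_w\), can only pass to the hub \(r\) information with \emph{bounded} capacity. Concretely, \(r\) sees the multiset of certificate sequences of its \(n\) neighbours. That multiset is a vector of counts indexed by a finite alphabet \(\mathcal C\), so it takes at most \((n+1)^{|\mathcal C|}\) values: polynomially many in \(n\). Every other vertex has degree at most four, so its view ranges over a finite set of types independent of \(n\). By contrast there are about \(2^{n}/2\) reversal classes of words of length \(n\). The goal is to show that the whole alternating game on \(G_w\) is decided by a ``summary'' of the word from a set of size \(2^{o(n)}\). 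Then, for each protocol, we can find two words of the same length with the same summary and put exactly one of them in \(D\).

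First step (the game as a function of a compressed word). Fix a protocol \(P=(A,c,k,\text{quantifier pattern})\). Let \(\mathcal C\) be its (finite) alphabet of certificate sequences, and take an assignment \(\overline{\ell}\) on \(G_w\). Every vertex except \(r\) accepts or rejects according to a bounded local pattern along the path. So ``all non-hub vertices accept'' is a regular property of the word over the alphabet recording, at each position \(i\), the bit \(w_i\) together with the certificates at \(v_i\), at its leaf, and at \(r\). The hub accepts according to a function of its own certificates and the count vector \(\big(|\{i:\overline\ell(v_i)=\gamma\}|\big)_{\gamma\in\mathcal C}\). We now proceed by induction on the rounds, processing the quantifiers from the inside out. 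Fix the hub's certificates \(h\in\mathcal C\) and a target count vector \(\mathbf m\). We treat the vertices \(v_1,\dots,v_n\) together with their leaves as a word. For each round \(j\), the relevant data of the word is then a finite-state ``type''. One way to obtain it is via the composition/Feferman--Vaught method for alternating games: the set of count vectors reachable under each strategy profile is described by a bounded-state automaton with counters. In short, the outcome of \(P\) on \(G_w\) is determined by a bounded number of semilinear sets of count vectors, each of which depends on \(w\) through a weighted automaton of constant size. The resulting \emph{signature} \(\sigma_P(w)\) takes at most \(n^{O_P(1)}\) values for words of length \(n\). The completed proof would have to verify this polynomial bound.

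Second step (diagonalization). Enumerate all \multiseth protocols \(P_1,P_2,\dots\); they are countable because verifiers are computable functions on finite views, given as programs. Choose lengths \(n_1<n_2<\cdots\), with \(n_i\geq6\) and large enough that \(n_i^{O_{P_i}(1)}<2^{n_i}/2\). Then there are two words \(u_i,u_i'\) of length \(n_i\), not reversals of each other, with \(\sigma_{P_i}(u_i)=\sigma_{P_i}(u_i')\). It follows that \(P_i\) accepts \(G_{u_i}\) iff it accepts \(G_{u_i'}\). Define \(D\) to contain, at length \(n_i\), exactly \(u_i\) and its reversal; at all other lengths \(D\) is empty. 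Then \(P_i\) errs on either \(G_{u_i}\) or \(G_{u_i'}\), so \(L_D\notin\hier{\multiseth}\). \(D\) is closed under reversal by construction. It is decidable provided \(u_i,u_i'\) are found by a computable search, for instance the lexicographically first colliding pair. This requires \(\sigma_{P}\) to be computable from \(P\) and \(w\), which holds because it is a finite evaluation of the verifier \(A\) on finitely many views.

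The main obstacle is the first step: proving the polynomial bound on the number of distinguishable word behaviours \emph{across alternations}. The difficulty is that strategies at the hub interact globally through counts. Our first attempt will be to handle it round by round. We show by induction that, for each prefix of fixed certificate fields at \(r\), the set of achievable count vectors (as a function of the earlier choices) is determined by a polynomial-size invariant of \(w\). If that proves messy, a fallback is the ``table of local views'' counting argument mentioned earlier in the paper. There one bounds directly the number of accept/reject tables \(P\) can induce on length-\(n\) encodings by the number of functions from the hub's finite set of possible views, which grows only polynomially in \(n\).
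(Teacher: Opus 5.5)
\paragraph{There is a genuine gap, but not the one you flagged.} In Step~2 you enumerate protocols as programs. You then argue that \(D\) is decidable because \(\sigma_P\) is ``a finite evaluation of the verifier \(A\) on finitely many views.''

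\begin{itemize}
\item An enumeration of programs contains partial ones, and totality is undecidable. So the search for a colliding pair at length \(n_i\) may never halt, and \(D\) need not be decidable.
\item Clocking the programs does not help. The model places no bound on local running time: the hub in \Cref{lem:ports-word-languages} runs the decider for \(D\) without any time limit.
\end{itemize}

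A sanity check shows this is where the argument breaks. In a per-protocol diagonalization, your Step~1 is superfluous. The acceptance bit of \(P_i\) on \(G_w\) is already a two-valued signature, and length \(n_i\) has more than two reversal classes. Pigeonhole therefore gives \(u_i,u_i'\) with no polynomial bound needed, and this version of Step~2 never uses multiplicities. It would equally ``prove'' that some decidable reversal-closed \(D\) has \(L_D\notin\hier{\portsh}\), contradicting \Cref{lem:ports-word-languages}. So the part you identify as the main obstacle (and leave unproven) does not touch the actual failure.

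\paragraph{The missing idea is to diagonalize against all Boolean tables at once, not against programs.} This is what the paper does.

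\begin{itemize}
\item Normalize to \(c\) binary labelings and put \(q=2^c\). There are no inputs, and a \multiseth view is an own sequence plus a multiplicity vector. So on graphs of maximum degree at most \(n\), every verifier is a Boolean function on at most \(q(n+1)^q\) views. This bound is the only place the \multiseth restriction enters.
\item Over all tables and all \(2^c\) quantifier sequences, at most \(2^{q(n+1)^q+c}\) acceptance vectors arise on the \(r_n\geq 2^{n-1}\) reversal classes.
\item At a suitable length \(n=n_c\) this is fewer than the \(2^{r_n}\) candidate vectors, so some vector \(\eta_c\) is missed.
\item \(\eta_c\) is found by a finite enumeration of tables. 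That enumeration covers every verifier, computable or not, which is what makes \(D\) decidable.
\end{itemize}

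Your fallback gestures at this, but two things must be done right. The comparison is \(2^{\mathrm{poly}(n)}\) tables against \(2^{2^{n-1}}\) vectors, not a polynomial count against roughly \(2^{n}\) words. And the diagonalization must range over all tables simultaneously rather than one protocol at a time. No analysis of the game across alternations (Feferman--Vaught, semilinear sets) is required.
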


\begin{proof}
Every \multiseth protocol can be represented using a finite sequence
of quantified binary labelings
\[
    Q_1\beta_1\cdots Q_c\beta_c,
    \qquad Q_i\in\{\exists,\forall\},
    \qquad\beta_i:V\to\{0,1\},
\]
where we do not require consecutive quantifiers to alternate. Indeed,
for each original certificate alphabet \(\Gamma\), choose a
surjection \(\{0,1\}^d\to\Gamma\), replace its quantified labeling
by \(d\) consecutive binary labelings with the same quantifier, and
let the verifier decode each complete certificate sequence. The
induced map on global labelings is surjective, so this replacement
preserves both existential and universal quantification. An ignored
binary labeling allows us to assume \(c\geq1\).

Fix \(c\), and put \(q=2^c\). On a graph of maximum degree at most
\(n\), the information available to a \multiseth verifier at a vertex
is determined by its own certificate sequence and the multiplicity
of each of the \(q\) possible neighboring sequences. Since every
multiplicity lies in \(\{0,\ldots,n\}\), the number of possible
local views is at most
\(p_c(n)=q(n+1)^q.\)
There are no input labels. In particular, this bound applies to
every graph \(G_w\) with \(|w|=n\geq6\).

Let \(\mathcal R_n\) contain the lexicographically smaller word in
each reversal class of \(\{0,1\}^n\), listed in lexicographic order,
and let \(r_n=|\mathcal R_n|\). Since each class has at most two
elements, \(r_n\geq2^{n-1}.\)
These representatives encode pairwise nonisomorphic graphs.

For a fixed quantifier sequence, the behavior of any verifier on
these graphs is determined by a Boolean table on at most
\(p_c(n)\) local views. Thus there are at most \(2^{p_c(n)}\)
possible tables. Including the \(2^c\) quantifier sequences, protocols
with \(c\) binary labelings induce at most
\(2^{p_c(n)+c}\)
acceptance vectors on \(\{G_w:w\in\mathcal R_n\}\). This bound
does not impose any restriction on the verifier's computation time.

Choose a strictly increasing computable sequence
\(n_1,n_2,\ldots\), with \(n_c\geq6\), such that
\[
    p_c(n_c)+c<2^{n_c-1}.
\]
This is possible because \(p_c\) is a polynomial for each fixed
\(c\): at stage \(c\), take the least integer larger than the
previous term, and at least six, satisfying the inequality.

At stage \(c\), enumerate every quantifier sequence of length \(c\)
and every Boolean table on the finite set of possible local views
of degree at most \(n_c\). Evaluate the resulting finite
quantification games on every \(G_w\) with
\(w\in\mathcal R_{n_c}\). This computation terminates, since all
labeling domains are finite and every local decision is supplied by
the enumerated table. There are at most \(2^{p_c(n_c)+c}\) resulting
acceptance vectors, whereas
\[
    2^{p_c(n_c)+c}<2^{2^{n_c-1}}\leq2^{r_{n_c}}.
\]
Let \(\eta_c\in\{0,1\}^{r_{n_c}}\) be the lexicographically first
vector not obtained in this enumeration.

Define \(D\) to reject all words whose length is not one of the
\(n_c\). At length \(n_c\), include a word \(w\) precisely when
the coordinate of \(\eta_c\) indexed by the representative of its
reversal class is one. This language is closed under reversal.
It is also decidable: given \(w\), generate the increasing sequence
\(n_c\) until its first term satisfying \(n_c\geq|w|\). If
\(n_c>|w|\), reject; otherwise, compute \(\eta_c\) by the finite
procedure above and inspect the coordinate corresponding to the
reversal class of \(w\). Every step of this procedure terminates.

Suppose now that a \multiseth protocol recognizes \(L_D\), and put
it into binary form with \(c\geq1\) labelings. Its restriction to
\(\{G_w:w\in\mathcal R_{n_c}\}\) is described by one of the
quantifier sequences and tables enumerated at stage \(c\).
Its acceptance vector therefore cannot be \(\eta_c\). But the
definition of \(D\) requires precisely this vector, a contradiction.
\end{proof}

\begin{proof}[Proof of \Cref{th:multiset-ports}]
Take the language \(D\) supplied by
\Cref{lem:multiset-word-diagonal}. By
\Cref{lem:ports-word-languages}, \(L_D\in\hier{\portsh}\), whereas
\(L_D\notin\hier{\multiseth}\). The graph language \(L_D\) is
decidable: recognize the encoding, extract either orientation of the
word, and run the decider for \(D\). Together with the immediate
inclusion \(\hier{\multiseth}\subseteq\hier{\portsh}\), this proves
the theorem.
\end{proof}

\section{Equivalence of hierarchies on bounded-degree graphs}
\label{sec:boundeddegree}
\label{sec:Hstrict}

We compare the expressiveness of our constant-size certification hierarchies
with the \emph{locally polynomial hierarchy} of
\cite{reiter2024local}. This comparison is useful specifically on
graphs of bounded degree; we therefore introduce the model  here, rather
than alongside the weak neighborhood views that are the subject of this paper.

In the model of \cite{reiter2024local}, a node may communicate for a fixed number of rounds
and perform a polynomial amount of local computation. Nodes have identifiers
that are unique within a fixed radius (not necessarily globally), and the
players alternately assign certificates whose sizes are polynomially bounded
in the information contained in constant-radius neighborhoods. Write
$\Sigma_m^{\mathrm{LP}}$ and $\Pi_m^{\mathrm{LP}}$ for the levels with $m$
alternating certificate assignments, beginning with an existential and a
universal assignment, respectively. The protocol is required to be correct
for every admissible identifier assignment. We use the formal definition
and the bounded-degree normalization results of~\cite{reiter2024local}.

For $\modelo\in\{\seth,\portsh\}$, we similarly write
$\Sigma_m^{\modelo}$ and $\Pi_m^{\modelo}$ for the two possible starting
quantifiers at level $m$ of our hierarchy; in particular,
$\hier{\modelo}_m=\Sigma_m^{\modelo}\cup\Pi_m^{\modelo}$ for $m\geq1$.
For a fixed $\Delta\geq1$, let $\mathcal B(\Delta)$ denote the class of
connected graphs of maximum degree at most $\Delta$. Throughout this section
we compare languages over a \emph{fixed finite input alphabet}; this
qualification is essential, since the certificate-size bound in the locally polynomial hierarchy
model can depend on the length of an unbounded input label. We use the notation
$\mathcal C|_{\mathcal B(\Delta)}$ for the restrictions of the languages in
$\mathcal C$ to $\mathcal B(\Delta)$.

\begin{lemma}\label{lem:ports-in-set-bounded}
Fix $\Delta\geq1$ and a finite input alphabet. For every $m\geq3$,
\[
\Sigma_m^{\seth}|_{\mathcal B(\Delta)}
=
\Sigma_m^{\portsh}|_{\mathcal B(\Delta)}
=
\Sigma_m^{\mathrm{LP}}|_{\mathcal B(\Delta)}.
\]
For every $m\geq4$,
\[
\Pi_m^{\seth}|_{\mathcal B(\Delta)}
=
\Pi_m^{\portsh}|_{\mathcal B(\Delta)}
=
\Pi_m^{\mathrm{LP}}|_{\mathcal B(\Delta)}.
\]
The assertion is uniform over each fixed choice of the communication radius
and the identifier-uniqueness radius of the locally polynomial verifier.
\end{lemma}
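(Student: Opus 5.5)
The plan is to prove the cycle of inclusions
\[
\Sigma_m^{\seth}|_{\mathcal B(\Delta)}\subseteq\Sigma_m^{\portsh}|_{\mathcal B(\Delta)}\subseteq\Sigma_m^{\mathrm{LP}}|_{\mathcal B(\Delta)}\subseteq\Sigma_m^{\seth}|_{\mathcal B(\Delta)},
\]
and the same cycle for the $\Pi$ classes. The first inclusion is the information ordering $\seth\preceq\portsh$.

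\textbf{Step 1: $\portsh\subseteq\mathrm{LP}$ (second inclusion).} I will simulate a \portsh protocol by a locally polynomial one with the same quantifier prefix, as follows.
\begin{itemize}
\item Each node orders its at most $\Delta$ neighbors by their locally unique identifiers, and this ordering is used as its port numbering. This is legitimate because the \portsh protocol is correct for \emph{every} port numbering.
\item Constant-size certificates lie trivially within the polynomial certificate budget.
\item One communication round suffices to read the neighbors' certificate sequences together with their identifiers.
\item Correctness for every admissible identifier assignment follows from correctness for every port numbering.
\end{itemize}

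\textbf{Step 2: $\mathrm{LP}\subseteq\seth$ (third inclusion, the main construction).} Fix a locally polynomial protocol with communication radius $r$ and identifier-uniqueness radius $r'$. I first invoke the bounded-degree normalization of~\cite{reiter2024local}: over a finite input alphabet on $\mathcal B(\Delta)$, the relevant identifier information and all certificates can be taken of constant size, so the verifier is a finite table on radius-$r$ labelled balls. I then replace identifiers by a certified coloring of the power graph $G^{R}$, where $R=\max(2r,r')$, using $\Delta^{R}+1$ colors. Colors are distinct within distance $R$, so they are admissible locally unique identifiers, and the LP protocol must be correct for them. Finally, as in \Cref{th:set-horizon-collapse}, an existential round lets each node receive its whole radius-$r$ ball as a colored labelled graph (a constant-size object). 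Each node checks that this ball agrees with the balls claimed by its neighbors. Since the colors are distinct within distance $R$, a \seth node can match neighbors by color, so lost multiplicities are recovered; consistency then forces the claimed balls to be the true ones, and the node runs the LP verifier on its ball.

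\textbf{Step 3: verifying the coloring (main obstacle).} A \seth node cannot count, so it cannot directly check that two of its neighbors do not share a color. I will use an \amos-like universal gadget. Each node advertises its color and, through the ball certificates, the colors of its relevant neighbors up to radius $R$. A universal round assigns bits $d$. A node $v$ rejects if, for some color $\gamma$, it sees two different $d$-values among the claimed vertices of color $\gamma$ in its relevant neighborhood. On a proper coloring no such collision exists, so honest provers never lose. On an improper coloring, the disprover splits the two same-colored vertices by giving them different bits, and some node rejects.

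This check needs no prover response, so it can be merged into any universal round of the LP prefix that comes after the coloring round. Soundness is preserved because it only adds rejections triggered by the disprover. The accounting then determines the thresholds:
\begin{itemize}
\item the coloring needs an existential round preceding the LP certificates;
\item its validation needs a later universal round;
\item the ball reconstruction needs a final existential round.
\end{itemize}
For $\Sigma_m$ the coloring merges into round $1$, $d$ into round $2$, and the ball certificates into the last existential round, which gives $m\geq3$. For $\Pi_m$ an extra initial shift is needed, since the first existential round is round $2$; this gives $m\geq4$.

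The delicate point I expect is the ordering. The identifiers must be fixed before the first LP certificate, and the coloring and ball claims must be checked \emph{after} all LP labels are chosen. The ball certificates also encode the LP labels, which are chosen in later rounds. To handle this, I would split the reconstruction: the coloring and the graph structure are given in the first existential round, and the labels of the ball are supplied in the final existential round. Dummy rounds pad the prefix where needed, and the construction is uniform in $r,r'$.
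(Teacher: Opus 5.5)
Your Step~1 and your treatment of $\Sigma_m$ for odd $m$ match the paper, but the $\Pi_m$ case has a genuine gap. You require the identifiers to be fixed before the first LP certificate. Your Step~3 test also only lets the \emph{disprover} refute a bad coloring, so the coloring must be chosen existentially. In a $\Pi_m$ protocol, however, $\kappa_1$ is universal and sits in round~$1$. Putting $\chi$ into round~$2$ replaces $\forall\chi\,\forall\kappa_1\,\exists\kappa_2\cdots$ by $\forall\kappa_1\,\exists\chi\,\exists\kappa_2\cdots$, and this swap is unsound in general. An LP disprover's winning certificate may refer to identifiers (say, a challenge that only bites when $\kappa_1(v)$ equals the identifier of $v$). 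A prover who picks $\chi$ afterwards can then choose a proper coloring that defuses it, so no-instances are accepted. Reordering by prepending rounds instead raises the level to $\Pi_{m+2}$.

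The missing idea is to hand $\chi$ to the \emph{universal} player in round~$1$, together with $\kappa_1$, and to give the prover a way to neutralize an improper $\chi$. In round~$2$ the prover fixes a globally constant bit $b$:
\begin{itemize}
\item If $b=1$, it claims $\chi$ is proper. The challenge $\rho$ (your $d$) is added in round~$3$, and the unfoldings in the last existential round test $\chi$ and simulate $M$.
\item If $b=0$, it certifies a color collision by a role-labelled path $0,1,\ldots,d$, with $d\le 2R$, joining two equally colored vertices. Requiring that some role-$0$ vertex exists is exactly \alos, whose $\exists\forall\exists$ protocol is merged into rounds $2$--$4$.
\end{itemize}
This is the real source of the threshold $m\geq4$; your ``extra initial shift'' does not yield a level-preserving simulation.

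Two further points need repair.

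First, you put the ball labels into the last existential round. But for even $m$ in the $\Sigma$ case and odd $m$ in the $\Pi$ case, the last LP certificate $\kappa_m$ is universal and comes after every existential round. A \seth node sees $\kappa_m$ only on its immediate neighbors, not on its radius-$r$ ball, and dummy padding does not help. The paper applies Proposition~38 of \cite{reiter2024local} to eliminate the trailing universal assignment on bounded-degree graphs, reducing to level $m-1$ and then padding. Alternatively, you would have to argue that each node may enumerate all $\kappa_m$-values on its reconstructed ball.

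Second, your claim that neighbor consistency forces the claimed balls to be true presupposes that $\chi$ is already proper. Yet your coloring test reads those same claimed balls, so the soundness of that test is circular as written. The paper avoids this by certifying recursively defined set unfoldings $U_j(v)=(U_0(v),\{U_{j-1}(u):u\in N(v)\})$. Their correctness follows from the local recurrences independently of $\chi$, so a collision at distance at most $2R$ is genuinely visible at a midpoint of a shortest path.
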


\begin{proof}
A \seth verifier is simulated by a \portsh verifier simply by discarding
port order. A constant-time locally polynomial machine can in turn simulate a
\portsh verifier: it orders the neighbors according to the available LOCAL
interface (or, equivalently, their locally distinct identifiers). This is
legitimate because the \portsh protocol must recognize the same language
for every local port numbering. Its certificates are constant-size and hence
satisfy the local polynomial bound in \cite{reiter2024local}. Thus only the inclusion from
$\mathrm{LP}$ into \seth requires proof.

\paragraph*{Coloring and locally certified views.}
Let $M$ be a locally polynomial verifier for a language $L$ on
$\mathcal B(\Delta)$. Fix its communication time, identifier-uniqueness
radius, certificate-bound radius, and finite input alphabet. We choose a
constant $R\geq1$ large enough that a correctly reconstructed radius-$R$
view contains everything needed to simulate $M$ and that uniqueness of
identifiers within distance $2R$ implies its identifier assumption.

An auxiliary labeling $\chi$ will serve as a synthetic identifier
assignment. We require it to be a proper coloring of $G^{2R}$, with colors
from the constant-size palette $[C_{\Delta,R}]$, where
\[
C_{\Delta,R}
=1+\Delta\sum_{j=0}^{2R-1}(\Delta-1)^j.
\]
Indeed, the number of vertices at distance at most $2R$ from any vertex is
at most $C_{\Delta,R}$; greedily coloring $G^{2R}$ gives the bound.
For example, if $\Delta=2$, this is $4R+1$. The palette is fixed
independently of the number of vertices.

We record how a radius-$R$ LOCAL view can be simulated with radius-one
\seth checks. Let $\kappa_1,\ldots,\kappa_m$ be the original certificate
assignments and let $\rho$ be an additional bit labeling when needed.
An existential certificate, issued \emph{after} these assignments, supplies
finite descriptions $U_j(v)$, for $0\leq j\leq R$, of the labeled
radius-$j$ unfoldings at each node. The base description contains the
node's own input, color $\chi(v)$, original certificates, and $\rho(v)$.
For definiteness, one may choose the recursively defined descriptions
\[
U_0(v)=(\chi(v),x(v),\kappa_1(v),\ldots,\kappa_m(v),\rho(v)),
\qquad
U_j(v)=\left(U_0(v),\{U_{j-1}(u):u\in N(v)\}\right).
\]
Each node checks the base description against its own labels and checks
every recurrence using the set of neighboring descriptions.

We now explain why these descriptions contain
enough information to reconstruct the local
topology required by $M$.

Fix a vertex $v$ and suppose that $\chi$
is a proper coloring of $G^{2R}$.
Since any two vertices in $B_R(v)$ are at
distance at most $2R$, distinct vertices
in this ball have distinct colors.

The recursive description $U_R(v)$
contains the labeled unfoldings along all
walks of length at most $R$ starting at $v$.
Consequently, every vertex in $B_R(v)$
appears in this description.

Moreover, two occurrences correspond to
the same vertex if and only if they carry
the same $\chi$-color. Thus the colors
allow us to identify repeated occurrences
of vertices reached through different walks.

For every vertex $u\in B_{R-1}(v)$,
the description also contains $U_1(u)$,
which records the colors and labels of all
neighbors of $u$. Therefore, we can
recover all adjacencies between vertices
of $B_{R-1}(v)$.

It follows that $U_R(v)$ determines the
labeled induced subgraph $G[B_{R-1}(v)]$,
with $\chi$ serving as an injective
identifier assignment within this ball.

Choosing $R$ strictly larger than the
communication radius of $M$, and large
enough to satisfy its identifier-uniqueness
requirements, ensures that the reconstructed
view contains all the information needed
to simulate $M$.

If a local ordering of neighbors is needed,
we use the order induced by their distinct
$\chi$-colors.

Because $\Delta$, $R$, and all label alphabets are fixed, these
unfoldings have bounded size. More importantly, if \emph{all} local
recurrence checks accept, induction on $j$ proves that every claimed
unfolding is correct. This is a global conjunction of local checks; we
do not assume that one vertex can independently inspect a distant
vertex's true label. Thus a final existential assignment can provide
and validate the necessary higher-radius information without adding
a quantifier round.

For later use, a universal challenge $\rho:V\to\{0,1\}$ can test $\chi$.
At a node $v$, inspect the certified radius-$R$ unfolding and reject
if it contains two vertex occurrences of the same $\chi$-color carrying
different $\rho$-bits. If $\chi$ is proper on $G^{2R}$, this never
happens: any two vertices in $B_R(v)$ are at distance at most $2R$.
Conversely, if distinct vertices $u,w$ at distance at most $2R$ have
the same color, choose a midpoint $v$ on a shortest $u$--$w$ path
so that $u,w\in B_R(v)$. Assign $\rho(u)=0$ and $\rho(z)=1$ for all
$z\ne u$. The correct unfolding at $v$ now reveals a collision.
An existentially supplied false unfolding cannot circumvent this
challenge, because some recurrence or base check would reject.

\paragraph*{Protocols beginning existentially.}
First let $m\geq3$ be odd, so the $m$-th assignment is existential.
In the first round the prover chooses $\chi$ in addition to the
first certificate $\kappa_1$ of $M$. In the second round the
adversary chooses $\rho$ together with $\kappa_2$. The remaining
rounds carry the respective original certificates. In the final
existential round, the prover also supplies the unfoldings $U_j$.
The one-hop verifier checks all unfolding recurrences, tests the
$\rho$-challenge on $\chi$, and simulates the local acceptance
predicate of $M$ with the reconstructed view.

If $\chi$ is valid, every challenge passes and the simulation
has exactly the truth value of the original locally polynomial
protocol for that admissible identifier assignment. If $\chi$ is
invalid, the second player can choose a challenge exposing a
collision; no later existential reply can make all vertices accept.
The quantifier pattern is unchanged, and all certificates are bounded.
Therefore
$\Sigma_m^{\mathrm{LP}}|_{\mathcal B(\Delta)}
\subseteq\Sigma_m^{\seth}|_{\mathcal B(\Delta)}$
for odd $m\geq3$.

For even $m\geq4$, we use the bounded-degree elimination of a
trailing universal certificate (Proposition~38 of
\cite{reiter2024local}):
$\Sigma_m^{\mathrm{LP}}|_{\mathcal B(\Delta)}
=\Sigma_{m-1}^{\mathrm{LP}}|_{\mathcal B(\Delta)}$.
Apply the preceding construction at the odd level $m-1\geq3$
and append a dummy universal assignment. This establishes the
claimed $\Sigma_m$ inclusion for all $m\geq3$.

\paragraph*{Protocols beginning universally.}
Suppose first that $m\geq4$ is even, so the last certificate
assignment is existential. The first (universal) assignment now
contains both $\chi$ and $\kappa_1$. The second (existential)
assignment contains $\kappa_2$, a bit $b$, and an auxiliary witness.
All nodes check that $b$ has the same value at both ends of every
edge; connectedness makes $b$ global.

The meaning of the bit is the following.
If $b=1$, the prover claims that $\chi$ is valid. The third
(universal) round adds the challenge $\rho$ to $\kappa_3$.
The verifier tests the coloring exactly as above and, using
unfoldings supplied in the last existential round, simulates $M$.
An invalid coloring is therefore refuted by a suitable $\rho$.

If $b=0$, the prover claims that $\chi$ is invalid. Its second-round
witness consists of a number $d\in[2R]$, a color $a$, and
node roles from $\{\bot,0,\ldots,d\}$ (the values $d,a$ are
required to be uniform). Every vertex with role $i<d$ checks that
it has a neighbor with role $i+1$; a vertex with role $0$ or $d$
checks that its $\chi$-color is $a$. Because each vertex has only
one role, the existence of a role-$0$ vertex forces a path with
$d$ distinct vertices after its start, whose two distinct endpoints
have the same $\chi$-color. Thus a locally valid witness with a
role-$0$ vertex certifies a violation of proper coloring of $G^{2R}$.
Conversely, a shortest path between two equally colored vertices
at distance $d\leq2R$ provides such a witness.

It remains to require \emph{at least one} role-$0$ vertex. This is the
language $\alos$, which has a three-round \seth protocol with
quantifier sequence $\exists\,\forall\,\exists$
(\Cref{th:alos}). Its rounds are merged with rounds two, three,
and four of the present protocol. The one-hop verifier enforces
the path constraints and the $\alos$ protocol in the $b=0$ branch;
when $b=0$ it ignores $M$. The auxiliary challenge and response can
be padded with dummy fields in any subsequent rounds.

If $\chi$ is valid, choosing $b=0$ cannot succeed: every candidate
path either fails a local constraint or has no role-$0$ vertex,
in which case the third-round adversary defeats the $\alos$ test.
The prover must choose $b=1$, where the correct simulation of $M$
determines acceptance. If $\chi$ is invalid, the prover can choose
$b=0$ and certify an actual color collision, so the invalid
identifier assignment does not affect membership. Therefore the
universal first-round choice ranges, effectively, only over valid
identifiers. In the $b=1$ branch, the last existential assignment
supplies and validates the necessary radius-$R$ unfoldings, exactly
as in the $\Sigma$ case. The original $m$ quantifier rounds suffice.

For odd $m\geq5$, Proposition~38 in \cite{reiter2024local}  gives
$\Pi_m^{\mathrm{LP}}|_{\mathcal B(\Delta)}
=\Pi_{m-1}^{\mathrm{LP}}|_{\mathcal B(\Delta)}$.
The even-level construction applies to $m-1\geq4$ and can be
padded by one dummy universal round. This completes the $\Pi_m$
inclusion and the proof of the lemma.
\end{proof}

The restriction $m\geq3$ in the $\Sigma$ statement is deliberate:
our radius-one reconstruction uses a final existential assignment,
which is not available in the two-round $\exists\forall$ pattern.
The separation argument below does not require an equivalence at
that level. Likewise, the lower levels of the $\Pi$ hierarchy are
not needed here. Notice also that the horizon $R$ is a fixed
constant depending on the simulated verifier, not a common radius
imposed on all languages.

\thmHstrict*

\begin{proof}

We prove the stronger statement that, for every \(k\geq0\),
there exists a language
\(L_k\in\hier{\seth}_{k+1}\setminus\hier{\portsh}_k\).
This implies the theorem, since
\(\hier{\seth}_j\subseteq\hier{\multiseth}_j
\subseteq\hier{\portsh}_j\)
for every \(j\geq0\).

We first consider \(k\geq2\).
Fix \(\mathcal B=\mathcal B(4)\), the class of connected
graphs of maximum degree at most four, and restrict all
the classes below to \(\mathcal B\).

By Theorem~36 of~\cite{reiter2024local}, for every even
integer \(\ell\geq2\),
\[
    \Sigma_{\ell-1}^{\mathrm{LP}}
    \subsetneq
    \Pi_\ell^{\mathrm{LP}}
    \subsetneq
    \Sigma_{\ell+1}^{\mathrm{LP}}.
\]
These separations also hold on our class of connected
graphs: the graph encodings of rectangular pictures
used to establish them are connected and have maximum
degree four.

By Proposition~38 of~\cite{reiter2024local},
the final universal quantifier can be eliminated on
bounded-degree graphs. Consequently, for even \(k\geq2\),
we have
\(\Sigma_k^{\mathrm{LP}}=\Sigma_{k-1}^{\mathrm{LP}}
\subseteq\Pi_k^{\mathrm{LP}}\),
while for odd \(k\geq3\),
\(\Pi_k^{\mathrm{LP}}=\Pi_{k-1}^{\mathrm{LP}}
\subseteq\Sigma_k^{\mathrm{LP}}\).
The inclusions follow by adding a dummy initial
quantifier.

Since a \portsh verifier can be simulated in the
locally polynomial model, we obtain
\(\hier{\portsh}_k\subseteq\Pi_k^{\mathrm{LP}}\)
when \(k\) is even, and
\(\hier{\portsh}_k\subseteq\Sigma_k^{\mathrm{LP}}\)
when \(k\) is odd.

Suppose first that \(k\geq2\) is even. By Theorem~36,
there exists a language
\(L\in\Sigma_{k+1}^{\mathrm{LP}}
\setminus\Pi_k^{\mathrm{LP}}\).
By \Cref{lem:ports-in-set-bounded}, the restriction of
\(L\) to \(\mathcal B\) coincides with the restriction
of some language \(L_k\in\Sigma_{k+1}^{\seth}\).

If \(L_k\) belonged to \(\hier{\portsh}_k\), then its
restriction to \(\mathcal B\) would belong to
\(\Pi_k^{\mathrm{LP}}\), a contradiction. Hence
\(L_k\in\hier{\seth}_{k+1}
\setminus\hier{\portsh}_k\).

Now suppose that \(k\geq3\) is odd.
Theorem~36 gives a language
\(L\in\Pi_{k+1}^{\mathrm{LP}}
\setminus\Sigma_k^{\mathrm{LP}}\).
Since \(k+1\geq4\), \Cref{lem:ports-in-set-bounded}
provides a language \(L_k\in\Pi_{k+1}^{\seth}\)
whose restriction to \(\mathcal B\) coincides
with that of \(L\). The same argument gives
\(L_k\in\hier{\seth}_{k+1}
\setminus\hier{\portsh}_k\).

It remains to consider \(k=0\) and \(k=1\).

For \(k=0\), consider the language of bipartite graphs.
It belongs to \(\hier{\seth}_1\), since the prover can
assign a proper \(2\)-coloring. On the other hand,
a \portsh verifier without certificates cannot
distinguish \(C_3\) from \(C_4\): in both graphs every
node has degree two and receives the same trivial
input. Thus
\(\textnormal{Bipartiteness}\in
\hier{\seth}_1\setminus\hier{\portsh}_0\).

For \(k=1\), consider \(\amos\).
By \Cref{th:amos}, we have
\(\amos\in\hier{\seth}_2\).
We show that \(\amos\notin\hier{\portsh}_1\).

First, \(\amos\notin\Sigma_1^{\portsh}\).
Consider a cycle \(C_n\), with \(n\geq4\), containing
exactly one marked vertex. Suppose that an existential
one-round \portsh protocol accepts this instance.
Fix a port numbering and an accepting certificate
assignment.

Take the double cover \(C_{2n}\) of \(C_n\), lifting
the input marks, port numbering, and certificates.
The covering preserves every local \portsh view,
so every vertex of \(C_{2n}\) accepts. However,
the lifted instance has two marked vertices and
does not belong to \(\amos\), a contradiction.

Second, observe that
\(\Pi_1^{\portsh}=\hier{\portsh}_0\).
Indeed, a universally quantified certificate assignment
can be eliminated locally. Each vertex enumerates all
possible certificates for itself and its neighbors,
indexed by its local ports, and accepts if the original
verifier accepts for every such assignment.
Every local assignment extends to a global one, so
this test preserves the universal acceptance condition.
Since local computation is unrestricted, the test can
be performed without certificates.

Finally, \(\amos\notin\hier{\portsh}_0\).
On cycles, a verifier without certificates sees only
its own input and its two ports. An accepting cycle
with one marked vertex therefore forces acceptance
of both possible local input values, and the same
verifier accepts a cycle with two marked vertices.
Consequently,
\(\amos\in\hier{\seth}_2\setminus\hier{\portsh}_1\).

We have thus established, for every \(k\geq0\),
the existence of a language
\(L_k\in\hier{\seth}_{k+1}\setminus\hier{\portsh}_k\).
By the level-wise inclusions between our three
models, this proves that every consecutive inclusion
is strict in each hierarchy.
\end{proof}

\section{Explicit languages outside the \(\portsh\) hierarchy}
\label{sec:sizeout}

In this section, we give explicit examples of languages outside the
\portsh hierarchy. We identify two limitations of constant-size
alternating certification in this model. The first arises on highly
symmetric graphs of unbounded degree, even with a finite input alphabet;
the second arises from unbounded local inputs, even on bounded-degree
graphs.

\subsection{Labeled complete bipartite graphs}
\label{sec:knnout}

\newcommand\compbknn[1]{\mathcal{K}_{#1}}
Let \(L\) be a distributed language over a fixed finite input alphabet
\(Q\). We associate with \(L\) the following two-party communication
problem, defined for every \(n,m\geq 1\):
\[
\compbknn{L}=\bigl\{(u,v)\in Q^n\times Q^m:
(K_{n,m},x_{u,v})\in L\bigr\}.
\]
Here Alice receives \(u=(u_1,\ldots,u_n)\), the inputs of the \(n\)
vertices in one bipartition class, and Bob receives
\(v=(v_1,\ldots,v_m)\), the inputs of the \(m\) vertices in the other.
Each player knows the size of their own input but not necessarily that
of the other player. Communication complexity is measured uniformly
over all \(n,m\).

We show that an arbitrary fixed number of alternating rounds with
constant-size certificates in the \portsh model yields a uniformly
bounded deterministic communication protocol on these graphs. We use
the fact that nondeterministic communication protocols of constant
cost can be replaced by deterministic protocols of constant cost
(the equality \(P_0=NP_0\); see~\cite{HatamiHatami2024}).

\begin{lemma}\label{lem:boundedcc}
Let \(L\) be a distributed language over a fixed finite input alphabet
\(Q\). If \(L\in\hier{\portsh}\), then \(\compbknn{L}\) has uniformly
bounded deterministic communication complexity. That is, there exists
a constant \(B\), independent of \(n,m\), such that Alice and Bob can
decide membership in \(\compbknn{L}\) by exchanging at most \(B\) bits.
\end{lemma}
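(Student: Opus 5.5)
The proof goes by induction on the quantifier prefix. At every stage the current predicate is a union of a bounded number of combinatorial rectangles, and the nondeterministic protocol this gives is turned back into a deterministic one using \(P_0=NP_0\).

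\textbf{Setup.} Fix a protocol for \(L\) with verifier \(A\), \(k\) rounds, quantifiers \(\quantif_1,\ldots,\quantif_k\), and finite certificate alphabets. Membership is independent of the port numbering, so we may fix a canonical one on \(K_{n,m}\). Every Alice-side vertex numbers Bob's vertices by their index \(1,\ldots,m\), and every Bob-side vertex numbers Alice's vertices by \(1,\ldots,n\). Each labeling \(\ell_j\) splits into an Alice part \(\ell_j^A\in\Gamma_j^n\) and a Bob part \(\ell_j^B\in\Gamma_j^m\). For \(0\le j\le k\), let \(P_j\) be the predicate on \((u,\ell^A_{\le j})\) versus \((v,\ell^B_{\le j})\) stating that the remaining game from round \(j+1\) on is won by the prover. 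Then \(P_0\) is exactly \(\compbknn{L}\). I will show, by downward induction on \(j\), that each \(P_j\) has deterministic cost bounded by a constant \(B_j\) independent of \(n,m\). Here Alice knows her inputs and her side's labels, and Bob knows his.

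\textbf{Base case \(j=k\).} Under the canonical numbering, an Alice-side vertex \(a\) accepts iff \(A\bigl(u_a,\overline\ell(a),w_B\bigr)=1\). Here \(w_B\) is the word of Bob's full certificate sequences in index order, which Bob knows entirely. Acceptance of all Alice-side vertices therefore depends only on the set \(S_A\) of own types \((u_a,\overline\ell(a))\) occurring on Alice's side, together with \(w_B\). Alice sends \(S_A\), a subset of a fixed finite set, so \(O(1)\) bits suffice. Bob then evaluates the check for Alice's side, sends \(S_B\), and Alice checks Bob's side symmetrically. Each party announces its verdict, so \(B_k\) is constant.

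\textbf{Inductive step.} Suppose \(P_j\) has a deterministic protocol of cost \(B_j\). Then \(P_j\) is a union of at most \(2^{B_j}\) rectangles \(X_t\times Y_t\) over the pairs of private data, one rectangle for each accepting transcript \(t\). If \(\quantif_j=\exists\), then \(P_{j-1}\) holds iff there exist \(\ell_j^A,\ell_j^B\) with \(P_j\). Projecting out the private coordinates \(\ell_j^A\) and \(\ell_j^B\) maps each rectangle to a rectangle. Hence \(P_{j-1}\) is covered by at most \(2^{B_j}\) rectangles, which gives a nondeterministic protocol of cost \(B_j\). This step is legitimate because a guess and its verification split into an Alice part and a Bob part. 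If \(\quantif_j=\forall\), I apply the same argument to the complement \(\neg P_{j-1}=\exists\ell_j\,\neg P_j\); deterministic protocols are closed under complement, so \(\neg P_j\) also has cost \(B_j\).

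\textbf{Closing the step.} In both cases we obtain a nondeterministic protocol, or a co-nondeterministic one, of constant cost, uniformly in \(n,m\). By \(P_0=NP_0\) from~\cite{HatamiHatami2024} it converts to a deterministic protocol of constant cost \(B_{j-1}\). Taking complements again handles the universal case.

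\textbf{Main obstacle.} The delicate point is that \(P_0=NP_0\) must be applied to a family of problems indexed by \((n,m)\) and by the sizes of the inputs and labelings. The constant it produces must depend only on the nondeterministic cost, not on these parameters. I would state the induction for the whole family at once and check that the cited theorem gives this uniformity. A second point to verify is the choice of a port numbering that depends on \(n\) and \(m\) but not on the labels. This is allowed because the protocol must be correct for every port numbering.
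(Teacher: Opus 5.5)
Your proposal is correct and follows essentially the same route as the paper. You fix the canonical port numbering on $K_{n,m}$, settle the fully labeled base case by exchanging the sets of own types, and remove quantifiers one at a time. Your projection of accepting-transcript rectangles is exactly the paper's nondeterministic protocol that guesses a transcript and lets each side check local consistency; you then apply $P_0=NP_0$, with complementation for universal rounds.

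The uniformity worry you flag is harmless. A cover by $c$ rectangles $X_t\times Y_t$ directly gives a deterministic protocol of cost $c+1$: Alice announces the set of indices $t$ whose row set contains her data, and Bob answers. Hence $B_{j-1}\le 2^{B_j}+1$, independently of $n$ and $m$.
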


\begin{proof}
Fix a \portsh protocol for \(L\), with a finite number of alternating
certificate assignments, each over a finite alphabet. Membership is
independent of the choice of local port numbering, so on \(K_{n,m}\)
we may fix a convenient one: at every vertex of Alice's class, port
\(j\) leads to Bob's vertex \(j\), and at every vertex of Bob's class,
port \(i\) leads to Alice's vertex \(i\).

We argue by induction on the number of certificate quantifiers that
remain to be evaluated. For precision, we allow certificate fields
whose quantifiers have already been removed to be supplied as
additional, finite-valued distributed inputs to the communication
problem, while retaining their original visibility at neighboring
ports. This strengthened induction statement is useful because a
certificate visible to neighboring vertices cannot simply be treated
as an ordinary local input in the definition of \portsh. All the
additional alphabets remain finite. Equivalently, such information
could be copied into visible certificates and checked against the
corresponding local inputs with an existential certificate assignment.

For the base case, all certificate fields are fixed. Each player knows
the inputs and the certificate fields of the vertices in their own
class. Let \(T_A\) be the set of local types occurring in Alice's
class, where a type consists of a vertex input and all its own
certificate fields; define \(T_B\) analogously. Both sets range over
a fixed finite universe. Alice sends \(T_A\) to Bob, and Bob sends
\(T_B\) to Alice, using a constant number of bits.

Because of the chosen port numbering, all vertices of Alice's class
see the same port-ordered tuple of certificate fields from Bob's
class. Bob knows that tuple, as well as its length \(m\). He can
therefore evaluate the local verifier on every type in \(T_A\) and
determine whether all vertices of Alice's class accept. Symmetrically,
Alice can determine whether all vertices of Bob's class accept. They
exchange the resulting two bits and decide whether every vertex
accepts. This is a deterministic protocol with uniformly bounded
communication. Notice that the verifier does not need to read the
original inputs of neighboring vertices: it reads only their visible
certificate fields, exactly as specified by \portsh.

For the induction step, first suppose that the next certificate
assignment is existential. Write the residual condition as
\[
R(u,v)\iff \exists y\,R'((u,y_A),(v,y_B)),
\]
where \(y_A\) and \(y_B\) are the restrictions of the finite-alphabet
labeling \(y\) to Alice's and Bob's vertices, respectively. The
induction hypothesis provides a deterministic communication protocol
\(P'\), of uniformly bounded cost, for \(R'\), with the new
certificate field still visible at neighboring ports.

Alice and Bob now use the following nondeterministic protocol for
\(R\). First, they guess an accepting communication transcript
\(\mu\) of \(P'\). There are only finitely many possible transcripts,
since \(P'\) communicates at most a constant number of bits. Alice
checks, using only her own input and \(\mu\), whether there exists
some assignment \(y_A\) for which all messages she would send in
\(P'\) are consistent with \(\mu\). Bob independently performs the
analogous check for some \(y_B\). They accept if both checks succeed.

If they accept, the two local assignments together define a global
labeling \(y\), and the complete transcript \(\mu\) is consistent
with the deterministic protocol \(P'\) on that input. Since the
transcript is accepting, \(R'\) holds, hence \(R\) holds. Conversely,
if \(R\) holds, take a witnessing labeling \(y\) and the accepting
transcript of \(P'\) on that labeling. Both local consistency checks
then succeed. Thus this nondeterministic protocol decides \(R\) with
constant cost. By \(P_0=NP_0\), \(R\) also has a deterministic
constant-cost protocol.

If the next certificate assignment is universal, use
\[
\forall y\,R'(u,v,y)
\quad\Longleftrightarrow\quad
\neg\exists y\,\neg R'(u,v,y).
\]
By induction, \(R'\) has a bounded deterministic protocol;
complementation preserves its cost. The existential case therefore
applies to \(\neg R'\), and a final complementation gives a bounded
deterministic protocol for the universal case as well.

Applying the induction through the entire fixed quantifier prefix
proves the lemma.
\end{proof}

\begin{corollary}\label{cor:bipartite-outside-ports}
Let \(L\) be a graph language such that, for some constant \(n_0\geq1\),
\[
K_{n,m}\in L\quad\Longleftrightarrow\quad n=m
\qquad\text{for all }n,m\geq n_0.
\]
Then \(L\notin\hier{\portsh}\). In particular, neither the language
of Hamiltonian graphs nor the language of graphs admitting a perfect
matching belongs to the \portsh hierarchy.
\end{corollary}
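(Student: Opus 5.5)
The plan is to argue by contradiction through \Cref{lem:boundedcc}. Suppose \(L\in\hier{\portsh}\). Since \(L\) is a graph language, we regard it as a distributed language over the singleton input alphabet \(Q=\{\bot\}\). The associated communication problem \(\compbknn{L}\) then depends only on the two sizes. Alice receives \(\bot^n\), so she effectively knows only \(n\); Bob effectively knows only \(m\). By \Cref{lem:boundedcc} there is a constant \(B\) such that, for all \(n,m\), a deterministic protocol exchanging at most \(B\) bits decides whether \(K_{n,m}\in L\).

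The next step restricts attention to \(n,m\ge n_0\). By hypothesis, the protocol then decides equality \(n=m\) for arbitrarily large integers. I would derive the contradiction with a fooling-set or rectangle argument. A deterministic protocol of cost at most \(B\) has at most \(2^B\) transcripts, and each transcript corresponds to a combinatorial rectangle \(S\times T\) of input pairs \((n,m)\). Pick \(2^B+1\) distinct integers \(n_0\le a_1<\dots<a_{2^B+1}\). By pigeonhole, two diagonal pairs \((a_i,a_i)\) and \((a_j,a_j)\) with \(i\ne j\) share a transcript. Their rectangle then also contains \((a_i,a_j)\), so the protocol accepts an off-diagonal pair. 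But \(K_{a_i,a_j}\notin L\), which is a contradiction. One subtlety deserves care here: in the lemma the parties know their own input lengths, which is exactly \(n\) and \(m\). The rectangle property still holds, since the protocol's messages are functions of each party's own input, namely the pair (size, labels).

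For the concrete applications, it suffices to verify the hypothesis with \(n_0=2\). The graph \(K_{n,m}\) has a perfect matching iff \(n=m\), because any matching pairs vertices across the two classes and must saturate both. It is Hamiltonian iff \(n=m\ge2\), because a Hamiltonian cycle in a bipartite graph alternates between the classes, which forces equal sizes, and for \(n=m\ge2\) the cycle \(a_1b_1a_2b_2\cdots a_nb_na_1\) works. Both languages therefore satisfy the hypothesis and lie outside \(\hier{\portsh}\).

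The main obstacle has already been absorbed by \Cref{lem:boundedcc}. The remaining delicate point is only to confirm that the lemma's communication model is a standard deterministic two-party model in which the rectangle property applies to inputs of varying lengths. I would state this explicitly: we view each party's input as an element of \(\bigcup_k Q^k\), so equality on \(\mathbb N\) embeds as a subproblem.
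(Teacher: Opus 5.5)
Your proposal is correct and matches the paper's proof essentially step for step. You apply \Cref{lem:boundedcc} with a singleton alphabet to get a uniformly bounded deterministic protocol deciding equality of sizes \(n,m\ge n_0\), and you refute it with the rectangle argument. Your pigeonhole over \(2^B+1\) diagonal pairs is the paper's observation that each accepting rectangle contains at most one diagonal pair. You then verify the hypothesis for perfect matching and Hamiltonicity; taking \(n_0=2\) for both, instead of the paper's \(n_0=1\) for perfect matching, is equally valid.
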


\begin{proof}
Suppose \(L\in\hier{\portsh}\). Since \(L\) is a graph language,
its input alphabet can be taken to be a singleton. By
\Cref{lem:boundedcc}, Alice and Bob could decide whether \(n=m\)
for all \(n,m\geq n_0\) using a uniformly bounded number of bits.

This is impossible. Indeed, a deterministic protocol exchanging at
most \(B\) bits has only finitely many accepting transcripts, and
each accepting transcript defines a combinatorial rectangle. A
rectangle contained in the equality relation cannot contain two
distinct diagonal pairs \((n,n)\) and \((m,m)\), since it would then
also contain the non-diagonal pair \((n,m)\). Thus equality on
arbitrarily large positive integers requires infinitely many
accepting rectangles and cannot have uniformly bounded deterministic
communication complexity.

Finally, \(K_{n,m}\) admits a perfect matching exactly when \(n=m\).
It admits a Hamiltonian cycle exactly when \(n=m\geq2\). The
corollary follows by taking \(n_0=1\) and \(n_0=2\), respectively.
\end{proof}

This obstruction is specific to the information available in our
constant-certificate port model. It does not automatically extend to
the locally polynomial hierarchy of~\cite{reiter2024local}, which
allows locally unique identifiers and more general certificate-size
bounds.

\subsection{Bounded-degree graphs with unbounded inputs}
\label{sec:fragile-inputs}

We now consider a different obstruction. Local verifiers can process
arbitrarily long inputs at their own vertices, but in \portsh they
cannot read neighboring inputs directly. For a fixed degree, the
information provided by constant-size certificates and their
port-ordered neighboring tuples ranges over a finite set.

Let \(L\) be a distributed language with inputs in \(\{0,1\}^*\).
Given an instance \(I=(G,x)\), a vertex \(v\in V(G)\) with input
\(s=x(v)\), and any \(s'\neq s\), define the \emph{perturbation}
\(\widetilde I=(G,\widetilde x_{v,s'})\) by
\[
\widetilde x_{v,s'}(u)=
\begin{cases}
x(u) & \text{if }u\neq v,\\
s' & \text{if }u=v.
\end{cases}
\]

For a fixed degree \(d>0\), we say that \(s\in\{0,1\}^*\) is a
\emph{fragile degree-\(d\) input value} of \(L\) if there exists a
yes-instance \(I=(G,x)\in L\) with a degree-\(d\) vertex \(v\)
satisfying \(x(v)=s\), such that every perturbation of \(I\) at \(v\)
produces a no-instance. We say that \(I\) \emph{witnesses} the
fragility of \(s\). Let \(\frag{d}{L}\) be the number, possibly
infinite, of fragile degree-\(d\) input values of \(L\).

\lemlangfrag*

\begin{proof}
Let \(L\in\hier{\portsh}\), and fix \(d>0\). There are a finite
number \(k\) of certificate quantifiers and a local verifier \(A\)
such that
\[
(G,x)\in L\iff
\quantif_1\ell_1\cdots\quantif_k\ell_k\;
\forall v\in V(G),\quad
A\bigl(x(v),\overline\ell(v),
\view{\portsh}{v}\bigr)=1,
\]
where \(\overline\ell=(\ell_1,\ldots,\ell_k)\) and all certificate
alphabets are finite.

Let \(C\) be the finite alphabet of complete certificate tuples
\(\overline\ell(v)\). At a degree-\(d\) vertex, the certificate
information seen by \(A\) belongs to the finite set
\[
\mathcal V_d=C\times C^d,
\]
consisting of its own certificate tuple and the ordered tuples of
neighboring certificates indexed by its local ports. In particular,
for each possible input value \(s\), the verifier determines a
Boolean function
\[
F_s:\mathcal V_d\longrightarrow\{0,1\}.
\]
Only finitely many such functions exist. If
\(\frag{d}{L}=\infty\), we can therefore choose distinct fragile
values \(s\neq s'\) for which \(F_s=F_{s'}\).

Let \(I=(G,x)\in L\) witness the fragility of \(s\) at a degree-\(d\)
vertex \(v\), and replace only \(x(v)=s\) by \(s'\). By fragility,
the resulting instance \(\widetilde I\) is not in \(L\). However,
for every possible assignment of all certificates, the local output
at \(v\) is unchanged because \(F_s=F_{s'}\), and the local output
at every other vertex is unchanged because those vertices do not
read \(x(v)\) directly. Hence the statement that all vertices accept
has exactly the same truth value on \(I\) and \(\widetilde I\) for
every certificate assignment. The entire alternating quantified
statement therefore has the same truth value on both instances,
contradicting \(I\in L\) and \(\widetilde I\notin L\).

Thus \(\frag{d}{L}\) is finite.
\end{proof}

We now consider the distributed language
\[
\size=\{(G=(V,E),x):\forall v\in V,\ x(v)=|V|\},
\]
where integers are represented by their binary encodings. This
language asks every vertex to hold the correct number of vertices of
the graph.

\thmSize*

\begin{proof}
We show that \(\frag{2}{\size}=\infty\). For each \(n\geq3\),
consider the cycle \(C_n\) with input \(n\) at every vertex. This
is a yes-instance of \(\size\). Every vertex has degree \(2\), and
changing the input at any one vertex to a different value makes the
instance a no-instance. Consequently, every \(n\geq3\), encoded in
binary, is a fragile degree-\(2\) input value. Hence
\(\frag{2}{\size}=\infty\), and \Cref{lem:langfrag} implies
\(\size\notin\hier{\portsh}\).
\end{proof}

\begin{remark}
The same argument applies to other size-dependent inputs. For
example, let \(f:\mathbb N\to\mathbb N\) be an unbounded computable
function, and define
\[
L_f=\bigl\{(G=(V,E),x):\forall v\in V,\ x(v)=f(|V|)\bigr\}.
\]
The values \(f(n)\), for \(n\geq3\), are infinite in number, and the
cycle argument shows that each distinct value is a fragile
degree-\(2\) input. Thus \(L_f\notin\hier{\portsh}\).

For languages over a fixed finite input alphabet, this particular
fragility argument does not apply. Indeed, several finite-alphabet
properties, including agreement on a fixed input value and counting
modulo a fixed constant, can be verified within the hierarchy.
\end{remark}

\section{Discussion}
\label{sec:discussion}

Our results leave several natural questions open.
While the \seth hierarchy is exactly characterized
by MSO, a logical characterization of the
\multiseth and \portsh hierarchies remains unknown.
We have established the strict inclusions
\[
\hier{\seth}
\subsetneq
\hier{\multiseth}
\subsetneq
\hier{\portsh},
\]
showing that both multiplicities and port numbering
strictly increase expressive power. Moreover,
the \portsh hierarchy strictly contains CMSO,
while the \multiseth hierarchy can express
properties that are not CMSO-definable.
Whether every CMSO-definable language belongs
to the \multiseth hierarchy remains open.
More generally, finding logical characterizations
of these two hierarchies is an important direction
for future research.

Our negative results provide some insight into their
limitations. In particular, we have shown that
Hamiltonicity and the existence of a perfect matching
cannot be recognized in the \portsh hierarchy, and
therefore neither in the weaker \multiseth hierarchy.
Since Hamiltonicity is expressible in $\mathrm{MSO}_2$,
this establishes that even port numbering and arbitrarily
many alternating rounds of constant-size certificates
do not suffice to capture all properties definable using
edge-set quantification. This raises the question of
which fragments of $\mathrm{MSO}_2$, beyond CMSO, can be
captured by these hierarchies. Our result concerning
\size\ provides a different limitation, showing that
unbounded input alphabets can obstruct verification
even on bounded-degree graphs.

A second direction concerns the internal structure of
the hierarchies. Although we establish that the
hierarchies have infinitely many distinct levels,
explicit separations between small levels remain open.
In particular, it would be interesting to determine the
exact level of the language $\leader$. We show that
$\leader$ belongs to the third level, and it is known
that it does not belong to the first
level~\cite{goos2011locally}. Proving that it does not
belong to the second level would provide a natural and
explicit separation between the second and third levels.

Another natural direction is to understand what happens
between constant-size and logarithmic-size certificates.
In the hierarchy with logarithmic-size certificates,
unique identifiers can be assigned and certified at
some level of the hierarchy. Once identifiers are
available, nodes can distinguish their neighbors, and
multiplicities can be recovered from sets of decorated
local types. Consequently, the distinction between the
\seth and \multiseth views essentially disappears in
that setting.

This raises the question of what happens for intermediate
certificate sizes, for instance in the regime
$o(\log n)$. Is there a threshold below which
multiplicities still provide additional expressive
power? Or can one already recover enough symmetry-breaking
information to collapse the different hierarchies?
Understanding this intermediate regime could help
clarify which aspects of their separation are genuinely
due to constant-size certificates, and which are
consequences of anonymity itself.

Finally, larger certificates allow us to approach
stronger logical formalisms. Certificates of size
$\mathcal{O}(n\log n)$ are sufficient to encode arbitrary
edge-set assignments and capture $\mathrm{MSO}_2$.
On the other hand, logarithmic-size certificates already
suffice to recognize Hamiltonicity
~\cite{feuilloley2021hierarchy}. It remains open whether
such certificates are sufficient to capture every
$\mathrm{MSO}_2$-definable graph language, or whether
some properties require larger certificates

\bibliography{biblio}

\end{document}